\documentclass[11pt]{article}

\usepackage[a4paper,margin=1in]{geometry}
\usepackage{amsmath,amssymb,amsthm,mathtools}
\usepackage{booktabs,array}
\usepackage{enumitem}
\usepackage{xcolor}
\usepackage{tikz}
\usetikzlibrary{backgrounds}
\usepackage[hidelinks]{hyperref}

\usepackage{thmtools}
\usepackage{thm-restate}

\declaretheorem[numberwithin=section]{theorem}
\declaretheorem[sibling=theorem]{lemma}
\declaretheorem[sibling=theorem]{proposition}

\declaretheorem[sibling=theorem]{fact}
\declaretheorem[sibling=theorem,style=definition]{definition}

\declaretheorem[sibling=theorem,style=definition]{example}

\newcommand{\clonefont}[1]{\mathsf{#1}}
\newcommand{\BF}{\clonefont{BF}}
\newcommand{\Rc}{\clonefont{R}}
\newcommand{\Mc}{\clonefont{M}}
\newcommand{\Sc}{\clonefont{S}}
\newcommand{\Dc}{\clonefont{D}}
\newcommand{\Lc}{\clonefont{L}}
\newcommand{\Ec}{\clonefont{E}}
\newcommand{\Vc}{\clonefont{V}}
\newcommand{\Nc}{\clonefont{N}}
\newcommand{\Ic}{\clonefont{I}}
\newcommand{\Cc}{\clonefont{C}}
\newcommand{\Kc}{\clonefont{K}}

\newcommand{\PL}{\mathsf{PL}}
\newcommand{\CIR}{\mathsf{CIR}}
\newcommand{\CNF}{\mathsf{CNF}}
\newcommand{\exCNF}{\exists\mathsf{CNF}}
\newcommand{\opt}{\mathsf{opt}}
\newcommand{\Ptime}{\mathsf{P}}
\newcommand{\NP}{\mathsf{NP}}
\newcommand{\RP}{\mathsf{RP}}
\newcommand{\zero}{\mathbf 0}
\newcommand{\one}{\mathbf 1}
\newcommand{\Ftwo}{\mathbb F_2}
\newcommand{\maj}{\operatorname{maj}}
\newcommand{\thr}{\mathrm{th}}
\newcommand{\CSP}{\mathsf{CSP}}
\newcommand{\Pol}{\operatorname{Pol}}
\newcommand{\Inv}{\operatorname{Inv}}
\newcommand{\size}[1]{\operatorname{size}(#1)}
\newcommand{\dep}{\operatorname{depth}}
\newcommand{\pwm}{\le_{\mathrm{pwm}}}

\title{Fitting and Learning Basis-Restricted Propositional Formulas}
\author{Balder ten Cate}

\begin{document}
\maketitle

\begin{abstract}
For a finite set $O$ of Boolean functions, we consider the class of
propositional formulas built using the functions in $O$ as connectives. 
We determine, for each possible choice of $O$,
the complexity of various fitting and learning problems.
These include: finding a formula that fits a given labeled sample, finding a
small one (an Occam algorithm), minimizing the number of misclassified
examples when the sample is not realizable (empirical risk minimization),
and several forms of PAC learning.  Our results apply both to formulas (represented as trees)
and to circuits. We also briefly discuss the status of
the same questions for other kinds of propositional fragments.
\end{abstract}

\section{Introduction}

Fix a finite set $O$ of Boolean functions and let $\PL_O$ be the class of
propositional formulas that can be built using the functions in $O$ as connectives.  Varying $O$ gives a family
of concept classes, and a natural
question is how the difficulty of the standard learning-theoretic tasks depends
on the choice of $O$. 
Several such classifications are either known or implied by known results, but they are scattered across the literature.
We put them side by
side, focusing on four problems.  
\begin{itemize}
\item \emph{Fitting.}  Given a labeled sample, decide whether some
$\PL_O$-formula agrees with all of it, and if so produce one.
\item \emph{Occam algorithms.}  If the sample is realizable, find a fitting
formula of near-minimal size.
\item \emph{Empirical risk minimization.}  When the sample need not be realizable,
find a $\PL_O$-formula minimizing the number of mistakes.
\item \emph{PAC learning.}  Is $\PL_O$ efficiently learnable in (variants of) the
PAC model?
\end{itemize}

The four problems are not independent.  Fitting is the special case of
empirical risk minimization in which the sample is realizable, so an ERM
algorithm is in particular a fitting algorithm.  An Occam algorithm is a fitting
algorithm that compresses, and compression yields a PAC learner, so hardness of
PAC learning rules out Occam algorithms as well.  

These questions can be studied also for $\CIR_O$, that is,
the family of circuits using functions from $O$ as gates. Note that $\PL_O$ and $\CIR_O$ have the
same expressive power, but may differ in succinctness. In addition, in 
Section~\ref{sec:constraints} we discuss other fragments, such as monotone CNF and
Horn CNF, that do not fall under the above basis-generated regime.

Our contributions are partly organizational: we give a uniform presentation and we provide 
written-out proofs for results that are only sketched in the literature. Our main novel 
contributions are an algorithm for constructing fitting formulas efficiently
based on a refinement of the classic Baker--Pixley construction, and a 
trichotomy theorem for empirical risk minimization, building on known results for 
hypergraph vertex-cover problems. We also fill a small
gap, showing that the PAC-learnability dichotomy in~\cite{Dalmau} holds not
only for circuits but also for formulas.

The picture that arises is as follows.

\paragraph{Fitting and Occam algorithms.}
The basic fitting problem turns out to be uniformly easy:

\begin{restatable}[Fitting]{theorem}{thmfitting}\label{thm:boolean-fitting-constructive}
Let $O$ be a fixed finite basis.  Then $\PL_O$ has a polynomial-time fitting
algorithm: given a labeled sample it decides whether the sample is realizable,
and if so returns a fitting $\PL_O$-formula of polynomial size.  The same
holds for $\CIR_O$.
\end{restatable}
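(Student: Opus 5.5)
The plan is to reduce fitting to a question about the clone $[O]$ generated by $O$ and then use Post's lattice. A sample $(E^+,E^-)$ over variables $x_1,\dots,x_n$ with $m$ examples can be encoded column-wise: each variable $x_i$ gives a column vector $c_i\in\{0,1\}^m$ (its values on the examples), and the labels give a target vector $t\in\{0,1\}^m$. A $\PL_O$-formula fits the sample iff, applied coordinatewise to the columns, it yields $t$. Thus the sample is realizable iff $t$ lies in the subalgebra of $(\{0,1\}^m; O)$ generated by $\{c_1,\dots,c_n\}$, i.e.\ iff $t$ lies in the $m$-ary relation $R$ obtained by closing the set of columns under $[O]$. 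Equivalently, by the Galois connection $\Pol$--$\Inv$, the sample is realizable iff every relation in $\Inv([O])$ containing all the columns also contains $t$. Realizability is also invariant under deleting duplicate examples, and is trivially false for contradictory samples.

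For deciding realizability I would use Post's classification. Each clone has a finite basis of invariant relations, and these relations have bounded arity. If $[O]$ contains a near-unanimity operation of arity $k+1$ (all clones except those contained in $\Lc$, $\Vc$, $\Ec$, $\Nc$ and the infinite chains \emph{below} the majority level, which contain threshold functions $\thr^{k+1}_k$), the Baker--Pixley theorem says $R$ is determined by its $k$-ary projections. So $t\in R$ iff for every set of $k$ coordinates (examples) the restriction of $t$ lies in the closure of the restricted columns. This is a polynomial number ($m^k$) of constant-size checks. The remaining clones (subclones of $\Lc$, $\Vc$, $\Ec$, $\Nc$, and the constants/projections) I would handle directly: for the linear clones realizability is solving a linear system over $\Ftwo$ with restrictions on parity of coefficients and constants; for $\Vc$/$\Ec$ clones one checks the canonical disjunction (resp.\ conjunction) of all admissible variables; for $\Nc$ and $\Ic$ one checks single literals/constants.

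The main obstacle is the constructive part: producing a polynomial-size fitting \emph{formula} (not merely a circuit) in the near-unanimity case. Baker--Pixley gives a recursive construction: if formulas $\varphi_j$ interpolate $t$ on all examples except the $j$-th, for $j=1,\dots,k+1$, then $u(\varphi_1,\dots,\varphi_{k+1})$ interpolates on all examples, where $u$ is the near-unanimity term. Naively this recursion has depth $m-k$ and branching $k+1$, which is exponential. My refinement will split the example set into $k+1$ roughly equal blocks and let $\varphi_j$ fit all examples outside block $j$. Then $u(\varphi_1,\dots,\varphi_{k+1})$ is correct on every example, since at most one argument can be wrong on it. This gives depth $O(\log m)$ with branching $k+1$ and constant-size base cases (where $\le k$ examples remain, handled by the projection checks with constant-size formulas). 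The total size is then $(k+1)^{O(\log_{(k+1)/k} m)} = m^{O(1)}$, polynomial because $k$ depends only on $O$. The near-unanimity term itself must be expressed as a fixed $\PL_O$-formula, which is a constant. For $\CIR_O$ the formula is already a circuit, so nothing further is needed. The non-near-unanimity cases produce linear-size formulas directly.
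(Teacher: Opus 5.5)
Your divide-and-conquer refinement of Baker--Pixley is exactly the paper's Proposition~\ref{prop:nu-interpolation}, and your handling of the subclones of $\Lc$, $\Vc$, $\Ec$, $\Nc$ matches the paper's cases~(2)--(4). The gap is in the case split. It is not true that every clone outside those four families has a near-unanimity operation. The eight ordinary separating clones $\Sc_0,\Sc_{02},\Sc_{01},\Sc_{00}$ and $\Sc_1,\Sc_{12},\Sc_{11},\Sc_{10}$ have none. These are the intersections of the chains $\Sc_{0*}^k$ and $\Sc_{1*}^k$, not members of them. A $d$-ary near-unanimity operation sends each of the $d$ unit vectors to $0$, and these have no common zero coordinate, so it is not $0$-separating; dually, it is not $1$-separating.

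Your proposal never treats these clones, and no bounded-width local test can decide fitting for them. For any $k$, take the $k+1$ points of $\{0,1\}^{k+1}$ with exactly one zero, all labeled $1$. Any $k$ of them share a coordinate equal to $1$, so every sub-sample of at most $k$ examples is fitted by a projection, which lies in every clone. Yet no function of $\Sc_1$ is $1$ on all of them, since its $1$-set must have a common $1$ coordinate. Relatedly, your remark that each clone has a finite basis of invariant relations of bounded arity fails for these clones: $\Sc_0$ is not $\Pol(\Gamma)$ for any finite $\Gamma$.

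What is missing is the paper's case~(5), which uses the separating structure directly instead of a near-unanimity term. On the $1$-separating side every $f\in\Cc$ satisfies $f\le x_i$ for some $i$. So a fit requires a variable column $\mathbf g_i$ that dominates the label column. Pick any such $i$, drop the rows where $x_i=0$ (all of them labeled $0$), and set $x_i=1$ in the others. The original sample is realizable in $\Cc$ iff this residual sample is realizable in $\Kc_\Cc\in\{\BF,\Rc_1,\Mc,\Mc\cap\Rc_1\}$: restrict a fit to $x_i=1$, and conversely note $x_i\wedge u\in\Cc$ whenever $u\in\Kc_\Cc$.

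These residual clones contain $\maj$, so your construction yields a residual formula of logarithmic depth. It is, however, over a basis of $\Kc_\Cc\not\subseteq\Cc$, so it must be translated back into $\PL_O$. Replace each gate $\gamma$ with a fixed $\PL_O$-formula for the guarded gate $x_i\wedge\gamma(\mathbf z)\in\Cc$, and each variable $y$ with one for $x_i\wedge y$. This maintains the invariant that every node computes $x_i\wedge u$, and the logarithmic depth keeps the constant-per-level blow-up polynomial. The $0$-separating clones are dual.

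A minor point elsewhere: in your size bound the branching factor is the number of leaves of the fixed formula for $\nu$, not $k+1$, since that formula may read an argument several times. This is still a constant, so your bound survives.
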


Here, by a \emph{labeled sample} we mean a finite list of \emph{labeled
examples} $(\mathbf a,b)$ with $\mathbf a\in\{0,1\}^n$ and $b\in\{0,1\}$; a
formula \emph{fits} the sample if it takes the value $b$ at $\mathbf a$ for
each of its examples, and the sample is \emph{realizable} for $\PL_O$ if some
$\PL_O$-formula fits it.
The tractability of testing existence
 was already observed in~\cite{Mayr}. We give a proof in Section~\ref{sec:fitting}, 
 which also shows
 how to construct fitting formulas efficiently, using a refinement of the
 Baker--Pixley theorem.

One may ask for a \emph{short} fitting
formula.  An
\emph{Occam algorithm} for $\PL_O$ is a polynomial-time fitting algorithm that,
on every realizable sample, returns a fitting formula of size at most
\(
  p\bigl(s_{\mathrm{opt}},\,n\bigr)\cdot m^{\beta}
\)
for some fixed polynomial $p$ and some fixed $\beta<1$,
where $m$ is the number of examples, $n$ is the number of Boolean variables,
and $s_{\mathrm{opt}}$ is the least size of a fitting $\PL_O$-formula.
An Occam algorithm compresses, as it returns something sublinear in $m$.
One may ask for more, namely an \emph{attribute-efficient} Occam
algorithm, whose size bound does not depend on $n$: a fitting formula of size at
most $p(s_{\mathrm{opt}})\cdot m^{\beta}$.

\begin{restatable}[Occam algorithms, from~\cite{Dalmau}]{theorem}{thmoccam}\label{thm:occam}
Let $O$ be a finite basis of Boolean functions.
\begin{enumerate}[label=(\alph*)]
\item If $O\preceq\{\wedge,\top,\bot\}$, $O\preceq\{\vee,\top,\bot\}$ or
$O\preceq\{\neg,\top,\bot\}$, then $\PL_O$ has an attribute-efficient Occam
algorithm.
\item If $\{\oplus^3\}\preceq O\preceq\{\oplus,\top,\bot\}$, then $\PL_O$ has an
Occam algorithm, but we do not know whether it has an attribute-efficient one.
\item Otherwise $\PL_O$ does not have an Occam algorithm, under the
cryptographic assumptions of Section~\ref{sec:pac}.
\end{enumerate}
The same holds for $\CIR_O$.
\end{restatable}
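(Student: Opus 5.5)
The plan is to go through the three cases of Post's lattice separately. The relation $O\preceq O'$ means every function in $O$ is definable from $O'$ (up to the clone generated). So $\PL_O$ is a subclass of $\PL_{O'}$ with at most polynomial blowup for $\CIR$, and also for $\PL$ when $O'$ is one of the small bases. The clone generated by $\{\wedge,\top,\bot\}$ consists of the constants and the monotone conjunctions of variables. For (a) I will use the classical set-cover Occam algorithm for conjunctions, adapted to the subclone. Concretely, I first decide by the fitting algorithm of Theorem~\ref{thm:boolean-fitting-constructive} whether the sample is realizable. For a non-constant target, the variables true on all positive examples form the candidate pool. I then greedily pick pool variables to cover the negative examples, where a variable covers a negative example if it is $0$ there. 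Greedy set cover achieves size $O(s_{\mathrm{opt}}\log m)$, which is at most $p(s_{\mathrm{opt}})\cdot m^{\beta}$ for any $\beta>0$ and does not depend on $n$. There is a subtlety: the minimal formula in $\PL_O$ may have to be written using the functions of $O$, not $\wedge$ directly. However, any formula over $O$ computing a conjunction of $k$ variables has size at least $k$, and conversely a conjunction of $k$ variables is expressible with an $O(k)$-size formula. The latter holds because $O$ contains a function that, after identifying variables and substituting constants, yields binary $\wedge$, and a balanced tree of these has linear size. The $\vee$ case is dual. The $\neg$ case is trivial, since formulas compute only literals and constants and there are at most $2n+2$ candidates, each of constant size.

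For (b) the clone is the affine functions $x_{i_1}\oplus\dots\oplus x_{i_k}\oplus c$. When $\{\oplus^3\}\preceq O$ only restricted parity/constant combinations are available (e.g.\ odd number of variables, no constants). The Occam algorithm solves the linear system over $\Ftwo$ by Gaussian elimination, adding the restriction as an extra linear constraint (e.g.\ the coefficient sum equals $1$). It outputs any solution, which has support at most $n$, and then builds a balanced formula of size $O(n)$. The size is at most $p(n)$, polynomial in $n$ and independent of $m$, hence a valid (non-attribute-efficient) Occam bound. Since minimum-weight solutions are hard to find, attribute efficiency is left open, as the statement says.

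For (c) I use the contrapositive link stated in the introduction: an Occam algorithm yields a PAC learner. It therefore suffices to show $\PL_O$ is not PAC learnable under the cryptographic assumptions of Section~\ref{sec:pac}. By Post's lattice, any $O$ outside the cases in (a) and (b) generates a clone containing either $\wedge$ and $\vee$ together (monotone formulas, or more), or a threshold/majority-type function, or $\oplus$ combined with something non-affine. In each case the clone expresses, by polynomial-size formulas, a class already known to be cryptographically hard, such as monotone $\mathsf{NC}^1$ formulas or log-depth formulas with negation. The hard case is classes like those generated by $\maj$ or by $x\wedge(y\vee z)$, where constants are missing. There I intend to use the standard trick of adding fresh variables fixed to $0$/$1$ in the distribution to simulate constants, and $\maj$ with such constants to obtain $\wedge,\vee$.

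The main obstacle is keeping the translation to polynomial formula size, not just circuit size, for $\PL$ rather than $\CIR$. This requires log-depth (balanced) formulas for the hard class, so that depth-$O(\log n)$ simulation of $\wedge,\vee$ by the available gates only causes polynomial blowup. I will invoke the Dalmau classification for circuits and check that Kearns–Valiant-style hardness already holds for log-depth formulas, which closes the formula gap. For $\CIR_O$ all steps are identical, with circuits in place of formulas.
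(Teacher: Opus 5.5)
Your parts (a) and (b) follow the paper. Part (a) is the greedy hitting set over the coordinates that are $1$ on every positive example, giving size $O(s_{\mathrm{opt}}\log m)$ independently of $n$. Part (b) is Gaussian elimination over $\Ftwo$, giving a parity with at most $n+2$ leaves, which is a bound with $\beta=0$ that depends on $n$. One small correction to your remark on writing conjunctions over $O$: a balanced tree has linear size only if the fixed formula for $\wedge$ reads each argument once. Otherwise use a chain with the recursive argument in a once-read position; polynomial size would suffice anyway.

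For (c) you take a different route. The paper's proof is one line: an Occam algorithm is a proper PAC learner~\cite{Blumer}, and Theorem~\ref{thm:pac}, already established for formulas, excludes that. You instead re-derive the lower bound. Since an Occam algorithm only yields a learner from random examples, you only need prediction-preserving reductions without membership queries, with the hardness of predicting Boolean formulas from random examples~\cite{KV94} as the source; this holds under the same assumptions. That is a genuine simplification. With the instance map $\mathbf a\mapsto(\mathbf a,\bar{\mathbf a},0,1)$ the fresh coordinates always carry the constants, so none of the query bookkeeping of Definition~\ref{def:pwm} is needed: no guards $A,B$ as in Lemma~\ref{lem:monotone-hard}, no maps $h,j$, and no special handling of the setting $(z_0,z_1)=(1,0)$ as in Lemma~\ref{lem:selfdual-case}. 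What the paper's route buys is brevity and the stronger hardness with membership queries, which it needs for Theorem~\ref{thm:pac} anyway.

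If you write your version out, make three points precise.
\begin{enumerate}
\item The case split should be Fact~\ref{fact:minimal-hard}: $[O]$ contains $\maj$, $f_{\vee\wedge}$ or $f_{\wedge\vee}$, and each of these yields $\wedge$ and $\vee$ once constants are available. Your trichotomy as written misses $\Sc_{00}$ and $\Sc_{10}$, which contain only one of $\wedge,\vee$ and no parity, although you do name $x\wedge(y\vee z)$ right afterwards.
\item The negations of the source formula must be removed by doubling variables, $\neg x_i\mapsto y_i$ with $\mathbf a\mapsto(\mathbf a,\bar{\mathbf a})$, since $[O]$ may be monotone.
\item Dalmau's circuit result by itself gives nothing for $\PL_O$. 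What closes the formula gap is balancing the source formula (Fact~\ref{fact:spira}) before substituting gadgets, as you say; this is exactly the step the paper adds in Lemma~\ref{lem:monotone-hard}.
\end{enumerate}
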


The proof is given in Section~\ref{sec:occam}; the negative part is a
consequence of the classification of PAC learnability,
Theorem~\ref{thm:pac} below, since an Occam algorithm yields a proper PAC
learner~\cite{Blumer}.

\paragraph{Empirical risk minimization.}
The next problem drops the assumption that the sample is realizable
and asks
for a hypothesis that minimizes the empirical risk, that is, the fraction of misclassified examples.
This is empirical risk minimization (ERM).
Unlike the fitting problem, ERM is not always solvable in polynomial time.
When it is hard, one may ask for an approximation algorithm. We say that an algorithm
is a \emph{weak approximator} for $\PL_O$ if there are $\delta,\varepsilon>0$
such that, on every sample whose optimal hypothesis has empirical risk
$\le\varepsilon$, the algorithm produces a $\PL_O$-hypothesis of empirical risk
$\le1/2-\delta$. This is, in some sense, the lowest bar. Note that,
if $O$ includes the truth constants $\top$ and $\bot$, then there is always
a trivial solution that achieves an error fraction $\le 1/2$.
Every $k$-approximation algorithm for a constant factor $k$ is a weak approximator:
it suffices to pick
$\varepsilon=1/(4k)$  and $\delta=1/4$.

To state the classification, for $r>j\ge1$ let $\thr^r_j$ denote the
$r$-ary threshold operation ``at least $j$ of $r$'', that is,
\[
  \thr^r_j(x_1,\ldots,x_r)=1
  \quad\Longleftrightarrow\quad
  x_1+\cdots+x_r\ge j .
\]
Thus $\thr^r_j$ and $\thr^r_{r-j+1}$ are dual, and $\thr^3_2=\maj$.
Furthermore, 
in what follows, $O\preceq O'$ means that every function in $O$ is
term-definable from $O'$, and $O\equiv O'$ that the two are interdefinable.

\begin{restatable}[Empirical risk minimization]{theorem}{thmerm}\label{thm:erm}
Let $O$ be a finite basis of Boolean functions.  The ERM problem for $\PL_O$
falls into one of the following three regimes.

\begin{enumerate}[label=(\roman*)]
\item If
\[
\begin{gathered}
  \{\wedge\}\preceq O\preceq\{\wedge,\top,\bot\},
  \qquad
  \{\vee\}\preceq O\preceq\{\vee,\top,\bot\}, \qquad
  \text{or} \qquad
  \{\oplus^3\}\preceq O\preceq\{\oplus,\top,\bot\},
\end{gathered}
\]
then, unless $\Ptime=\NP$, there is no polynomial-time weak approximator for
$\PL_O$, whatever its constants $\delta,\varepsilon>0$.  In particular ERM for
$\PL_O$ is $\NP$-hard and has no polynomial-time constant-factor
approximation.
\item If, for some $k\ge2$,
\[
  \{\thr^{k+1}_2\}\preceq O\preceq\{\to,\ \thr^{k+1}_2\}
  \qquad\text{or}\qquad
  \{\thr^{k+1}_k\}\preceq O\preceq\{x\wedge\neg y,\ \thr^{k+1}_k\},
\]
then ERM for $\PL_O$ is $\NP$-hard; it admits a polynomial-time
$k$-approximation; and, under the Unique Games Conjecture, it admits no
polynomial-time $(k-\varepsilon)$-approximation for any $\varepsilon>0$.  The
factor $k$ is therefore optimal under that conjecture.
\item In all other cases, ERM for $\PL_O$ is solvable in polynomial time.
\end{enumerate}
The same holds for $\CIR_O$.
\end{restatable}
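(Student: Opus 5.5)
The plan is to go through Post's lattice and reduce ERM for each clone to a known combinatorial optimization problem.

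\emph{Reduction to clones.} Only the clone $[O]$ generated by $O$ matters for which functions are expressible, so ERM depends on $[O]$ alone; the same holds for $\CIR_O$. First check that the three regimes are well-defined unions of intervals in Post's lattice and exhaust all clones. Every clone not covered by (i) or (ii) contains one of the following: $\maj$ together with $\neg$ or constants (so the clone contains all self-dual or all functions); a Mal'cev-type operation with negation; or it lies below $\{\neg,\top,\bot\}$. For each such clone we must show polynomial-time ERM. A key observation is that any clone $C$ containing the $k$-ary near-unanimity $\thr^{k+1}_2$ contains all functions preserving the relations definable by $\Pol$. Therefore ERM for $C$ amounts to choosing a Boolean function $f$ in $\Inv$-form, i.e.\ preserving the invariant relations of $C$, that minimizes the disagreements on the sample. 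This holds because a function on the finitely many sample points extends to a member of $C$ iff the partial function preserves the relations; this is the Baker--Pixley style fact behind Theorem~\ref{thm:boolean-fitting-constructive}. So ERM becomes: delete a minimum number of examples so that the remaining partial function preserves the relations of $\Inv(C)$.

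\emph{Regime (ii).} For $C=[\thr^{k+1}_2]$ with possibly $\to$, the relevant invariant is the $k$-ary relation $x_1\vee\dots\vee x_k$ restricted appropriately, plus $0$-preservation. Conflicts are then $(k+1)$-tuples of examples (several positives whose coordinatewise combination witnesses a negative). Minimum deletion is then vertex cover in a $(k+1)$- or $k$-uniform hypergraph (I expect rank $k$ after preprocessing), giving the $k$-approximation by the standard LP/greedy method. Hardness: reduce $k$-uniform hypergraph vertex cover to this conflict structure by encoding each vertex as an example, with Khot--Regev UGC hardness of $(k-\varepsilon)$. The dual case follows by duality.

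\emph{Regime (i).} For $\wedge$, a realizable subsample is one whose positives are closed under the conflicts with negatives. Here I would reduce from a gap problem, e.g.\ Feldman et al.'s hardness of agnostically learning monomials/halfspaces or the Guruswami--Raghavendra type ``noisy monomials'' result, which yields no weak approximation. For $\oplus^3$/affine, use H\aa stad's 3-LIN hardness (near-satisfiable vs $1/2+\delta$). Regime (iii): for clones with $\maj$ and constants, conflicts are pairwise or checkable by 2-SAT/bipartite matching, giving min-cut ERM. Clones below $\{\neg,\top,\bot\}$ are enumerated: there are $O(n)$ hypotheses. The remaining cases (all functions, self-dual, etc.) are handled by majority label per distinct point.

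The main obstacle is getting exact, not merely approximate, equivalence of conflict hypergraphs with vertex cover in regime (ii), in both directions, so that the UGC tightness transfers.
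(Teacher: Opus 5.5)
\textbf{Regime (iii).} Your treatment here misses clones it has to cover. The clones outside (i) and (ii) are $\BF$, $\Rc_*$, $\Mc_*$, $\Sc_{0*}$, $\Sc_{1*}$, $\Dc$, $\Dc_1$, $\Nc_*$ and $\Ic_*$. Your case list misses $\Mc_2$, $\Rc_2$, $\Dc_1$ and, more seriously, the eight ordinary separating clones $\Sc_0,\Sc_{02},\Sc_{01},\Sc_{00}$ and their duals.

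These separating clones contain no near-unanimity operation at all: a near-unanimity operation of arity $d$ is $0$ on the $d$ unit vectors, which have no common zero coordinate. So your Baker--Pixley conflict view gives obstructions of unbounded size there. A negatively labelled set is realizable in $\Sc_0$ only if it has a common zero coordinate, and the $d$ unit vectors of $\{0,1\}^d$ labelled $0$ are unrealizable while all their proper sub-samples are realizable. None of min-cut, per-point majority or enumeration applies.

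The missing idea is coordinate enumeration. We have $f\in\Sc_0$ iff $f^{-1}(0)\subseteq\{\mathbf a:a_i=0\}$ for some $i$. So for each $i$:
\begin{itemize}
\item force every sample point with $a_i=1$ to be positive by heavy repetition;
\item solve the resulting $\BF$-instance (an $\Mc$-instance in the monotone variants, adding the labels $\zero\mapsto0$, $\one\mapsto1$ for $\Sc_{02}$ and $\Sc_{00}$);
\item realize the result by $x_i\vee h$;
\end{itemize}
and keep the best $i$.

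Per-point majority is also wrong for $\Dc$ and $\Dc_1$, because self-duality forces $f(\bar{\mathbf a})=\neg f(\mathbf a)$. You must first map each example $(\mathbf a,b)$ to a canonical representative of $\{\mathbf a,\bar{\mathbf a}\}$, flipping $b$ when $\mathbf a$ is not the representative, and only then take majorities (forcing $(\zero,0)$ for $\Dc_1$).

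Pairwise conflicts are not what makes $\Mc$ easy: $\Dc_2$ has only pairwise conflicts and is exactly vertex cover. What matters is bipartiteness, since every $\Mc$-conflict pairs a positive example with a negative one at a point above it.

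Finally, ``only $[O]$ matters'' is not free for $\PL_O$, because no polynomial translation of formulas between two bases of a clone is known. It holds because Theorem~\ref{thm:boolean-fitting-constructive} turns any realizable relabeling into a polynomial-size $\PL_O$-formula with the same mistakes.

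\textbf{Regime (ii), upper bound.} Once corrected, your $k$-approximation is sound and is a cleaner route than the paper's. With the $(k+1)$-ary near-unanimity operation ($\thr^{k+1}_2$, dually $\thr^{k+1}_k$, or $\maj$ for $\Dc_2$), Proposition~\ref{prop:nu-interpolation} says a sample is realizable iff all its sub-samples of at most $k$ examples are. So obstructions have size at most $k$, not $k+1$. Nor are they ``positives combining into a negative'': for $\Sc_0^k$ they are chiefly sets of at most $k$ negative points without a common zero coordinate. Take distinct labelled examples, weighted by multiplicity, as vertices and these obstructions as hyperedges. Rounding LP values $\ge 1/k$ up then gives a $k$-approximation for every clone of the regime at once, $\Dc_2$ included. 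The paper instead uses label variables and needs a separate change of variables for $\Dc_2$.

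\textbf{Regime (ii), lower bounds.} These carry the $\NP$-hardness and the UGC tightness, and you only announce them. The exact encoding you flag as the obstacle is the whole content. One that works, for a $k$-uniform $H=(V,\mathcal E)$:
\begin{itemize}
\item take a coordinate $c_X$ for each non-hyperedge $X\subseteq V$ with $|X|\le k$;
\item take a point $\mathbf p_v$ with $\mathbf p_v(c_X)=0$ iff $v\in X$;
\item label every point $0$.
\end{itemize}
Then a nonempty set of at most $k$ of these points has a common zero coordinate iff its index set is not a hyperedge. Hence the misclassified indices of any $f\in\Sc_0^k$ form a vertex cover. Conversely, for a cover $S$, the downset generated by $\{\mathbf p_v:v\notin S\}\cup\{\zero\}$ is the zero set of a function in $\Sc_{00}^k$; for $k=2$, Baker--Pixley gives one in $\Dc_2$. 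A lower bound in the largest clone and an upper bound in the smallest make the optimum exactly $\tau(H)$ across the whole interval.

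\textbf{Regime (i).} The sources you name concern only the top clones $\Ec,\Vc,\Lc$, and hardness does not pass to subclones in general ($\Ic\subseteq\Ec$ is easy). For example, $\Lc_2=[\oplus^3]$ has only odd parities and no constants, so H\aa stad's good hypotheses $\mathbf x\cdot\mathbf z$ need not lie in it. You need constant adjunction (Lemma~\ref{lem:constant-adjunction}): append two coordinates fixed to $0$ and $1$ to every example. Then relabelings realizable in $\Cc^{\pm}$ correspond, with the same mistakes and the same sample size, to relabelings of the new sample realizable in $\Cc$. Also make sure the Feldman et al.\ hardness you invoke is for \emph{monotone} conjunctions, since $\Ec$ has no literals.
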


The proof is given in Section~\ref{sec:erm-regions}.  Note that, in light of
Theorem~\ref{thm:boolean-fitting-constructive}, ERM may  be equivalently
viewed as the problem, given a labeled sample,
of finding a relabeling that is realizable and disagrees on as few examples
as possible.

\paragraph{Learning from random examples.}
We now turn to learning, beginning with the combinatorial parameter that
governs how many examples are needed.

\begin{restatable}[VC dimension]{proposition}{thmvc}\label{prop:vc}
Let $O$ be a finite basis of Boolean functions and let $n\ge1$.  If
$O\preceq\{\wedge,\top,\bot\}$, $O\preceq\{\vee,\top,\bot\}$ or
$O\preceq\{\oplus,\top,\bot\}$, then the VC dimension of $\PL_O$ in $n$
variables is at most $n+1$.  Otherwise it is $2^{\Omega(n)}$.
\end{restatable}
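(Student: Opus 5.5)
Here $\PL_O$ in $n$ variables means the set of Boolean functions on $\{0,1\}^n$ expressible by $\PL_O$-formulas over $x_1,\ldots,x_n$. I would split the proof into an upper-bound part for the three small clones and a lower-bound part for everything else; the lower bound rests on Post's lattice.

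\textbf{Upper bound.} If $O\preceq\{\wedge,\top,\bot\}$, every $\PL_O$-formula computes either a constant or a conjunction $\bigwedge_{i\in S}x_i$ with $S\subseteq[n]$ nonempty. So the class has at most $2^n+1$ functions, which gives VC dimension at most $\log_2(2^n+1)$, hence at most $n$ (and in particular at most $n+1$). To get the clean bound I would instead argue directly: monotone conjunctions are intersection-closed, and a shattered set of size $d$ forces $d$ distinct conjunctions each excluding exactly one point, which gives $d\le n$, plus one for the constants. The case $\{\vee,\top,\bot\}$ follows by duality, since complementing the labels preserves VC dimension.

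For $\{\oplus,\top,\bot\}$ the class consists of the affine functions $c\oplus\bigoplus_{i\in S}x_i$. As a subset of $\Ftwo^{\{0,1\}^n}$ this is an $\Ftwo$-vector space of dimension $n+1$. By the standard fact that a vector space $V$ of $\{0,1\}$-valued functions has VC dimension at most $\dim V$ (through counting, $2^d\le|V|=2^{n+1}$), the VC dimension is at most $n+1$. Every smaller $O$ gives a subclass, so these bounds also cover the remaining cases under the three clones.

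\textbf{Lower bound.} If $O$ is below none of the three clones, then by Post's lattice the clone $[O]$ lies outside $[\{\wedge,\top,\bot\}]$, $[\{\vee,\top,\bot\}]$, $[\{\oplus,\top,\bot\}]$ and the clones below them. The remaining clones are exactly those containing $\maj$, $\to$ ($x\vee\neg y$), or $x\wedge\neg y$; it also includes those containing the clones generated by $\thr^{k+1}_2$ or its dual (where $\thr^{k+1}_2\succeq\thr^{3}_2$ up to identification), or $\{\wedge,\vee\}$-type monotone clones. I would verify that every such minimal clone contains either $\maj$ or $\thr$ up to identification, or $\to$, or $x\wedge\neg y$. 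Checking the atoms just above the three small clones in Post's lattice gives the list $[\maj]$, $[\to]$, $[x\wedge\neg y]$, $[\wedge,\vee]$, and $[\neg\cdot]$-free variants. So it suffices to give a $2^{\Omega(n)}$ shattered set in each case.

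\textbf{Shattered sets.}
\begin{itemize}
\item For $\maj$: take the $\binom{n}{\lfloor n/2\rfloor}$-sized antichain of weight-$n/2$ vectors, or a subset of it with pairwise large distance. For any subset $T$ I need a monotone self-dual $\PL_{\maj}$ function that is $1$ exactly on $T$ within the set. My first attempt is to pair variables, $x_{2i-1},x_{2i}$, use vectors with exactly one of each pair set (there are $2^{n/2}$ of them), and note that on this set the function is forced to be self-dual only across complementary points. So I would shatter a subset of size $2^{n/2-1}$ containing no complementary pair, using the fact that $\Pol$-invariance of self-dual monotone functions is only constrained by complementation and order, which is trivial on this antichain.
\item For $\to$: the clone contains all functions $f$ with $f(\mathbf 1)=1$. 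Any set of points avoiding $\mathbf 1$, for example all $2^n-1$ of them, is shattered.
\item For $x\wedge\neg y$: dually, the clone contains all functions with $f(\mathbf 0)=0$.
\item For $\{\wedge,\vee\}$: monotone functions shatter any antichain.
\end{itemize}

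\textbf{Main obstacle.} The $\maj$ case needs care. I would invoke the characterization of $[\maj]$ as the monotone self-dual functions and check that an arbitrary labeling of a complement-free antichain extends to such a function. Concretely, I would take the up-closure of the $1$-points, together with the complements of the down-closure of the $0$-points, and argue that this is consistent because antichain points are pairwise incomparable with complements of one another when exactly half the pairs are set.
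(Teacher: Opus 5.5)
Your upper bound by counting is fine and is the paper's argument. The gap is in the lower bound, where the list of minimal cases is wrong. The minimal clones outside the down-sets of $\Ec$, $\Vc$, $\Lc$ are $\Dc_2=[\maj]$, $\Sc_{00}=[x\vee(y\wedge z)]$ and $\Sc_{10}=[x\wedge(y\vee z)]$ (Fact~\ref{fact:minimal-hard}). $\Sc_{00}$ contains none of your four generators $\maj$, $\to$, $x\wedge\neg y$, $\wedge$:
\begin{itemize}
\item it is monotone, so it contains neither $\to$ nor $x\wedge\neg y$;
\item it is $0$-separating, so it omits $\wedge$, whose zero points $01,10$ have no common zero coordinate;
\item it lies in $\Sc_0^3$, so it omits $\maj$, whose zero points $100,010,001$ have no common zero coordinate.
\end{itemize}
The same holds for $\Sc_{01}$, $\Sc_{02}$, every $\Sc_{0*}^k$ with $k\ge3$, and their duals. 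So bases such as $\{x\vee(y\wedge z)\}$ or $\{\thr^4_2\}$ are covered by none of your shattering arguments. Your remark that $\thr^{k+1}_2$ yields $\thr^3_2$ by identification fails for $k\ge3$, since $\thr^{k+1}_2\in\Sc_0^k\subseteq\Sc_0^3$. Your $\to$ case is also wrong as stated. $[\to]=\Sc_0$ is much smaller than the clone $\Rc_1$ of all $1$-preserving functions; it omits $x\wedge y$, for example. The $2^n-1$ points other than $\one$ are not shattered by it, because labeling them all $0$ requires a zero set with no common zero coordinate. The same applies dually to $x\wedge\neg y$.

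The missing idea is how to shatter a large set with a clone that has neither truth constant and only one of $\wedge,\vee$. The paper handles all three minimal clones uniformly. Take $A=\{(1,\mathbf b,\bar{\mathbf b},0):\mathbf b\in\{0,1\}^{(n-2)/2}\}$ for even $n$. This is an antichain of size $2^{\Omega(n)}$, shattered by $\Mc$, and every point in it has first coordinate $1$ and last coordinate $0$. For any clone $\Cc$, every $f\in\Cc^{\pm}$ has the form $g(\mathbf x,0,1)$ with $g\in\Cc$. Then $g(\mathbf x,x_n,x_1)\in\Cc$ agrees with $f$ on $A$, so $\Cc$ shatters whatever $\Cc^{\pm}$ shatters within $A$. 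Finally, $\Cc^{\pm}=\Mc$ for each of $\Dc_2,\Sc_{00},\Sc_{10}$; for instance $f_{\vee\wedge}(\bot,y,z)=y\wedge z$ and $f_{\vee\wedge}(x,y,y)=x\vee y$.

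This device also makes your self-dual extension argument for $\maj$ unnecessary. As written, that step is incomplete anyway. The up-closure of the $1$-points together with the complements of the $0$-points is only an intersecting upset. You would still have to extend it to a maximal intersecting family, that is, to a monotone self-dual function on the whole cube.
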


We now consider two notions of learnability.  Both are
representation-sensitive: the learner is
told a bound $s$ on the size of a representation of the target, and may use a
number of examples polynomial in $1/\varepsilon$, $1/\delta$, $n$ and $s$.  By a
\emph{proper PAC learner} for $\PL_O$ we mean an algorithm that, given
$\varepsilon,\delta>0$, the bound $s$, and sufficiently many examples drawn from an arbitrary
distribution and labeled by a target in $\PL_O$ of size at most $s$, outputs in
polynomial time a hypothesis from the class whose error is at most $\varepsilon$ with
probability at least $1-\delta$.  A \emph{PAC predictor} need not output a
hypothesis: after seeing the examples it is given one further point and must
predict its label, with error bounded away from $1/2$ by an inverse polynomial,
and it may in addition ask \emph{membership queries}, that is, ask for the value
of the target at points of its own choosing.  PAC prediction is easier than
producing a hypothesis, and membership queries only help, so hardness of
PAC prediction with queries is the stronger statement.\footnote{This relies on the 
fact that, in the concept classes we consider here,  the label of a given example
for a given concept can be computed in polynomial time.} 
 Both notions, and the
cryptographic assumptions, are given precisely in
Section~\ref{sec:pac}.

\begin{restatable}[{PAC learning and PAC prediction~\cite{Dalmau}}]{theorem}{thmpac}\label{thm:pac}
Let $O$ be a finite basis.  The following are equivalent.
\begin{enumerate}[label=(\alph*)]
\item $O\preceq\{\wedge,\top,\bot\}$, $O\preceq\{\vee,\top,\bot\}$ or
      $O\preceq\{\oplus,\top,\bot\}$;
\item $\PL_O$ is polynomially properly PAC learnable;
\item $\PL_O$ is polynomially PAC predictable with membership queries.
\end{enumerate}
The implications (a)~$\Rightarrow$~(b)~$\Rightarrow$~(c) are unconditional; the
remaining implication (c)~$\Rightarrow$~(a) holds under cryptographic
assumptions described in Section~\ref{sec:pac}.  The same equivalence holds with
$\CIR_O$ in place of $\PL_O$.
\end{restatable}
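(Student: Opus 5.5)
We prove the three implications separately, handling $\CIR_O$ alongside $\PL_O$ throughout.

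\textbf{(a)$\Rightarrow$(b).} For each of the three maximal bases we exhibit a proper learner and then restrict it to sub-bases. If $O\preceq\{\wedge,\top,\bot\}$, every $\PL_O$-formula computes a conjunction of variables or a constant. The classical elimination algorithm outputs the conjunction of all variables that are $1$ on every positive example, or outputs $\bot$ if there are no positives. This is a consistent hypothesis from a class of size $2^{O(n)}$, so Occam's razor~\cite{Blumer} gives proper PAC learning with $O(\frac1\varepsilon(n+\log\frac1\delta))$ examples. The disjunctive case is dual. If $O\preceq\{\oplus,\top,\bot\}$, the concepts are affine functions over $\Ftwo$. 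We solve the linear system given by the sample by Gaussian elimination; the VC bound of Proposition~\ref{prop:vc} again yields sample complexity polynomial in $n$.

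The main subtlety is \emph{properness} for a proper sub-basis $O$, for instance $O=\{\oplus^3\}$, which only expresses odd-size parities. The hypothesis must lie in $\PL_O$, and we must output an $O$-formula. To handle this, we use the fitting algorithm of Theorem~\ref{thm:boolean-fitting-constructive}. It decides realizability and returns a polynomial-size fitting formula in $\PL_O$ (respectively $\CIR_O$). Since the class restricted to $n$ variables has VC dimension at most $n+1$ by Proposition~\ref{prop:vc}, any consistent hypothesis from $\PL_O$ generalizes with polynomially many examples by the standard VC uniform-convergence bound, regardless of its size. So a single argument covers all three cases: \emph{draw $\mathrm{poly}(n,1/\varepsilon,\log 1/\delta)$ examples, run the fitting algorithm, output its formula}. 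The target lies in $\PL_O$, so the sample is realizable and the algorithm succeeds.

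\textbf{(b)$\Rightarrow$(c).} A proper PAC learner yields a predictor without queries: run the learner and evaluate its hypothesis on the test point. Evaluation is polynomial time, as noted in the footnote. Membership queries only add power.

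\textbf{(c)$\Rightarrow$(a).} This is the step we expect to be hard, and we follow~\cite{Dalmau}. Suppose (a) fails. By Post's lattice, $O$ then generates one of the minimal clones lying outside the three ideals. Concretely, $O$ defines $\maj$, or $x\wedge(y\vee z)$-type functions, or $\thr^{k+1}_2$ or its dual for some $k$, or $x\vee(y\wedge z)$, or $\to$, or $x\wedge\neg y$. This list should be checked against Post's lattice, since the complement of the union of the three down-sets is generated by finitely many minimal clones.

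For each minimal clone we give a prediction-preserving reduction with membership queries ($\pwm$) from a class known to be cryptographically hard, such as polynomial-size Boolean formulas under the assumptions of Section~\ref{sec:pac}. The reduction must map each variable to a tuple of new variables with fixed constant padding, so that the target formula, rewritten over $O$ with constants simulated by the padding, stays polynomial size. A non-affine, non-conjunctive, non-disjunctive clone contains a function from which monotone $\wedge$ and $\vee$ both arise once we fix some arguments to constants; those constants are themselves supplied by the padding variables. Monotone formulas then become reachable, and double-rail encoding of negated literals handles non-monotone ones.

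For formulas, as opposed to circuits, the key check is that simulating $\wedge$ and $\vee$ via a fixed $O$-term increases depth by only a constant factor. We therefore first balance the source formula to logarithmic depth (Spira), so that the resulting tree stays polynomial size. This balancing step is the gap we fill for $\PL_O$. Membership queries on the reduced instance translate to queries on the original instance, so the reduction is compatible with the query model.
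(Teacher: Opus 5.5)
Your (a)$\Rightarrow$(b) and (b)$\Rightarrow$(c) match the paper. So does your plan for (c)$\Rightarrow$(a): Dalmau's gadget reductions, preceded by Spira balancing so that formula size stays polynomial. The gap is your sentence asserting that membership queries on the reduced instance ``translate to queries on the original instance''. Condition~(3) of Definition~\ref{def:pwm} must hold at \emph{every} query point. The maps $h,j$ see only that point and one bit of the target concept, never the target formula.

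With plain double-rail literals and padding variables standing in for the constants, the reduced formula's value at off-encoding points depends on the syntax of the target. For example, $x_1\wedge\neg x_1$ and $x_2\wedge\neg x_2$ define the same concept. Their double-rail versions $x_1\wedge y_1$ and $x_2\wedge y_2$, however, disagree where $x_1=y_1=1$ and $x_2=y_2=0$. Likewise $x_1$ and $\maj(x_1,\bot,\top)$ define the same concept, but once $\bot,\top$ become $z_0,z_1$ they disagree at $x_1=1$, $z_0=z_1=0$. So no choice of $h,j$ works, and the reduction as you describe it is not one in the sense of Definition~\ref{def:pwm}.

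What is missing are guards that make the value at every point a function of the target concept.
\begin{itemize}
\item For the rails, the paper uses $A\vee(\varphi^{+}\wedge B)$ with $A=\bigvee_i(x_i\wedge y_i)$ and $B=\bigwedge_i(x_i\vee y_i)$. This is $1$ if some rail pair is $(1,1)$, is $0$ if otherwise some pair is $(0,0)$, and is $\varphi(\mathbf a)$ at $\langle\mathbf a,\bar{\mathbf a}\rangle$.
\item For the constants, the paper wraps the translated formula $\chi'$ in a final gadget on the padding variables. When $f_{\vee\wedge}\in[O]$ (dually for $f_{\wedge\vee}$) it uses $z_0\vee(z_1\wedge\chi')$, whose value is forced except at $(z_0,z_1)=(0,1)$. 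When $\Dc_2\subseteq[O]\subseteq\Dc$ it uses $\maj(\chi',z_0,z_1)$.
\end{itemize}
In the self-dual case no gadget can force the value at $(z_0,z_1)=(1,0)$, because every function of the clone is self-dual. The additional idea is that at that point the reduced formula necessarily computes $\neg\chi(\bar{\mathbf a})$. So $h$ queries the target at $\bar{\mathbf a}$ and $j$ negates the answer.

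This is also why the case split must be the one of Fact~\ref{fact:minimal-hard}: either $f_{\vee\wedge}$ or $f_{\wedge\vee}$ is present, or only $\maj$ is and hence $[O]\subseteq\Dc$. Your list of ``minimal clones'' does not give this split. Several of its entries ($[\thr^{k+1}_2]$ for $k\ge3$, $[\to]$, $[x\wedge\neg y]$) are not minimal and already contain $f_{\vee\wedge}$ or $f_{\wedge\vee}$.
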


Dalmau~\cite{Dalmau} states the above result for formulas and for circuits but the proof establishes it 
only for circuits. We supply
the missing step in Section~\ref{sec:pac}.  Other models of learning are
considered in~\cite{Dalmau} as well, namely exact learnability from membership
and equivalence queries.  We omit them here, but note that our proof shows
that the corresponding results of~\cite{Dalmau} also hold for formulas, and
not just for circuits.

A finer question is whether, in the positive cases, the number of examples
can be made \emph{attribute-efficient}~\cite{Littlestone88}: polynomial in
$s$, $1/\varepsilon$, $1/\delta$ and $\log n$ rather than in $n$ (note that
$\log n$ is the information theoretic content in bits of a single propositional variable).  For the
conjunctive and disjunctive bases this holds, as follows from Theorem~\ref{thm:occam} above.
For the affine bases the target is a parity of at most $s$ variables, and
whether such parities can be learned attribute-efficiently in polynomial
time is a well-known open problem~\cite{KlivansServedio,BshoutyHaddad}.

The above results also yield a complete picture for \emph{agnostic
learning}~\cite{KSS}, where we don't assume that the labels are given by a
target in the class: here, the examples are drawn from an arbitrary
distribution $D$ on $\{0,1\}^n\times\{0,1\}$, and the learner must output
a hypothesis whose error under $D$ exceeds
the least error $\opt_D$ of any $n$-ary $\PL_O$-formula by at most
$\varepsilon$. Note that, since there is no target, the bound for the
running time and sample size is in $n$,
$1/\varepsilon$ and $1/\delta$.
The learner is \emph{proper} if the hypothesis is a $\PL_O$-formula, and it
is a \emph{weak agnostic learner} if it is required to work only when
$\opt_D\le\varepsilon$, and then only to reach error $\le1/2-\delta$, for
some fixed $\varepsilon,\delta>0$.  Every weak agnostic learner yields a randomized weak
approximator for ERM, by running it on the uniform distribution over a given
sample.  Conversely, when the VC dimension of $\PL_O$ is polynomial in
$n$, every polynomial-time ERM algorithm yields a proper agnostic learner,
by uniform convergence; and polynomial VC dimension is in fact necessary
for weak agnostic learning, even improper~\cite{SSBD}.  It follows from Theorem~\ref{thm:erm}
and Proposition~\ref{prop:vc} that $\PL_O$ is properly agnostically
learnable in polynomial time when $O\preceq\{\neg,\top,\bot\}$; that in the
three intervals of regime~(i) of Theorem~\ref{thm:erm} it is \emph{not}
properly weak agnostically learnable in polynomial time, unless
$\NP=\RP$; and that for every other basis it is not weak agnostically
learnable at all, properly or not, for VC dimension reasons
(cf.~Proposition~\ref{prop:vc}).

Finally, a further variant of the PAC model deserves discussion.  Under \emph{random
classification noise} each label shown to the learner is flipped independently
with a fixed probability $\eta<1/2$~\cite{AngluinLaird}.  The usual route to
learning algorithms that are tolerant to such noise is Kearns's \emph{statistical query} model, in which the
learner sees no examples at all but may ask for the probability, under the
example distribution, of any polynomial-time predicate of an example and its
label, and receives it to within a tolerance of its choosing.  A class learnable
from polynomially many such queries of inverse-polynomial tolerance is PAC
learnable under random classification noise of any rate
$\eta<1/2$~\cite{Kearns98}.  Both notions are made precise in
Section~\ref{sec:noise}.

\begin{restatable}[Statistical queries, from~\cite{Kearns98,BFJKMR}]{theorem}{thmsq}\label{thm:sq}
Let $O$ be a finite basis.  Under the cryptographic assumptions of
Section~\ref{sec:pac}, $\PL_O$ is efficiently learnable from
statistical queries if and only if $O\preceq\{\wedge,\top,\bot\}$,
$O\preceq\{\vee,\top,\bot\}$ or $O\preceq\{\neg,\top,\bot\}$.  The same holds
for $\CIR_O$.
\end{restatable}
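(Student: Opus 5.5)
The plan is to prove the two directions separately. For the positive direction I would give explicit SQ learners. For the negative direction I would reduce to Theorem~\ref{thm:pac} and Proposition~\ref{prop:vc}, and for the affine bases use the known SQ lower bound for parities~\cite{Kearns98,BFJKMR}.

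\emph{Positive direction.} Suppose $O\preceq\{\neg,\top,\bot\}$. Then every $\PL_O$-formula computes a constant, a literal $x_i$, or a literal $\neg x_i$. There are only $2n+2$ such concepts. I would make one statistical query per candidate $h$, asking for the probability that $h(\mathbf a)\neq b$, with tolerance $\varepsilon/4$, and output the candidate with the smallest estimated error. This is correct because the target's true error is $0$.

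Now suppose $O\preceq\{\wedge,\top,\bot\}$; the disjunctive case is dual. Every concept is then a monotone conjunction of variables or a constant. I would run Kearns's SQ version of the elimination algorithm: for each $i$, query $\Pr[x_i=0\wedge b=1]$ with tolerance $\varepsilon/(4n)$, and keep $x_i$ exactly when the estimate is below $\varepsilon/(2n)$. Every relevant variable has this probability equal to $0$, so it is kept. Each variable we keep wrongly contributes error at most $\varepsilon/n$, so the total error is at most $\varepsilon$. The constants are handled by one extra query each. All of this transfers verbatim to $\CIR_O$, since the expressive power is the same.

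\emph{Negative direction.} If $O$ lies in none of the three intervals, I would split into two cases.

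In the first case, $O\not\preceq\{\oplus,\top,\bot\}$ as well. Then Theorem~\ref{thm:pac} says $\PL_O$ is not even PAC predictable under the cryptographic assumptions. Efficient SQ learnability implies ordinary PAC learnability, since each query can be simulated from samples (cf.~\cite{Kearns98}), so SQ learnability is ruled out.

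In the second case, $O\preceq\{\oplus,\top,\bot\}$ but $O$ is not below $\{\neg,\top,\bot\}$, so $O$ generates some affine function with at least two essential variables. I would first show a closure fact: from $O$ one can then define either $x\oplus y$ or $x\oplus y\oplus z$. This should follow from Post's lattice in the affine region, since the clones there are generated by $\oplus$, by $\oplus^3$, by $\oplus^3$ together with $\neg$, and similar. Next, using padding or variable duplication, I would show that $\PL_O$ of size $\mathrm{poly}(n)$ contains all parities on subsets of $n$ variables, possibly complemented or restricted to odd-size subsets. There are $2^{\Omega(n)}$ such parities, and they are pairwise orthogonal under the uniform distribution. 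The SQ dimension lower bound of~\cite{BFJKMR} then rules out efficient SQ learning, and this holds unconditionally.

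The step I expect to be the main obstacle is the affine bookkeeping. I need to check that every affine, non-degenerate clone interval contains an exponentially large family of pairwise uncorrelated parities at polynomial formula size. This is most delicate for the $\oplus^3$ case, where only odd-size parities are available; restricting to odd subsets still leaves $2^{n-1}$ pairwise orthogonal functions, so it should go through. The same argument applies to $\CIR_O$, since the lower bounds hold for the larger class and the learners are the same.
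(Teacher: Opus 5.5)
Your proposal is correct and follows essentially the same route as the paper. The paper uses Kearns's elimination algorithm and one-by-one error estimation over the $2n+2$ candidates for the positive cases, the $2^{n-1}$ pairwise uncorrelated odd parities in $\Lc_2\subseteq[O]$ for an unconditional lower bound in the affine case, and simulation of statistical queries from examples together with Theorem~\ref{thm:pac} for all remaining bases; your closure fact amounts to the paper's observation that every affine clone not contained in $\Nc$ contains $\oplus^3$.
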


For random classification noise itself the picture is incomplete. Specifically,
for the affine cases
$\{\oplus^3\}\preceq O\preceq\{\oplus,\top,\bot\}$ the question is open --- it
is known as the \emph{learning parity with noise} problem~\cite{BKW}.

Figure~\ref{fig:post-lattice} collects these classifications. It depicts
Post's lattice. Its elements are all Boolean clones, which we can think of as the
equivalence classes of the pre-order $\preceq$. They are colored to reflect the status
of the clones in question with respect to the classifications above.

In this paper we focus on problems where a formula is to be derived from
examples.  For the converse direction, where a formula is given and a
meaningful sample is to be generated for it --- for instance one that
characterizes the formula up to equivalence within the fragment --- dichotomy
results over Post's lattice exist as well, cf.~\cite{BtCGM}.

\begin{figure}[p]
\centering
\begin{tikzpicture}[
  x=0.85cm, y=0.445cm,
  clone/.style={circle, minimum size=6.0mm, inner sep=0pt, font=\small,
                line width=0.5pt, draw=black!55, fill=white},
  regI/.style  ={draw={rgb,255:red,22;  green,163; blue,74},  fill={rgb,255:red,220; green,252; blue,231}},
  regII/.style ={draw={rgb,255:red,37;  green,99;  blue,235}, fill={rgb,255:red,219; green,234; blue,254}},
  regIII/.style={draw={rgb,255:red,124; green,58;  blue,237}, fill={rgb,255:red,237; green,233; blue,254}},
  regIV/.style={draw={rgb,255:red,217; green,119; blue,6},   fill={rgb,255:red,254; green,243; blue,199}},
  regV/.style ={draw={rgb,255:red,225; green,29;  blue,72},  fill={rgb,255:red,255; green,228; blue,230}},
  edge/.style={draw=black!55, line width=0.4pt},
  displayed gap/.style={edge, dashed, dash pattern=on 2.4pt off 1.6pt},
  swatch/.style={circle, minimum size=3.4mm, inner sep=0pt, line width=0.5pt}
]
  \node[clone,regV] (BF) at (0.000,36.000) {$\BF$};
  \node[clone,regV] (R1) at (-1.200,34.500) {$\Rc_1$};
  \node[clone,regV] (R0) at (1.200,34.500) {$\Rc_0$};
  \node[clone,regV] (R2) at (0.000,33.000) {$\Rc_2$};
  \node[clone,regV] (M) at (0.000,30.000) {$\Mc$};
  \node[clone,regV] (M1) at (-1.200,28.500) {$\Mc_1$};
  \node[clone,regV] (M0) at (1.200,28.500) {$\Mc_0$};
  \node[clone,regV] (M2) at (0.000,27.000) {$\Mc_2$};
  \node[clone,regIV] (S21) at (8.040,26.700) {$\Sc_1^2$};
  \node[clone,regIV] (S31) at (8.040,24.960) {$\Sc_1^3$};
  \node[clone,regV] (S1) at (8.040,20.190) {$\Sc_1$};
  \node[clone,regIV] (S212) at (6.960,24.750) {$\Sc_{12}^2$};
  \node[clone,regIV] (S312) at (6.960,23.010) {$\Sc_{12}^3$};
  \node[clone,regV] (S12) at (6.960,18.240) {$\Sc_{12}$};
  \node[clone,regIV] (S211) at (5.760,24.750) {$\Sc_{11}^2$};
  \node[clone,regIV] (S311) at (5.760,23.010) {$\Sc_{11}^3$};
  \node[clone,regV] (S11) at (5.760,18.240) {$\Sc_{11}$};
  \node[clone,regIV] (S210) at (4.680,22.800) {$\Sc_{10}^2$};
  \node[clone,regIV] (S310) at (4.680,21.060) {$\Sc_{10}^3$};
  \node[clone,regV] (S10) at (4.680,16.290) {$\Sc_{10}$};
  \node[clone,regIV] (S20) at (-8.040,26.700) {$\Sc_0^2$};
  \node[clone,regIV] (S30) at (-8.040,24.960) {$\Sc_0^3$};
  \node[clone,regV] (S0) at (-8.040,20.190) {$\Sc_0$};
  \node[clone,regIV] (S202) at (-6.960,24.750) {$\Sc_{02}^2$};
  \node[clone,regIV] (S302) at (-6.960,23.010) {$\Sc_{02}^3$};
  \node[clone,regV] (S02) at (-6.960,18.240) {$\Sc_{02}$};
  \node[clone,regIV] (S201) at (-5.760,24.750) {$\Sc_{01}^2$};
  \node[clone,regIV] (S301) at (-5.760,23.010) {$\Sc_{01}^3$};
  \node[clone,regV] (S01) at (-5.760,18.240) {$\Sc_{01}$};
  \node[clone,regIV] (S200) at (-4.680,22.800) {$\Sc_{00}^2$};
  \node[clone,regIV] (S300) at (-4.680,21.060) {$\Sc_{00}^3$};
  \node[clone,regV] (S00) at (-4.680,16.290) {$\Sc_{00}$};
  \node[clone,regV] (D) at (0.000,21.000) {$\Dc$};
  \node[clone,regV] (D1) at (0.000,19.500) {$\Dc_1$};
  \node[clone,regIV] (D2) at (0.000,18.000) {$\Dc_2$};
  \node[clone,regII] (E) at (3.600,15.000) {$\Ec$};
  \node[clone,regII] (E1) at (2.400,13.500) {$\Ec_1$};
  \node[clone,regII] (E0) at (4.800,13.500) {$\Ec_0$};
  \node[clone,regII] (E2) at (3.600,12.000) {$\Ec_2$};
  \node[clone,regII] (V) at (-3.600,15.000) {$\Vc$};
  \node[clone,regII] (V0) at (-2.400,13.500) {$\Vc_0$};
  \node[clone,regII] (V1) at (-4.800,13.500) {$\Vc_1$};
  \node[clone,regII] (V2) at (-3.600,12.000) {$\Vc_2$};
  \node[clone,regIII] (L) at (0.000,15.000) {$\Lc$};
  \node[clone,regIII] (L0) at (1.200,13.500) {$\Lc_0$};
  \node[clone,regIII] (L1) at (-1.200,13.500) {$\Lc_1$};
  \node[clone,regIII] (L3) at (0.000,13.500) {$\Lc_3$};
  \node[clone,regIII] (L2) at (0.000,12.000) {$\Lc_2$};
  \node[clone,regI] (N) at (0.000,9.750) {$\Nc$};
  \node[clone,regI] (N2) at (0.000,8.250) {$\Nc_2$};
  \node[clone,regI] (I) at (0.000,3.750) {$\Ic$};
  \node[clone,regI] (I0) at (1.200,2.250) {$\Ic_0$};
  \node[clone,regI] (I1) at (-1.200,2.250) {$\Ic_1$};
  \node[clone,regI] (I2) at (0.000,0.750) {$\Ic_2$};
  \node[clone,regIV] (Sn0) at (-8.040,22.575) {$\Sc_0^n$};
  \node[clone,regIV] (Sn02) at (-6.960,20.625) {$\Sc_{02}^n$};
  \node[clone,regIV] (Sn01) at (-5.760,20.625) {$\Sc_{01}^n$};
  \node[clone,regIV] (Sn00) at (-4.680,18.675) {$\Sc_{00}^n$};
  \node[clone,regIV] (Sn10) at (4.680,18.675) {$\Sc_{10}^n$};
  \node[clone,regIV] (Sn11) at (5.760,20.625) {$\Sc_{11}^n$};
  \node[clone,regIV] (Sn12) at (6.960,20.625) {$\Sc_{12}^n$};
  \node[clone,regIV] (Sn1) at (8.040,22.575) {$\Sc_1^n$};
  \begin{scope}[on background layer]
  \path (R2) edge[edge] (R1);
  \path (R2) edge[edge] (R0);
  \path (R1) edge[edge] (BF);
  \path (R0) edge[edge] (BF);
  \path (M2) edge[edge] (M1);
  \path (M2) edge[edge, out=120, in=240, looseness=0.8] (R2);
  \path (M2) edge[edge] (M0);
  \path (M1) edge[edge] (R1);
  \path (M1) edge[edge] (M);
  \path (M0) edge[edge] (R0);
  \path (M0) edge[edge] (M);
  \path (M) edge[edge, out=50, in=306, looseness=0.8] (BF);
  \path (S31) edge[edge] (S21);
  \path (S21) edge[edge] (R0);
  \path (S12) edge[edge] (S1);
  \path (S312) edge[edge] (S31);
  \path (S312) edge[edge] (S212);
  \path (S212) edge[edge] (S21);
  \path (S212) edge[edge] (R2);
  \path (S11) edge[edge] (S1);
  \path (S311) edge[edge] (S31);
  \path (S311) edge[edge] (S211);
  \path (S211) edge[edge] (S21);
  \path (S211) edge[edge] (M0);
  \path (S10) edge[edge] (S12);
  \path (S10) edge[edge] (S11);
  \path (S310) edge[edge] (S312);
  \path (S310) edge[edge] (S311);
  \path (S310) edge[edge] (S210);
  \path (S210) edge[edge] (S212);
  \path (S210) edge[edge] (S211);
  \path (S210) edge[edge] (M2);
  \path (S30) edge[edge] (S20);
  \path (S20) edge[edge] (R1);
  \path (S02) edge[edge] (S0);
  \path (S302) edge[edge] (S30);
  \path (S302) edge[edge] (S202);
  \path (S202) edge[edge] (S20);
  \path (S202) edge[edge] (R2);
  \path (S01) edge[edge] (S0);
  \path (S301) edge[edge] (S30);
  \path (S301) edge[edge] (S201);
  \path (S201) edge[edge] (S20);
  \path (S201) edge[edge] (M1);
  \path (S00) edge[edge] (S02);
  \path (S00) edge[edge] (S01);
  \path (S300) edge[edge] (S302);
  \path (S300) edge[edge] (S301);
  \path (S300) edge[edge] (S200);
  \path (S200) edge[edge] (S202);
  \path (S200) edge[edge] (S201);
  \path (S200) edge[edge] (M2);
  \path (D2) edge[edge] (S200);
  \path (D2) edge[edge] (S210);
  \path (D2) edge[edge] (D1);
  \path (D1) edge[edge, out=135, in=220, looseness=0.95] (R2);
  \path (D1) edge[edge] (D);
  \path (D) edge[edge, out=41, in=306, looseness=0.85] (BF);
  \path (L2) edge[edge] (L1);
  \path (L2) edge[edge] (L3);
  \path (L2) edge[edge] (L0);
  \path (L2) edge[edge, out=50, in=310, looseness=0.6] (D1);
  \path (L1) edge[edge, out=110, in=250, looseness=0.9] (R1);
  \path (L1) edge[edge] (L);
  \path (L3) edge[edge] (L);
  \path (L3) edge[edge, out=130, in=230, looseness=0.8] (D);
  \path (L0) edge[edge] (L);
  \path (L0) edge[edge, out=70, in=290, looseness=0.9] (R0);
  \path (L) edge[edge, out=55, in=306, looseness=0.8] (BF);
  \path (E2) edge[edge] (E1);
  \path (E2) edge[edge] (S10);
  \path (E2) edge[edge] (E0);
  \path (E1) edge[edge] (M1);
  \path (E1) edge[edge] (E);
  \path (E0) edge[edge] (E);
  \path (E0) edge[edge] (S11);
  \path (E) edge[edge] (M);
  \path (V2) edge[edge] (V1);
  \path (V2) edge[edge] (S00);
  \path (V2) edge[edge] (V0);
  \path (V1) edge[edge] (S01);
  \path (V1) edge[edge] (V);
  \path (V0) edge[edge] (V);
  \path (V0) edge[edge] (M0);
  \path (V) edge[edge] (M);
  \path (N2) edge[edge] (N);
  \path (N2) edge[edge, out=50, in=310, looseness=0.8] (L3);
  \path (N) edge[edge, out=125, in=230, looseness=0.8] (L);
  \path (I2) edge[edge] (I1);
  \path (I2) edge[edge] (I0);
  \path (I2) edge[edge] (E2);
  \path (I2) edge[edge] (V2);
  \path (I2) edge[edge, out=115, in=240, looseness=0.8] (N2);
  \path (I2) edge[edge, out=115, in=240, looseness=0.8] (L2);
  \path (I2) edge[edge, out=70, in=290, looseness=0.7] (D2);
  \path (I1) edge[edge] (V1);
  \path (I1) edge[edge, out=110, in=250, looseness=0.8] (L1);
  \path (I1) edge[edge] (E1);
  \path (I1) edge[edge] (I);
  \path (I0) edge[edge] (I);
  \path (I0) edge[edge, out=70, in=290, looseness=0.8] (L0);
  \path (I0) edge[edge] (V0);
  \path (I0) edge[edge] (E0);
  \path (I) edge[edge] (V);
  \path (I) edge[edge, out=62, in=312, looseness=0.8] (N);
  \path (I) edge[edge] (E);
  \path (S0) edge[displayed gap] (Sn0);
  \path (Sn0) edge[displayed gap] (S30);
  \path (S02) edge[displayed gap] (Sn02);
  \path (Sn02) edge[displayed gap] (S302);
  \path (S01) edge[displayed gap] (Sn01);
  \path (Sn01) edge[displayed gap] (S301);
  \path (S00) edge[displayed gap] (Sn00);
  \path (Sn00) edge[displayed gap] (S300);
  \path (S10) edge[displayed gap] (Sn10);
  \path (Sn10) edge[displayed gap] (S310);
  \path (S11) edge[displayed gap] (Sn11);
  \path (Sn11) edge[displayed gap] (S311);
  \path (S12) edge[displayed gap] (Sn12);
  \path (Sn12) edge[displayed gap] (S312);
  \path (S1) edge[displayed gap] (Sn1);
  \path (Sn1) edge[displayed gap] (S31);
  \path (Sn02) edge[edge] (Sn0);
  \path (Sn01) edge[edge] (Sn0);
  \path (Sn00) edge[edge] (Sn02);
  \path (Sn00) edge[edge] (Sn01);
  \path (Sn12) edge[edge] (Sn1);
  \path (Sn11) edge[edge] (Sn1);
  \path (Sn10) edge[edge] (Sn12);
  \path (Sn10) edge[edge] (Sn11);
  \end{scope}
  \node[swatch,regI]   at (-8.3,9.6) {}; \node[anchor=west,font=\small] at (-7.9,9.6) {region~I};
  \node[swatch,regII]  at (-8.3,8.3) {}; \node[anchor=west,font=\small] at (-7.9,8.3) {region~II};
  \node[swatch,regIII] at (-8.3,7.0) {}; \node[anchor=west,font=\small] at (-7.9,7.0) {region~III};
  \node[swatch,regIV]  at (-8.3,5.7) {}; \node[anchor=west,font=\small] at (-7.9,5.7) {region~IV};
  \node[swatch,regV]   at (-8.3,4.4) {}; \node[anchor=west,font=\small] at (-7.9,4.4) {region~V};
\end{tikzpicture}

\smallskip
{\small
\begin{tabular}{@{}cllllll@{}}
\toprule
& Clones & ERM & VC dim. & PAC & Occam & SQ \\
\midrule
I   & $\Nc_*$, $\Ic_*$
    & $\Ptime$ & linear & yes & yes & yes \\
II  & $\Ec_*$, $\Vc_*$
    & no weak approximator & linear & yes & yes & yes \\
III & $\Lc_*$
    & no weak approximator & linear & yes & yes & no \\
IV  & $\Sc_{0*}^k$, $\Sc_{1*}^k$ $(k\ge2)$, $\Dc_2$
    & $\NP$-hard, $k$-approximable & exponential & no & none & no \\
V   & $\BF$, $\Rc_*$, $\Mc_*$, $\Sc_{0*}$, $\Sc_{1*}$, $\Dc$, $\Dc_1$
    & $\Ptime$ & exponential & no & none & no \\
\bottomrule
\end{tabular}}

\caption{Post's lattice, coloured by the regions the classifications cut it
into, with the status of each problem per region.  Fitting is omitted from the
table because it is in $\Ptime$ for every basis, with a fitting formula of
polynomial size (Theorem~\ref{thm:boolean-fitting-constructive}).  The Occam algorithms of
regions~I and~II are attribute-efficient, that of region~III is not known to be.  Under random
classification noise the last column is unchanged except in region~III, where
the question is open (Section~\ref{sec:noise}).  In region~IV the
$k$-approximation is optimal under the Unique Games Conjecture.  The negative
entries in the last three columns hold under the cryptographic assumptions of
Section~\ref{sec:pac}, except that in region~III the entry of the
last column is unconditional.  Dashed edges abbreviate the infinite
ascending families $\Sc_{0*}^{k}$ and $\Sc_{1*}^{k}$.}
\label{fig:post-lattice}
\end{figure}

The above results apply to fragments that are generated by sets of connectives, and that are for that
reason closed under substitution.  This excludes fragments such as Horn CNF, obtained by
restricting the shape of clauses rather than the supply of connectives.
Section~\ref{sec:constraints} recalls Boolean constraint languages as the
standard way of defining such fragments and records what becomes of the
questions above when fragments are defined this way.

\subsection*{Organization}

Section~\ref{sec:prelim} fixes notation, including the two size measures and the
table of clones.  Section~\ref{sec:fitting} treats fitting and
Section~\ref{sec:erm-regions} empirical risk minimization.
Section~\ref{sec:vc} then treats the VC dimension, Section~\ref{sec:pac} PAC
learning, Section~\ref{sec:occam} Occam algorithms, and Section~\ref{sec:noise}
learning under random classification noise and from statistical queries.
Section~\ref{sec:constraints} compares this picture with the one obtained when
fragments are given by constraint languages instead.

\subsection*{Acknowledgements}

I am grateful to Victor Dalmau and Peter Mayr for fruitful discussions.
Claude Fable was used in the process of writing this paper, both 
editorially and as a tool in the creative process. Several of the 
technical results were obtained with its help. 

\section{Preliminaries: Formulas, Circuits, Clones}\label{sec:prelim}

A \emph{basis} is a finite set $O$ of Boolean functions, each of fixed arity.
A \emph{$\PL_O$-formula} over the variables $x_1,\ldots,x_n$ is a finite tree
whose leaves are labeled by variables and whose internal nodes of arity $r$ are
labeled by $r$-ary members of $O$; we write $\PL_O$ for the class of these
formulas, and $[\varphi]$ for the Boolean function computed by $\varphi$.
A \emph{$\CIR_O$-circuit} is the same object with fan-out
allowed to exceed one, so that a subcomputation may be shared; we write
$\CIR_O$ for this class.

A \emph{clone} is a set of Boolean functions that contains all projections
$(x_1,\ldots,x_n)\mapsto x_i$ and is closed under composition.  For a set $O$
of Boolean functions, $[O]$ denotes the clone \emph{generated} by $O$, the
least clone containing $O$; it consists of the functions computed by
$\PL_O$-formulas, or equivalently by $\CIR_O$-circuits.  A finite set $O$
with $[O]=\Cc$ is a \emph{basis} of the clone $\Cc$.  Post~\cite{Post}
determined all Boolean clones: there are countably many, each has a finite
basis, and ordered by inclusion they form \emph{Post's lattice}, drawn in
Figure~\ref{fig:post-lattice}.  For finite bases $O,O'$, write
\[
  O\preceq O'
  \quad\Longleftrightarrow\quad
  [O]\subseteq[O'],
\]
and write $O\equiv O'$ when the two generated clones are equal.

Order $\{0,1\}^n$ coordinatewise, so that $\mathbf a\le\mathbf a'$ means
$a_i\le a'_i$ for every $i$.  Write $\zero$ and $\one$ for the all-zero and
the all-one tuple, and $\bar{\mathbf a}$ for the bitwise complement of
$\mathbf a$, that is, $\bar a_i=1-a_i$.  The \emph{upward closure} of a set
$A\subseteq\{0,1\}^n$ consists of the points above some member of $A$; a set
equal to its upward closure is an \emph{upset}, and \emph{downward closure}
and \emph{downset} are defined dually.  We write $\oplus^3$ for
the ternary parity operation $\oplus^3(x,y,z)=x\oplus y\oplus z$, and $f^{d}$
for the \emph{dual} of a function $f$, defined by
$f^{d}(\mathbf a)=\neg f(\bar{\mathbf a})$, and a function equal to its own dual
is \emph{self-dual}.  We use the standard notation for Post's lattice, as
in~\cite{BCRV,Lau}.

\begin{table}[tp]
\centering
{\small
\begin{tabular}{lll}
\toprule
Clone & Description & Representative basis \\
\midrule
$\BF$ & all Boolean functions & $\{\wedge,\neg\}$ \\
$\Rc_0$ & $0$-preserving, i.e.\ $f(\zero)=0$ & $\{\wedge,\oplus\}$ \\
$\Rc_1$ & $1$-preserving, i.e.\ $f(\one)=1$ & $\{\to,\wedge\}$ \\
$\Rc_2$ & $\Rc_0\cap\Rc_1$ & $\{\vee,\ x\wedge(y\leftrightarrow z)\}$ \\
\addlinespace
$\Mc$ & monotone & $\{\wedge,\vee,\bot,\top\}$ \\
$\Mc_0$ & $\Mc\cap \Rc_0$ & $\{\wedge,\vee,\bot\}$ \\
$\Mc_1$ & $\Mc\cap \Rc_1$ & $\{\wedge,\vee,\top\}$ \\
$\Mc_2$ & $\Mc\cap \Rc_2$ & $\{\wedge,\vee\}$ \\
\addlinespace
$\Sc_0$ & $0$-separating & $\{\to\}$ \\
$\Sc_{02}$ & $\Sc_0\cap \Rc_2$ & $\{x\vee(y\wedge\neg z)\}$ \\
$\Sc_{01}$ & $\Sc_0\cap \Mc$ & $\{x\vee(y\wedge z),\ \top\}$ \\
$\Sc_{00}$ & $\Sc_0\cap \Rc_2\cap \Mc$ & $\{x\vee(y\wedge z)\}$ \\
$\Sc_0^k$ & $0$-separating of degree $k$ & $\{\to,\ \thr^{k+1}_2\}$ \\
$\Sc_{02}^k$ & $\Sc_0^k\cap \Rc_2$ & $\{x\vee(y\wedge\neg z),\ \thr^{k+1}_2\}$ \\
$\Sc_{01}^k$ & $\Sc_0^k\cap \Mc$ & $\{x\vee(y\wedge z),\ \top,\ \thr^{k+1}_2\}$ \\
$\Sc_{00}^k$ & $\Sc_0^k\cap \Rc_2\cap \Mc$ & $\{x\vee(y\wedge z),\ \thr^{k+1}_2\}$ \\
\addlinespace
$\Sc_1$ & $1$-separating & $\{x\wedge\neg y\}$ \\
$\Sc_{12}$ & $\Sc_1\cap \Rc_2$ & $\{x\wedge(y\vee\neg z)\}$ \\
$\Sc_{11}$ & $\Sc_1\cap \Mc$ & $\{x\wedge(y\vee z),\ \bot\}$ \\
$\Sc_{10}$ & $\Sc_1\cap \Rc_2\cap \Mc$ & $\{x\wedge(y\vee z)\}$ \\
$\Sc_1^k$ & $1$-separating of degree $k$ & $\{x\wedge\neg y,\ \thr^{k+1}_k\}$ \\
$\Sc_{12}^k$ & $\Sc_1^k\cap \Rc_2$ & $\{x\wedge(y\vee\neg z),\ \thr^{k+1}_k\}$ \\
$\Sc_{11}^k$ & $\Sc_1^k\cap \Mc$ & $\{x\wedge(y\vee z),\ \bot,\ \thr^{k+1}_k\}$ \\
$\Sc_{10}^k$ & $\Sc_1^k\cap \Rc_2\cap \Mc$ & $\{x\wedge(y\vee z),\ \thr^{k+1}_k\}$ \\
\addlinespace
$\Dc$ & self-dual & $\{\maj,\neg\}$ \\
$\Dc_1$ & $\Dc\cap \Rc_2$ & $\{\maj(x,y,\neg z)\}$ \\
$\Dc_2$ & $\Dc\cap \Mc$ & $\{\maj\}=\{\thr^3_2\}$ \\
\addlinespace
$\Lc$ & affine & $\{\oplus,\top\}$ \\
$\Lc_0$ & $\Lc\cap \Rc_0$ & $\{\oplus\}$ \\
$\Lc_1$ & $\Lc\cap \Rc_1$ & $\{\leftrightarrow\}$ \\
$\Lc_2$ & $\Lc\cap \Rc_2$ & $\{\oplus^3\}$ \\
$\Lc_3$ & $\Lc\cap \Dc$ & $\{x\oplus y\oplus z\oplus\top\}$ \\
\addlinespace
$\Ec$ & conjunctions and constants & $\{\wedge,\bot,\top\}$ \\
$\Ec_0$ & $\Ec\cap \Rc_0$ & $\{\wedge,\bot\}$ \\
$\Ec_1$ & $\Ec\cap \Rc_1$ & $\{\wedge,\top\}$ \\
$\Ec_2$ & $\Ec\cap \Rc_2$ & $\{\wedge\}$ \\
\addlinespace
$\Vc$ & disjunctions and constants & $\{\vee,\bot,\top\}$ \\
$\Vc_0$ & $\Vc\cap \Rc_0$ & $\{\vee,\bot\}$ \\
$\Vc_1$ & $\Vc\cap \Rc_1$ & $\{\vee,\top\}$ \\
$\Vc_2$ & $\Vc\cap \Rc_2$ & $\{\vee\}$ \\
\addlinespace
$\Nc$ & projections, negations and constants & $\{\neg,\bot,\top\}$ \\
$\Nc_2$ & projections and negations & $\{\neg\}$ \\
\addlinespace
$\Ic$ & projections and constants & $\{\bot,\top\}$ \\
$\Ic_0$ & $\Ic\cap\Rc_0$: projections and $\bot$ & $\{\bot\}$ \\
$\Ic_1$ & $\Ic\cap\Rc_1$: projections and $\top$ & $\{\top\}$ \\
$\Ic_2$ & projections only & $\varnothing$ \\
\bottomrule
\end{tabular}}
\caption{The Boolean clones, with a basis for each; $k$ ranges over the
integers $k\ge2$.  The bases follow those tabulated by B\"ohler, Creignou,
Reith and Vollmer~\cite{BCRV}, up to inessential variation, and each is one
choice among many.}
\label{tab:clones}
\end{table}

Table~\ref{tab:clones} lists the clones, with a description and a basis for
each.  The list is complete~\cite{Post,Lau}.

An asterisk abbreviates a family of clones sharing a letter: $\Rc_*$ denotes
$\Rc_0,\Rc_1,\Rc_2$, and $\Mc_*$, $\Lc_*$, $\Ec_*$, $\Vc_*$, $\Nc_*$, $\Ic_*$ and $\Dc_*$ likewise
denote the clone named by the letter together with all of its subscripted
variants, so that $\Mc_*$ is $\Mc,\Mc_0,\Mc_1,\Mc_2$ and $\Lc_*$ is $\Lc,\Lc_0,\Lc_1,\Lc_2,\Lc_3$.
On the separating side $\Sc_{0*}$ denotes $\Sc_0,\Sc_{02},\Sc_{01},\Sc_{00}$ and $\Sc_{1*}$
denotes $\Sc_1,\Sc_{12},\Sc_{11},\Sc_{10}$, with $\Sc_{0*}^k$ and $\Sc_{1*}^k$ the
corresponding degree-$k$ families.

The convention for the $\Sc$-families is the following.  A Boolean function is
$0$-separating if the set $f^{-1}(0)$ has a common zero coordinate; it is
$0$-separating of degree $k$ if every list of $k$ elements of $f^{-1}(0)$,
with repetitions allowed, has a common zero coordinate.  Equivalently, every
nonempty subset of at most $k$ distinct zero inputs has a common zero
coordinate.  The $1$-separating notions are dual.

For a formula $\varphi$ we write $\size{\varphi}$ for its number of nodes
and $\dep(\varphi)$ for its depth, with a single leaf having depth $0$.  For
a circuit $C$, $\size{C}$ is its number of gates and inputs.  Unfolding a
circuit into a formula can increase size exponentially --- indeed for
$O=\{\wedge,\vee,\top,\bot\}$, $\CIR_O$-circuits are super-polynomially more
succinct than $\PL_O$-formulas~\cite{KarchmerWigderson,RazMcKenzie} --- and
no converse blow-up is possible, so circuit size is the more generous
measure: a statement proved for it is weaker than the corresponding one for
formula size when it asserts hardness, and stronger when it asserts an upper
bound.  In certain cases circuits can nevertheless be translated to formulas
in polynomial time.  If $[O]\subseteq \Ec$, $[O]\subseteq \Vc$ or
$[O]\subseteq \Lc$, the $n$-ary functions of $[O]$ are the conjunctions, the
disjunctions or the affine functions of the variables, together with
whichever truth constants the clone contains.  Which of them a given
$\CIR_O$-circuit computes is read off from its values at $n+1$ points, and
the corresponding $\PL_O$-formula has at most $n+2$ leaves.

\section{Fitting}\label{sec:fitting}

Fix a finite set of propositional variables $x_1,\ldots,x_n$, identified
with the coordinates $[n]=\{1,\ldots,n\}$, and write
$\mathbf x=(x_1,\ldots,x_n)$ for the tuple of them.  
A \emph{labeled sample} is a finite multiset
\[
  E \subseteq_{\mathrm{multi}} \{0,1\}^n \times \{0,1\},
\]
which we also write as a list $(\mathbf a_1,b_1),\ldots,(\mathbf a_m,b_m)$
of labeled examples.  A formula $\varphi\in\PL_O$ \emph{fits} $E$ if
$\varphi(\mathbf a_j)=b_j$ for $j=1,\ldots,m$, and $E$ is \emph{realizable}
for $\PL_{O}$ if some $\varphi\in\PL_O$ fits it.  The \emph{fitting problem} for
$\PL_O$ asks, given $E$, to decide whether $E$ is realizable and, if so, to
return a fitting formula.  The same definitions apply also to $\CIR_O$. 
Note that a labeled sample is realizable for $\PL_O$ iff it is
realizable for $\CIR_O$.  Since this depends on $O$ only through $[O]$, we
also say that $E$ is \emph{realizable in the clone} $[O]$.

The fitting problem has an algebraic form that is well studied in the literature,
and we lift our terminology to it.  A \emph{finite algebra} $\mathbf A=(A;F)$
is a finite set $A$ together with a finite set $F$ of finitary operations on
$A$.  \emph{Terms} over $F$ are built from variables and the operations of $F$
in the usual way, and a term $\tau(x_1,\ldots,x_n)$ induces an $n$-ary
\emph{term operation} $\tau^{\mathbf A}$ on $A$; for
$\mathbf A=(\{0,1\};O)$ the terms are the $\PL_O$-formulas and the term
operations are the functions of $[O]$.  A \emph{labeled sample over
$\mathbf A$} consists of examples $(\mathbf a_j,b_j)\in A^n\times A$, a term
$\tau$ \emph{fits} it if $\tau^{\mathbf A}(\mathbf a_j)=b_j$ for every $j$,
and it is \emph{realizable} if some term fits it.  Read columnwise, fitting
asks whether the column of labels lies in
the subalgebra of $\mathbf A^m$ generated by the columns of the variables.
This is the \emph{subpower membership problem} for $\mathbf A$, the subject
of~\cite{Mayr,BMS}, whose complexity is open in general.  A
\emph{near-unanimity operation} of arity $d\ge3$ is one returning $x$
whenever at least $d-1$ of its arguments are $x$, and a \emph{near-unanimity
term} of $\mathbf A$ is a term inducing such an operation; on $\{0,1\}$ the
majority operation and the thresholds $\thr^{k+1}_2$ and $\thr^{k+1}_k$ are
examples of near-unanimity operations.  The Baker--Pixley
theorem~\cite{BakerPixley} solves the subpower membership problem for algebras
with a near-unanimity term.  We need a slight strengthening of this result.

\begin{proposition}[Efficient Baker--Pixley interpolation]\label{prop:nu-interpolation}
Fix a finite algebra $\mathbf A$ with a near-unanimity term $\nu$ of arity
$d\ge3$, and let $k=d-1$.
\begin{enumerate}
\item A labeled sample $E$ over $\mathbf A$ is realizable if and only if every
      $E'\subseteq E$ with $|E'|\le k$ is, which can be tested in polynomial
      time.
\item From a realizable labeled sample $E$ one can moreover compute in
      polynomial time a fitting term of depth logarithmic in $|E|$.
\end{enumerate}
\end{proposition}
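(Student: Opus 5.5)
The plan is to prove both parts together with one divide-and-conquer construction, a balanced version of the classical Baker--Pixley induction. The ``only if'' direction of part~1 is trivial, because a term fitting $E$ also fits every $E'\subseteq E$. For the converse, assume every $E'\subseteq E$ with $|E'|\le k$ is realizable. We then build a fitting term for $E$ recursively, and the construction itself will yield part~2.

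\emph{Base case ($|E|\le k$).} Read the sample columnwise in $A^{|E|}$: the variable $x_i$ gives the column $\mathbf c_i=(a_{1i},\ldots,a_{|E|i})$, and the labels give a column $\mathbf b$. There are at most $|A|^k$ distinct columns $\mathbf c_i$, and this is a constant. We compute the subuniverse of $\mathbf A^{|E|}$ that they generate by the usual closure iteration under the finitely many operations of $\mathbf A$, recording a term for each new element. This takes time polynomial in $n$ and gives terms of constant depth and size, at most $|A|^k$ rounds of the iteration. Realizability of $E$ is exactly the condition $\mathbf b\in$ subuniverse, so this also settles the test in part~1: we run the procedure on all $O(|E|^k)$ subsets of size at most $k$, which is polynomial because $k$ is fixed.

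\emph{Recursive step ($|E|>k$).} Split $E$, as a list, into $d$ blocks $B_1,\ldots,B_d$ of sizes as equal as possible. Recursively obtain a term $t_i$ fitting $E\setminus B_i$; this is legitimate because every subset of $E\setminus B_i$ of size at most $k$ is also a subset of $E$, so the hypothesis is inherited. Output $\nu(t_1,\ldots,t_d)$. This term fits $E$: an example lies in exactly one block $B_i$, so all the $t_j$ with $j\neq i$ give the correct label, which means at least $d-1$ of the $d$ arguments agree. The near-unanimity identity then forces $\nu$ to return that label, regardless of what $t_i$ gives.

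\emph{Complexity.} Let $m=|E|$. Each recursive call shrinks the sample to at most $\lceil m(d-1)/d\rceil$ examples. When $m>k=d-1$, and hence $m\ge d$, this is strictly less than $m$, so the recursion terminates. The recursion depth is $h=O(\log_{d/(d-1)} m)=O(\log m)$, and adding the constant depth of the base terms gives depth $O(\log m)$. The recursion tree has branching factor $d$, so it has at most $d^{h}=m^{O(\log d/\log(d/(d-1)))}$ nodes. This is polynomial in $m$ because $d$ is a constant. Each node does only polynomial work in $n$ and $m$, and the output term has polynomial size.

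The step that needs the most care is this size bound. The textbook Baker--Pixley induction removes one example at a time, which gives linear depth and exponential size. The balanced partition into $d$ blocks is exactly what turns the bound into a polynomial with a constant, though possibly large, exponent. If needed, the exponent can be made explicit by analysing the recurrence $T(m)=d\,T(\lceil (d-1)m/d\rceil)+O(1)$.
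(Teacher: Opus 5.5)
Your proposal is correct and is essentially the paper's own proof. It uses the same bounded closure computation on windows of at most $k$ examples, the same split into $d$ near-equal blocks with $\nu$ applied to the fits of the $d$ complements, and the same $O(\log m)$ depth and $d^{O(\log m)}$ recursion-tree bounds. One small point the paper makes explicit and you gloss over: $\nu$ is a term that may repeat its variables, so the tree size of the output obeys a recurrence with a factor of up to $(d+1)\size{\nu}$ per level rather than $d$. That factor is still a constant, so the polynomial size bound stands; equivalently, bounded arity together with logarithmic depth already forces polynomial size.
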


\begin{proof}
Item~1 is the content of the classic Baker--Pixley theorem~\cite{BakerPixley}.
We prove both items here, with a slight refinement of the classic argument:
we use a divide-and-conquer strategy to control the size and depth of the
constructed term.

Let $E$ be a labeled sample over $\mathbf A$, consisting of the labeled
examples
\[
  (\mathbf a_1,b_1),\ldots,(\mathbf a_m,b_m)\in A^n\times A.
\]
By a \emph{window} we mean a subset $S\subseteq[m]$, and we write $E|_S$ for
the sub-sample $\{(\mathbf a_j,b_j):j\in S\}$, so that the sub-samples
$E'\subseteq E$ of item~1 are the $E|_S$ with $|S|\le k$.  If $E$ is realizable then so is
every sub-sample, by the same term; this is the easy direction of item~1.

Fix a window $S$ with $|S|\le k$.  Every term induces a labeling of the points
$\mathbf a_j$, $j\in S$, that is, an element of $A^S$, and the labelings so
induced are obtained from those induced by the variables $x_1,\ldots,x_n$ by
closing under the operations of $F$, applied pointwise.  There are at most
$|A|^k$ labelings of these points altogether, so at most $|A|^k$ of the
variables induce distinct ones, the closure is reached in a bounded number of
rounds, and every labeling in it is induced by a term of bounded size, the
bound depending only on $\mathbf A$.  Hence, in time $O(n)$ plus a constant
depending on $\mathbf A$, we can test whether a term fitting $E|_S$ exists
and, if so, compute one of bounded size, which we call $\tau_S$.  There are
$O(m^k)$ windows with $|S|\le k$, so all of this takes $O(m^k n)$ time.

Suppose now that for every window $S$ with $|S|\le k$ a fitting term $\tau_S$
exists.  We extend the definition of $\tau_S$ to every window $S\subseteq[m]$
by recursion on $|S|$, in such a way that $\tau_S$ fits $E|_S$.  If $|S|>k$
then $|S|\ge d$.  Partition $S$ into $d$ nonempty blocks $B_1,\ldots,B_d$ whose
sizes differ by at most one, and let
\[
  \tau_S:=\nu\bigl(\tau_{S\setminus B_1},\ldots,\tau_{S\setminus B_d}\bigr).
\]
Each $S\setminus B_i$ is a proper subset of $S$, so the recursion terminates,
and by induction $\tau_S$ fits $E|_S$: an index $j\in S$ lies in exactly one
block $B_i$, so the $d-1$ arguments $\tau_{S\setminus B_{i'}}$ with $i'\ne i$
take the value $b_j$ at $\mathbf a_j$, and the near-unanimity operation
$\nu^{\mathbf A}$ returns that value whatever the remaining argument computes
there.  Hence $\tau_{[m]}$ fits $E$.  In particular a sample all of whose
sub-samples of at most $k$ examples are realizable is realizable, which
completes the proof of item~1.

It remains to show that the construction runs in polynomial time and that
$\tau_{[m]}$ has depth logarithmic in $m$.  If $|S|=s>k$ then every block has
at least $\lfloor s/d\rfloor$ elements, so $|S\setminus B_i|<(1-1/d)s+1$,
which is at most $(1-1/2d)\,s$ once $s\ge2d$.  The recursion therefore has
depth $O(\log m)$, with a constant depending only on $d$.  Each level of the
recursion contributes the depth of $\nu$, and the terms at the bottom have
bounded depth, so $\tau_{[m]}$ has depth $O(\log m)$.  The recursion tree has
$d^{O(\log m)}=m^{O(1)}$ nodes, and the term built at a node has size at most
$\size{\nu}$ times one plus the sizes of its arguments, so sizes grow by at most
a factor $(d+1)\size{\nu}$ per level and every term involved, $\tau_{[m]}$
included, has size $m^{O(1)}$.  The construction therefore runs in polynomial
time.
\end{proof}

\thmfitting*

\begin{proof}[Proof of Theorem~\ref{thm:boolean-fitting-constructive}]
Let $\Cc=[O]$.  Since $O$ is fixed, we may freely replace a fixed operation of
$\Cc$ by a fixed $\PL_O$-formula defining it.  The following table places every
clone of Post's lattice in one of five cases; a clone may satisfy several, and
only coverage is needed.

\begin{center}
\begin{tabular}{cl}
\toprule
Case & Clones \\
\midrule
(1) & $\BF$, $\Rc_*$, $\Mc_*$, $\Dc_*$, and $\Sc_{0*}^k,\Sc_{1*}^k$ for every $k\ge2$\\
(2) & $\Lc_*$ \\
(3) & $\Nc_*$, $\Ic_*$ \\
(4) & $\Ec_*$, $\Vc_*$ \\
(5) & $\Sc_{0*}$, $\Sc_{1*}$ \\
\bottomrule
\end{tabular}
\end{center}

\begin{enumerate}[label=(\arabic*)]
\item $\Cc$ has a near-unanimity term.  This includes $\BF$, $\Rc_*$, $\Mc_*$,
$\Dc$, $\Dc_1$ and $\Dc_2$, all of which contain $\maj=\thr^3_2$, and the clones
$\Sc_{0*}^k$ and $\Sc_{1*}^k$, which contain $\thr^{k+1}_2$ respectively
$\thr^{k+1}_k$.  Proposition~\ref{prop:nu-interpolation}, applied to the
algebra $(\{0,1\};O)$, decides fitting and in the positive case returns
a fitting $\PL_O$-formula of polynomial size and logarithmic depth.

\item $\Cc\subseteq\Lc$.  The $n$-ary functions of $\Cc$ are the affine
functions $c\oplus\bigoplus_{j\in J}x_j$ whose constant $c$ and support $J$
satisfy the at most two linear conditions over $\Ftwo$ that single out
$\Cc$ among $\Lc$, $\Lc_0$, $\Lc_1$, $\Lc_2$ and $\Lc_3$: that $c=0$, that
$c\oplus|J|\equiv1$, or that $|J|$ be odd.  Fitting is therefore a system of
linear equations over $\Ftwo$ in the unknowns $c$ and the indicator
vector of $J$, with one equation per example and those conditions, which
Gaussian elimination solves in polynomial time.  A solution is written as a
chain of connectives of $O$ over the variables of $J$, a $\PL_O$-formula with
at most $n+2$ leaves.

\item $O\preceq\{\neg,\top,\bot\}$.  This is the essentially unary case.
One tries the finitely many available unary functions on each variable, together
with the available truth constants.

\item $O\preceq\{\wedge,\top,\bot\}$ or $O\preceq\{\vee,\top,\bot\}$.  This is
the semilattice case.  For conjunctions the standard greedy construction
applies: keep exactly those variables whose columns are compatible with all
positive labels and take their conjunction, with the appropriate
truth-constant side conditions for proper subclones.  Disjunctions are dual.

\item It remains to handle the ordinary separating clones.  We give the
argument for the $1$-separating side; the $0$-separating side follows by
duality.  For $\Cc\in \Sc_{1*}$, let $\Kc_{\Cc}$ be given by
\[
\begin{array}{c|cccc}
 \Cc&\Sc_1&\Sc_{12}&\Sc_{11}&\Sc_{10}\\
 \hline
 \Kc_{\Cc}&\BF&\Rc_1&\Mc&\Mc\cap \Rc_1.
\end{array}
\]
Write $\mathbf g_i\in\{0,1\}^m$ for the column of values of the variable
$x_i$ on the sample and $\mathbf b=(b_1,\ldots,b_m)$ for the column of labels.
Every $f\in \Cc$ is bounded above by one of its inputs.  Hence a necessary
condition for fitting is that some variable column $\mathbf g_i$ dominates the
label column $\mathbf b$.  If no such column exists, reject.  Otherwise, choose
any such $i$, discard the rows on which $\mathbf g_i=\zero$, and set $x_i=1$ in
the remaining
rows.  If the original sample has a $\Cc$-fit $f$, then $f|_{x_i=1}$ belongs to
$\Kc_{\Cc}$ and fits this residual sample.  Conversely, if $u\in \Kc_{\Cc}$ fits the
residual sample, then $x_i\wedge u$ belongs to $\Cc$ and fits the original
sample.  All four clones $\Kc_{\Cc}$ contain the majority operation, so item~(1)
decides the residual problem and constructs a residual formula of polynomial
size and logarithmic depth.

To translate this formula back to the original basis, fix a basis $B_\Cc$ of
$\Kc_{\Cc}$ and, for each $\gamma\in B_\Cc$, use the guarded gate
$\widehat\gamma(x,\mathbf z)=x\wedge \gamma(\mathbf z)$.  This operation belongs to
$\Cc$ and therefore has a fixed $\PL_O$-formula.  Replacing each residual
connective by this fixed formula, and each variable $y$ by a fixed
$\PL_O$-formula for $x_i\wedge y$, maintains the invariant that the translated
node computes $x_i\wedge u$.  Each substitution multiplies size by at most a
constant per level, and the depth is logarithmic, so the result is a
$\PL_O$-formula of polynomial size.

For the canonical basis $f_{\wedge\vee}(x,y,z)=x\wedge(y\vee z)$ of $\Sc_{10}$,
the gadgets for a guarded leaf, disjunction, and conjunction are respectively
$f_{\wedge\vee}(x_i,y,y)$, $f_{\wedge\vee}(x_i,u,v)$ and
$f_{\wedge\vee}(u,v,v)$, where $u,v$ are already guarded.  For
$f_{\wedge\to}(x,y,z)=x\wedge(y\to z)$, a basis of $\Sc_{12}$, the corresponding
gadgets are $f_{\wedge\to}(x_i,x_i,y)$, $f_{\wedge\to}(x_i,u,v)$ for implication
and $f_{\wedge\to}(u,u,v)$ for conjunction.

The bases of the four residual clones also contain truth constants, and these
need guarded gates of their own.  We have $\Kc_{\Sc_{10}}=\Mc\cap \Rc_1=\Mc_1$, with
basis $\{\vee,\wedge,\top\}$, and $\Kc_{\Sc_{11}}=\Mc=[\{\vee,\wedge,\bot,\top\}]$.  The
guarded gates for the constants are
\[
  \widehat\top(x)=x\wedge\top=x,
  \qquad
  \widehat\bot(x)=x\wedge\bot=\bot,
\]
both of which lie in $\Cc$ and satisfy the invariant.  For $\Kc_{\Sc_{12}}=\Rc_1$ the two
gadgets above suffice, $\{\to,\wedge\}$ being the basis of $\Rc_1$ in
Table~\ref{tab:clones}.  For $\Kc_{\Sc_1}=\BF$ take the basis $\{\wedge,\neg\}$, whose guarded gates are
$x\wedge(y\wedge z)$ and $x\wedge\neg y$, both of which lie in $\Sc_1$.  The dual
construction handles the ordinary $0$-separating clones.
\end{enumerate}

With dual cases folded in, the table above shows these cases to be
exhaustive, giving a polynomial-time construction of a fitting
$\PL_O$-formula of polynomial size whenever fitting is possible.  A formula is
a circuit of the same size, so the statement for $\CIR_O$ follows.
\end{proof}

\section{Empirical risk minimization}\label{sec:erm-regions}

The \emph{empirical risk} of a formula $\varphi$ on a labeled sample $E$
is the fraction of misclassified examples,
\[
  \operatorname{err}_E(\varphi)=\frac{\left|\{(\mathbf a,b)\in E : \varphi(\mathbf a)\ne b\}\right|}{|E|},
\]
and the empirical risk minimization problem for $\PL_O$, ERM for $\PL_O$ for
short, asks for a formula attaining the optimal value:
given $E$, return a $\varphi\in\PL_O$ with
\[
  \operatorname{err}_E(\varphi)=\opt_O(E)
  :=\min_{\psi\in \PL_O}\operatorname{err}_E(\psi).
\]
The sample size is fixed within an instance, so minimizing this fraction is
the same as minimizing the number of mistakes, and the arguments below count
mistakes where that is more convenient.  The associated decision problem
asks, given $E$ and $t$, whether there exists $\varphi\in \PL_O$ with at most
$t$ mistakes.  Since the hypothesis is part of the output, the problem
depends on how that hypothesis may be written, and we distinguish two
versions: ERM for $\PL_O$, where the algorithm returns a $\PL_O$-formula, and
ERM for $\CIR_O$, where it may return a $\CIR_O$-circuit.  The optimum value
$\opt_O(E)$ is the same in both, since $\PL_O$ and $\CIR_O$ define the same
functions; only the output differs.  The same distinction applies to the
weak approximators of the introduction and to the approximation algorithms
below.

This section proves the classification from the introduction, which we restate.

\thmerm*

For circuits it is clear that the problem depends on $O$ only through the
clone $[O]$: if $[O]=[O']$ then every member of $O'$ is computed by some
fixed $\CIR_O$-circuit, so replacing each gate turns a $\CIR_{O'}$-circuit
into a $\CIR_O$-circuit of size larger by at most a constant factor, and
symmetrically.  For formulas no such translation is known in general, so
the basis, and not merely the clone, could in principle matter.  Nevertheless
the theorem shows that the complexity of ERM for $\PL_O$ too depends only on
$[O]$.  The explanation lies in Theorem~\ref{thm:boolean-fitting-constructive},
which shifts the algorithmic content of the problem from formulas to
relabelings of the sample.

A \emph{relabeling} of a labeled sample $E$ assigns a label to every point of
$\{0,1\}^n$ that occurs in $E$.  It is \emph{realizable} in a clone $\Cc$ if
some function of $\Cc$ takes the assigned value at every sample point.  Its
\emph{mistakes} on $E$ are the examples $(\mathbf a,b)\in E$ whose label $b$
differs from the label assigned to $\mathbf a$, and its empirical risk is the
fraction of mistakes, as for formulas.  Every formula in $\PL_O$ or circuit in
$\CIR_O$ induces, by evaluation at the sample points, a relabeling realizable
in $[O]$ with the same mistakes.  Conversely, a relabeling realizable in
$[O]$, read as a labeled sample, is realizable for $\PL_O$, so
Theorem~\ref{thm:boolean-fitting-constructive} computes from it in polynomial
time a $\PL_O$-formula, or a $\CIR_O$-circuit, with exactly those mistakes.
Hence ERM for $\PL_O$ and ERM for $\CIR_O$ are both equivalent, in polynomial
time and with every approximation ratio preserved, to what we call \emph{ERM
for the clone} $\Cc=[O]$: given $E$, find a relabeling of $E$ that is
realizable in $\Cc$ and has the fewest mistakes.  We write $\opt_\Cc(E)$ for
the least empirical risk of such a relabeling, so that
$\opt_\Cc(E)=\opt_O(E)$ for every basis $O$ of $\Cc$, and the notions of
$\alpha$-approximation and of weak approximator carry over verbatim, with a
realizable relabeling in place of the formula returned.

The rest of the section treats ERM for clones: the algorithms return
realizable relabelings, and the hardness results bound the mistakes of every
function of the clone on the samples they construct.  No basis appears
anywhere, which is why the classification depends on $O$ only through $[O]$
and holds for formulas and circuits alike, as the final clause of the theorem
asserts.

We prove the theorem region by region, keeping algorithms, hardness results
and approximation guarantees for a region together; the standard enumeration
of Post's lattice shows that the families considered below are
exhaustive~\cite{Post,Lau}.  In clone names, the interval in regime~(ii) is
$\thr^{k+1}_2\in\Cc\subseteq\Sc_0^k$, or dually
$\thr^{k+1}_k\in\Cc\subseteq\Sc_1^k$, using the bases for $\Sc_0^k$ and
$\Sc_1^k$ from Table~\ref{tab:clones}; by inspection of Post's
lattice~\cite{Post,Lau}, the clones in these intervals are exactly the eight
degree-$k$ separating clones $\Sc_{0*}^k$ and $\Sc_{1*}^k$, together with the
majority clone $\Dc_2=[\maj]=[\thr^3_2]$ when $k=2$.

Two conventions are used throughout.  First, we freely use weighted
examples, that is, repeated ones.  Given a labeled sample $E$, for each point
$\mathbf a\in\{0,1\}^n$ let
\[
  m_+(\mathbf a)=\#\{(\mathbf a,1)\in E\},\qquad
  m_-(\mathbf a)=\#\{(\mathbf a,0)\in E\}.
\]
A relabeling that assigns $1$ to $\mathbf a$ makes $m_-(\mathbf a)$ mistakes
there, and one that assigns $0$ makes $m_+(\mathbf a)$, so ERM for a clone is
a weighted labeling problem on the finite set of sample points.  Conversely,
weights can be imposed by repetition: if $W>|E|$ and we add $W$ copies of
$(\mathbf a,b)$, then every optimal relabeling of the enlarged instance
assigns $b$ to $\mathbf a$, provided some relabeling realizable in the clone
satisfies all the imposed labels.  Weight $W$ always abbreviates $W$ repeated
copies, and this large-penalty forcing is how boundary conditions are
imposed below.  Second,
an algorithm is an $\alpha$-approximation for ERM for $\Cc$ if it returns a
relabeling realizable in $\Cc$ of empirical risk at most
$\alpha\cdot\opt_\Cc(E)$.  This multiplicative guarantee differs both from an
additive one, $\opt_\Cc(E)+\varepsilon$, and from weak approximation.  The
trivial baseline behind weak approximation presumes the truth constants, which
clones such as $\Ec_2=[\wedge]$ and $\Lc_2=[\oplus^3]$ lack.  This weakens
nothing, since the constant-adjunction reduction below supplies them without
changing the sample size or the empirical risk.

We begin with that reduction, which lets the truth constants be assumed
available.  For a clone $\Cc$, let $\Cc^{\pm}$ be the clone generated by
$\Cc$ together with $\top$ and $\bot$.

\begin{lemma}\label{lem:constant-adjunction}
For every clone $\Cc$, ERM for $\Cc^{\pm}$ reduces in polynomial time to ERM
for $\Cc$.  More precisely, from a sample $E$ one constructs in polynomial
time a sample $E'$ with $|E'|=|E|$ whose relabelings realizable in $\Cc$
correspond bijectively, with the same mistakes, to the relabelings of $E$
realizable in $\Cc^{\pm}$.
\end{lemma}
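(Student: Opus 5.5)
The plan is to supply the two truth constants as two fresh variables pinned to the values $0$ and $1$, so that the sample size and the labels stay untouched. Given $E$ over $x_1,\ldots,x_n$, let $E'$ be the sample over $x_1,\ldots,x_{n+2}$ that replaces each example $(\mathbf a,b)$ by $((\mathbf a,0,1),b)$. Then $|E'|=|E|$. The map $\mathbf a\mapsto(\mathbf a,0,1)$ is injective, so the sample points of $E$ and of $E'$ correspond bijectively. Under this correspondence $m_\pm(\mathbf a)=m_\pm(\mathbf a,0,1)$. Hence relabelings of $E$ and of $E'$ correspond bijectively with exactly the same mistakes. The construction is clearly polynomial, and it does not even depend on $\Cc$.

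It remains to show that a relabeling of $E$ is realizable in $\Cc^{\pm}$ if and only if the corresponding relabeling of $E'$ is realizable in $\Cc$. The key step is the following description of $\Cc^{\pm}$:
\[
  \Cc^{\pm}=\{\,\mathbf x\mapsto f(\mathbf x,0,1) : f\in\Cc\,\}.
\]
Call the right-hand side $\Cc'$. I would prove the two inclusions separately.

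For $\Cc'\subseteq\Cc^{\pm}$, note that $f(\mathbf x,0,1)$ is a composition of $f\in\Cc$ with projections and the constants $\bot,\top$, hence lies in $\Cc^{\pm}$.

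For $\Cc^{\pm}\subseteq\Cc'$, I would show that $\Cc'$ is a clone containing $\Cc$, $\bot$ and $\top$.
\begin{itemize}
\item It contains every $g\in\Cc$: take $f(\mathbf x,y,z)=g(\mathbf x)$, which is in $\Cc$ since clones are closed under adding dummy variables via projections.
\item It contains the constants: take $f$ to be the projections onto $y$ and onto $z$.
\item It is closed under composition: if $h(\mathbf u,y,z)\in\Cc$ and $g_1,\ldots,g_r\in\Cc$ are in $(\mathbf x,y,z)$, then $h(g_1(\mathbf x,y,z),\ldots,g_r(\mathbf x,y,z),y,z)\in\Cc$. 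Specializing at $y=0,z=1$ gives exactly the composite of the specialized functions.
\end{itemize}

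With the description of $\Cc^{\pm}$ in hand, the correspondence is immediate. A function of $\Cc$ realizing a relabeling of $E'$ gives, by the substitution $y=0,z=1$, a function of $\Cc^{\pm}$ taking the same values at the corresponding points of $E$. Conversely, any function of $\Cc^{\pm}$ has the form $f(\mathbf x,0,1)$ with $f\in\Cc$, and $f$ realizes the corresponding relabeling of $E'$.

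Consequently an optimal, or $\alpha$-approximate, realizable relabeling for $E'$ in $\Cc$ translates back into one for $E$ in $\Cc^{\pm}$ with the same mistakes. This is the required reduction, and it preserves every approximation ratio as well as the weak-approximation thresholds. The only step with real content is the characterization of $\Cc^{\pm}$, specifically verifying that $\Cc'$ is closed under composition. The one subtlety is to keep the shared variables $y,z$ common to all the $g_i$ when composing, as done above.
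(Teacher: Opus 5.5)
Your proposal is correct and follows the paper's proof: the same two fresh coordinates pinned to $0$ and $1$, and the same characterization $\Cc^{\pm}=\{\mathbf x\mapsto f(\mathbf x,0,1):f\in\Cc\}$. The only difference is that you verify this characterization semantically, by checking that the right-hand side is a clone containing $\Cc$, $\bot$ and $\top$. The paper gets it syntactically, by replacing each occurrence of $\bot$ and $\top$ in a composition with the fresh variables, and the two arguments amount to the same thing.
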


\begin{proof}
Let $E$ be a sample over the variables $x_1,\ldots,x_n$.  Introduce two fresh
variables $z_0,z_1$, and replace every labeled example $(\mathbf a,b)$ by the
extended example $(\mathbf a',b)$ with
\[
  \mathbf a'(x_i)=\mathbf a(x_i),\qquad \mathbf a'(z_0)=0,
  \qquad \mathbf a'(z_1)=1.
\]
Call the transformed sample $E'$.  The map $\mathbf a\mapsto\mathbf a'$ is a
bijection between the points of $E$ and those of $E'$, so relabelings of $E$
correspond to relabelings of $E'$ with the same mistakes, and it remains to see
that this correspondence respects realizability.  Every function of
$\Cc^{\pm}$ is obtained by composing functions of $\Cc$, projections and the
two constants.  Replacing each occurrence of $\bot$ and $\top$ by $z_0$ and
$z_1$ turns an $n$-ary such function $f$ into an $(n+2)$-ary function
$g\in\Cc$ with $f(\mathbf x)=g(\mathbf x,0,1)$, and conversely every function
of this form lies in $\Cc^{\pm}$.  Since $g(\mathbf a')=g(\mathbf a,0,1)$, a
relabeling of $E$ is realized in $\Cc^{\pm}$ by $g(\mathbf x,0,1)$ if and only
if the corresponding relabeling of $E'$ is realized in $\Cc$ by $g$.
\end{proof}

Thus, adjoining constants cannot make ERM harder.  On the other hand,
adjoining constants can make ERM easier: as we will see, ERM for $\Dc_2$ is
$\NP$-hard, while $\Dc_2^{\pm}=\Mc$ is tractable by
Proposition~\ref{prop:erm-base} --- so the lemma transfers hardness only
downward, from $\Cc^{\pm}$ to $\Cc$.

\subsection{The inapproximable cases}\label{sec:weak-agnostic}

Each of the two results below concerns the clone at the top of one of the
three intervals of regime~(i), which is where its source leaves it; the
passage to the smaller clones of an interval is uniform and is deferred to
Section~\ref{sec:erm-assembly}.

\begin{theorem}[H\aa stad~\cite{Hastad}]
\label{thm:weak-agnostic}
Unless $\Ptime=\NP$, there is no polynomial-time weak approximator for the
clone $\Lc=[\oplus,\top,\bot]$.
\end{theorem}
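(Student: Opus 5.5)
The plan is to reduce from H\aa stad's optimal inapproximability theorem for $\mathrm{Max}$-$\mathrm{E3Lin}_2$: for every $\eta>0$, given a system of $m$ linear equations over $\Ftwo$, each of the form $x_{i}\oplus x_{j}\oplus x_{l}=c$, it is $\NP$-hard to distinguish systems in which some assignment satisfies at least a $(1-\eta)$ fraction of the equations from systems in which every assignment satisfies at most a $(1/2+\eta)$ fraction. The point is that ERM for $\Lc$ is essentially the same problem read transposed. An affine function $c\oplus\bigoplus_{j\in J}x_j$ is determined by the unknowns $(c,\mathbf 1_J)\in\Ftwo^{n+1}$. Evaluating it at a sample point $\mathbf a$ gives a linear form in these unknowns whose coefficients are the bits of $\mathbf a$, together with the coefficient $1$ on $c$. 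Requiring the label $b$ is then one linear equation.

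Concretely, given a 3-LIN instance over unknowns $y_1,\ldots,y_N$, I will build a labeled sample over variables $x_1,\ldots,x_N$. The equation $y_i\oplus y_j\oplus y_l=c$ becomes the example $(\mathbf e_i+\mathbf e_j+\mathbf e_l,\ c)$, where the point is the indicator vector of $\{i,j,l\}$. A formula computing $d\oplus\bigoplus_{j\in J}x_j$ labels this example correctly iff $d\oplus[i\in J]\oplus[j\in J]\oplus[l\in J]=c$. The constant $d$ spoils the correspondence, so I would neutralise it. One way is to add the example $(\zero,0)$ with large weight $W$, using the repetition convention of the section; this forces $d=0$ in any near-optimal hypothesis, at the cost of rescaling fractions. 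The cleaner way is to note that flipping $d$ complements every label. I would therefore use homogenised equations: the instance is replaced by one where each equation is duplicated with the opposite right-hand side on a fresh auxiliary triple. I expect the simplest route to be the weight trick, with the weight chosen as a constant fraction, say half, of the sample. Then $J$ read as an assignment satisfies exactly the equations whose examples are correctly labeled, and conversely every assignment gives a hypothesis in $\Lc$ with $d=0$.

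Next I verify the gap transfer. In the completeness case the optimal empirical risk is at most about $\eta$, possibly halved by the weighting, which is at most $\varepsilon$ once $\eta$ is small. In the soundness case the argument splits on $d$. A hypothesis with $d=0$ errs on at least a $1/2-\eta$ fraction of the equation examples. A hypothesis with $d=1$ errs on all $W$ forced examples and on at least $(1/2-\eta)$ of the rest, since complementing an assignment's predictions again satisfies at most $1/2+\eta$: flipping $d$ turns satisfied equations into violated ones and vice versa. Either way the error is at least $1/2-\eta$ up to the weighting. Hence a weak approximator with constants $\delta,\varepsilon$, run with $\eta<\min(\varepsilon,\delta)$, would decide the gap problem in polynomial time, so $\Ptime=\NP$.

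The main obstacle is the bookkeeping around the constant term and the weighting. I must make sure the forced $(\zero,0)$ examples do not dilute the soundness error below $1/2-\delta$. If weighting dilutes it, I would drop the weighting and argue directly. Complementing all labels maps the instance to the one with all right-hand sides flipped. Since H\aa stad's soundness bound $\le 1/2+\eta$ applies to every assignment of the original system, a $d=1$ hypothesis satisfies at most $1/2+\eta$ of the examples, and no forcing is needed at all. I would adopt this direct argument as the final version.
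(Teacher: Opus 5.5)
Your reduction is the paper's: each Max-E3Lin equation becomes a labeled point of Hamming weight three, assignments become linear parities, and choosing $\eta<\min\{\delta,\varepsilon\}$ turns a weak approximator into a distinguisher. You are right to drop the weighting. With $W$ comparable to the number of equation examples, the soundness error of a $d=0$ hypothesis is diluted to about $1/4$, which a weak approximator may legitimately return.

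The gap is in the one nontrivial step left, the bound for hypotheses with $d=1$. Write $\mathrm{sat}(\mathbf z)$ for the fraction of equations satisfied by $\mathbf z$. The hypothesis $1\oplus\bigoplus_{j\in J}x_j$ is correct exactly on the examples whose equations $\mathbf 1_J$ violates, so its accuracy is $1-\mathrm{sat}(\mathbf 1_J)$. H\aa stad's guarantee $\mathrm{sat}(\mathbf z)\le 1/2+\eta$ bounds this from below, not from above. Your remark that complementing all labels gives the system with flipped right-hand sides does not close this, because the soundness guarantee is stated for the original system, not the flipped one. What you need is $\mathrm{sat}(\mathbf z)\ge 1/2-\eta$ for every assignment, and neither of your justifications derives it. Your argument never uses the arity of the equations, and for even arity the conclusion is false. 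If all right-hand sides are $1$, the constant $\top$ (that is, $d=1$, $J=\varnothing$) fits every example, even when no assignment satisfies much more than half the equations.

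The missing idea, which is exactly how the paper argues, is odd arity. Every equation has three distinct variables, so replacing $\mathbf z$ by $\bar{\mathbf z}$ flips the left-hand side of every equation. Hence $\mathrm{sat}(\bar{\mathbf z})=1-\mathrm{sat}(\mathbf z)$, and applying the soundness bound to $\bar{\mathbf z}$ gives $\mathrm{sat}(\mathbf z)\ge 1/2-\eta$. In terms of hypotheses: on points of weight three, $1\oplus\bigoplus_{j\in J}x_j$ agrees with $\bigoplus_{j\notin J}x_j$. So every $d=1$ hypothesis coincides on the sample with a $d=0$ one and inherits its error of at least $1/2-\eta$. With this observation added, your direct version is complete and is the paper's proof.
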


The functions of $\Lc$ are those of the form
$h(\mathbf x)=c\oplus\bigoplus_{j\in J}x_j$, and the theorem is H\aa stad's gap
hardness for Max-E3Lin read through the standard translation between linear
systems and parity examples.  His theorem says that, for every $\eta>0$, it is
$\NP$-hard to distinguish systems of three-variable linear equations over
$\Ftwo$ for which some assignment satisfies at least a $1-\eta$ fraction of the
equations from systems in which every assignment satisfies at most a
$1/2+\eta$ fraction~\cite{Hastad}.  Since the equations have odd arity,
complementing all variables turns satisfied equations into unsatisfied ones, so
in the second case every assignment also satisfies at least a $1/2-\eta$
fraction.  Translate such a system into labeled parity examples in the standard
way: an equation $\mathbf a\cdot\mathbf z=b$ becomes the example with coordinate
vector $\mathbf a$ and label $b$.  An assignment $\mathbf z$ is then the linear
parity $h(\mathbf x)=\mathbf x\cdot\mathbf z$, whose value on that example is
$\mathbf a\cdot\mathbf z$, and an optional affine offset corresponds to
complementing all predictions.  The translation therefore gives samples for
which either some affine hypothesis has empirical risk at most $\eta$, or every
affine hypothesis has empirical risk at least $1/2-\eta$.  Given
$\delta,\varepsilon>0$, choose $\eta<\min\{\delta,\varepsilon\}$: a weak
approximator with these constants would separate the two cases.

\begin{theorem}[Feldman, Gopalan, Khot and Ponnuswami~\cite{FGKP}]
\label{thm:weak-agnostic-monomials}
Unless $\Ptime=\NP$, there is no polynomial-time weak approximator for the
clone $\Ec=[\wedge,\top,\bot]$, nor for the clone $\Vc=[\vee,\top,\bot]$.
\end{theorem}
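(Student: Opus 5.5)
This result is the paper's counterpart of Theorem~\ref{thm:weak-agnostic}: like Håstad's theorem there, the plan is to import the gap hardness of Feldman, Gopalan, Khot and Ponnuswami and translate it into the language of clones, rather than prove a new PCP.

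Their theorem is about agnostic learning of monomials. For every $\eta>0$ it is $\NP$-hard to distinguish two kinds of labeled samples over $\{0,1\}^n$. In the first kind, some monotone conjunction (indeed some conjunction of literals) agrees with at least a $1-\eta$ fraction of the examples. In the second kind, every halfspace, and in particular every conjunction, disjunction or constant, agrees with at most a $1/2+\eta$ fraction. I will check that their hypothesis class in the second case contains all of $\Ec=[\wedge,\top,\bot]$. It does, because constants and monotone conjunctions are halfspaces.

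The reduction for $\Ec$ is then direct. Given $\delta,\varepsilon>0$, choose $\eta<\min\{\delta,\varepsilon\}$ and feed the FGKP instance to a putative weak approximator.
\begin{itemize}
\item In the completeness case, $\opt_\Ec(E)\le\eta\le\varepsilon$. The approximator must therefore return a relabeling realizable in $\Ec$ with risk at most $1/2-\delta$.
\item In the soundness case, every function of $\Ec$ has risk at least $1/2-\eta>1/2-\delta$.
\end{itemize}
By the relabeling equivalence established before Lemma~\ref{lem:constant-adjunction}, the returned object can be evaluated on the sample in polynomial time. Comparing its risk with $1/2-\delta$ therefore decides the gap problem.

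The one point needing care is whether the completeness witness is a monotone conjunction, as $\Ec$ requires, rather than a conjunction of literals. If FGKP's completeness uses negated literals, I would double the variables: each $\mathbf a$ is replaced by $(\mathbf a,\bar{\mathbf a})$. A conjunction of literals then becomes a monotone conjunction over the new coordinates. Soundness is preserved, since any monotone conjunction over the doubled variables is a conjunction of literals over the original ones, and is therefore still a halfspace. Making this step airtight against the exact formulation in~\cite{FGKP} is where I expect the main obstacle to lie.

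For $\Vc$ I use duality. Map each example $(\mathbf a,b)$ to $(\bar{\mathbf a},1-b)$. A function $f$ misclassifies $(\mathbf a,b)$ exactly when $f^d$ misclassifies $(\bar{\mathbf a},1-b)$. Since $\Vc$ consists exactly of the duals of members of $\Ec$, this bijection preserves empirical risk. Hence a weak approximator for $\Vc$ yields one for $\Ec$ with the same constants, which completes the proof.
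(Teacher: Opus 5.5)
Your overall route is the paper's: import the gap hardness of \cite{FGKP}, read it as hardness of weak approximation for $\Ec$, and obtain $\Vc$ by complementing coordinates and flipping labels. Your duality computation is correct. The problem is the statement you import. \cite{FGKP} prove \emph{proper} hardness: monomials by monomials, and separately halfspaces over $\mathbb{Q}^n$ by halfspaces. Soundness against \emph{every} halfspace, with a monomial as the completeness witness, is the later theorem of Feldman, Guruswami, Raghavendra and Wu (``Agnostic learning of monomials by halfspaces is hard''). Your proof rests on the step that the soundness class contains all of $\Ec$ ``because constants and monotone conjunctions are halfspaces''. That step uses a guarantee which \cite{FGKP} do not give.

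There are two ways to repair it. The first is to cite Feldman--Guruswami--Raghavendra--Wu instead. Then your argument goes through as written, and is if anything more careful than the paper's. Doubling turns a conjunction of literals into a monotone conjunction. Conversely, every monotone conjunction over the doubled coordinates restricts, on the points $(\mathbf a,\bar{\mathbf a})$, to a conjunction of literals (possibly contradictory), hence to a halfspace. Both truth constants are covered automatically.

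The second is to stay with \cite{FGKP} and do as the paper does: use their monotone problem $\mathrm{MMon}\text{-}\mathrm{MA}$ (\S4.2.2 and Theorem~15). Its completeness witness is already a monotone conjunction, so the obstacle you anticipated disappears and no doubling is needed. However, soundness is then guaranteed only against monotone conjunctions, and $\bot$ is not one of them. If negative examples made up more than a $1/2+\delta$ fraction of a soundness instance, a weak approximator could simply output $\bot$, and your distinguisher would fail. You would therefore also have to check that $\bot$ has agreement at most $1/2+\eta$ on the hard instances, for example because they are balanced between positive and negative examples. The paper does not discuss this point separately; it reads $\mathrm{MMon}\text{-}\mathrm{MA}$ directly as a statement about $\Ec$.
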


For $\Ec=[\wedge,\top,\bot]$ this is the monotone case of the agnostic-learning
hardness of~\cite{FGKP}, whose proof does not pass through H\aa stad's but goes
back to the PCP theorem through Feige's multi-prover proof system for
$3$SAT-$5$, precisely in order to avoid an intermediate optimization problem.
One point of care: their monomials are conjunctions of \emph{literals}, whereas
$\Ec$ contains only monotone conjunctions and the two truth constants.  The
result we need is therefore the monotone version, which they prove alongside the
general one --- it is their problem $\mathrm{MMon}\text{-}\mathrm{MA}$, treated
in~\cite[\S4.2.2 and Theorem~15]{FGKP}.  In the terminology used here it says
that, unless $\Ptime=\NP$, no polynomial-time algorithm is a weak approximator
for the class of monotone conjunctions.   The clone $\Vc=[\vee,\top,\bot]$
satisfies the same statement by Boolean duality: complement all coordinates and
flip all labels.

\subsection{Finite-degree separating clones and the majority clone}

Regime~(ii) consists of the finite-degree separating clones and, as the case
$k=2$, the majority clone.  Fix $k\ge 2$.  The proofs treat the
$0$-separating side, and the $1$-separating side follows by duality: each
clone $\Sc_{1*}^k$ consists of the duals $f^d$ of the functions $f$ of the
clone $\Sc_{0*}^k$ with the same subscript, and $f^d$ makes on the sample
$E^d=\{(\bar{\mathbf a},1-b):(\mathbf a,b)\in E\}$ the same number of
mistakes as $f$ on $E$, so ERM for a $1$-separating clone is ERM for its
dual clone with all coordinates complemented and all labels flipped.  By the
\emph{zero set} of a function $f\colon\{0,1\}^n\to\{0,1\}$ we mean the set
$f^{-1}(0)$.

\begin{fact}[Zero-set criterion]\label{fact:zero-set}
For $Z\subseteq\{0,1\}^n$, some function of $\Sc_0^k$ has zero set exactly $Z$ if
and only if every nonempty subset of $Z$ of size at most $k$ has a common zero
coordinate.
\end{fact}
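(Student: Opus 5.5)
The plan is to prove the two directions separately, using only the definition of $\Sc_0^k$ as the set of functions that are $0$-separating of degree $k$. Here ``every list of $k$ elements of $f^{-1}(0)$, with repetitions allowed, has a common zero coordinate.'' The preliminaries already record that this is equivalent to requiring that every nonempty subset of at most $k$ distinct zero inputs have a common zero coordinate.

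For the forward direction, suppose $f\in\Sc_0^k$ has $f^{-1}(0)=Z$. A nonempty subset $Z'\subseteq Z$ with $|Z'|\le k$ can be padded by repetition to a list of exactly $k$ elements of $f^{-1}(0)$. By definition this list has a common zero coordinate, and that coordinate is also a common zero coordinate of $Z'$. This direction is immediate.

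For the converse, let $Z$ satisfy the condition, and define $f$ as the characteristic function of the complement of $Z$: $f(\mathbf a)=0$ iff $\mathbf a\in Z$. Then $f^{-1}(0)=Z$ by construction, so it remains to check $f\in\Sc_0^k$. Take any list $\mathbf a_1,\ldots,\mathbf a_k$ of elements of $f^{-1}(0)=Z$. The set of its distinct entries is a nonempty subset of $Z$ of size at most $k$, so by hypothesis it has a common zero coordinate. That coordinate serves the whole list. If $Z=\varnothing$ the condition is vacuous; then $f$ is the constant $\top$, which is vacuously $0$-separating of any degree.

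The only point needing care is that membership in the clone $\Sc_0^k$ is indeed characterized by this defining property. In other words, the set of all functions that are $0$-separating of degree $k$ is exactly the clone named $\Sc_0^k$ in Table~\ref{tab:clones}, with basis $\{\to,\thr^{k+1}_2\}$. I would take this from the standard description of Post's lattice~\cite{Post,Lau}, as the paper does in its convention for the $\Sc$-families. There it is a classical fact that these functions are closed under composition and contain the projections. Granting that, there is no real obstacle and the fact is essentially a restatement of the definition. The remaining steps are the repetition-padding argument and the handling of the empty zero set.
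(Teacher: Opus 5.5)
Your argument is correct and takes the same approach as the paper, which disposes of the fact in one line as a restatement of the degree-$k$ separating condition from Section~\ref{sec:prelim}. Your padding of a small subset to a list of length $k$, the choice of $f$ with $f^{-1}(0)=Z$, and the vacuous case $Z=\varnothing$ just make that restatement explicit. The point you flag, that $\Sc_0^k$ is exactly the class of all $0$-separating functions of degree $k$, is also how the paper uses it, and it is safe: that class equals $\Pol(\{0,1\}^k\setminus\{(0,\ldots,0)\})$ and is therefore a clone.
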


This is a restatement of the separating condition of
Section~\ref{sec:prelim}.

The algorithm is obtained by the technique of Hochbaum~\cite{Hochbaum} for
weighted vertex cover: write the problem as an integer linear program, solve
its linear relaxation, and round every variable of value at least $1/k$ up
to $1$ and every other variable down to $0$.  The following lemma will justify
the rounding step. It is stated separately as it is used in two proofs.

\begin{lemma}[Rounding]\label{lem:rounding}
Let $k\ge2$ be fixed.  Consider an integer program with variables
$x_v\in\{0,1\}$ for $v\in V$, whose objective is to minimize
\[
  \sum_{v\in V}\bigl(c_v\,x_v+d_v\,(1-x_v)\bigr)
\]
with coefficients $c_v,d_v\ge0$, and whose constraints are of three kinds:
\emph{covering constraints} $\sum_{v\in T}x_v\ge1$ for sets $T\subseteq V$
with $|T|\le k$, \emph{precedence constraints} $x_v\le x_w$, and
\emph{fixed variables} $x_v=0$ or $x_v=1$.  If the program is feasible, then
a feasible solution of cost at most $k$ times the optimum can be found in
polynomial time.
\end{lemma}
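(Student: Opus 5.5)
We follow Hochbaum's scheme: solve the linear relaxation, threshold at $1/k$, and compare the rounded cost with the LP cost term by term.

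\emph{Relaxation.} Replace $x_v\in\{0,1\}$ by $0\le x_v\le 1$ and keep all covering, precedence and fixing constraints unchanged. This is a linear program of polynomial size: there are $|V|$ variables, and the covering constraints are part of the input. Hence it can be solved exactly in polynomial time, for instance by the ellipsoid or an interior-point method. Let $x^*$ be an optimal solution. Its cost $\mathrm{LP}$ is at most the integer optimum $\mathrm{OPT}$, since every feasible integer solution is LP-feasible. If the integer program is feasible then so is the LP.

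\emph{Rounding.} Set $y_v=1$ if $x^*_v\ge 1/k$ and $y_v=0$ otherwise. We check the three kinds of constraints.
\begin{itemize}
\item A covering constraint over $T$ with $|T|\le k$ has $\sum_{v\in T}x^*_v\ge 1$. Some term is therefore at least $1/|T|\ge 1/k$, so $\sum_{v\in T}y_v\ge1$.
\item A precedence constraint $x^*_v\le x^*_w$ is preserved because thresholding is a monotone map.
\item Fixed variables keep their values, since $1\ge 1/k$ and $0<1/k$.
\end{itemize}
So $y$ is feasible.

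\emph{Cost.} We bound the cost of $y$ coordinatewise.
\begin{itemize}
\item If $y_v=1$, the cost of $y$ at $v$ is $c_v\le k\,c_vx^*_v\le k\bigl(c_vx^*_v+d_v(1-x^*_v)\bigr)$.
\item If $y_v=0$, the cost of $y$ at $v$ is $d_v$. Here $x^*_v<1/k$, so $1-x^*_v>1-1/k\ge 1/k$ (using $k\ge2$). Hence $d_v\le k\,d_v(1-x^*_v)\le k\bigl(c_vx^*_v+d_v(1-x^*_v)\bigr)$.
\end{itemize}
Summing over $v$ gives $\mathrm{cost}(y)\le k\cdot\mathrm{LP}\le k\cdot\mathrm{OPT}$.

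The only point needing care is the symmetric cost term $d_v(1-x_v)$ when rounding down. This is exactly where $k\ge2$ enters, through the inequality $1-1/k\ge1/k$. The rest is routine.
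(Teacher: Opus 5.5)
Your proposal is correct and follows the paper's proof essentially step for step: solve the LP relaxation, round at threshold $1/k$, check the three constraint types, and bound the cost termwise. The only cosmetic difference is on the round-down side, where the paper bounds the factor by $\tfrac{k}{k-1}$ and then uses $\max\{k,\tfrac{k}{k-1}\}=k$ for $k\ge2$, whereas you go straight to the factor $k$ via $1-1/k\ge 1/k$.
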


\begin{proof}
Relax the integrality requirement to $x_v\in[0,1]$, solve the resulting
linear program in polynomial time, and let $\mathbf x$ be an optimal
solution; its cost is at most the optimum of the integer program.  Round
$\mathbf x$ to the $0$/$1$-assignment $\hat{\mathbf x}$ with $\hat x_v=1$ if
$x_v\ge1/k$ and $\hat x_v=0$ otherwise.  The rounded assignment satisfies all
constraints.  In a covering constraint, at most $k$ variables sum to at least
$1$, so one of them has value at least $1/k$ and is rounded to $1$.  If
$x_v\le x_w$ then $\hat x_v\le\hat x_w$.  And a fixed variable is unchanged by
rounding.

For the cost, compare termwise.  If $\hat x_v=1$ then $x_v\ge1/k$, so
$\hat x_v\le k\,x_v$; and if $\hat x_v=0$ then $x_v<1/k$, so
$1-\hat x_v=1\le\tfrac{k}{k-1}(1-x_v)$.  Each inequality is trivial in the
other case.  Hence the cost of $\hat{\mathbf x}$ is at most
$\max\{k,\tfrac{k}{k-1}\}$ times the cost of $\mathbf x$, and that maximum
is $k$ for every $k\ge2$.
\end{proof}

\begin{proposition}[$k$-approximation]\label{prop:k-approx}
Let $k\ge2$ and let $\Cc$ be one of the eight degree-$k$ separating clones
$\Sc_{0*}^k$ and $\Sc_{1*}^k$, that is, $\Sc_{00}^k\subseteq\Cc\subseteq\Sc_0^k$
or $\Sc_{10}^k\subseteq\Cc\subseteq\Sc_1^k$.  Then ERM for $\Cc$ admits a
polynomial-time $k$-approximation on arbitrary samples.
\end{proposition}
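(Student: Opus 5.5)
By the duality observed at the start of this subsection, I only treat the four $0$-separating clones $\Sc_{00}^k\subseteq\Cc\subseteq\Sc_0^k$: the dual sample $E^d$ carries a $1$-separating instance to its $0$-separating counterpart, with the same mistakes. The plan is to write ERM for $\Cc$ as an integer program of exactly the shape allowed in Lemma~\ref{lem:rounding}, and then take the rounded solution as the output relabeling.

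\textbf{The integer program.} Let $V$ be the set of distinct points occurring in $E$. For each $\mathbf a\in V$ introduce a $0$/$1$ variable $y_{\mathbf a}$, meaning that $\mathbf a$ receives label $1$. Its cost is $c_{\mathbf a}=m_-(\mathbf a)$ when $y_{\mathbf a}=1$ and $d_{\mathbf a}=m_+(\mathbf a)$ when $y_{\mathbf a}=0$, so the objective counts mistakes. Fact~\ref{fact:zero-set} gives the constraints for $\Sc_0^k$. For every nonempty $T\subseteq V$ with $|T|\le k$ and no common zero coordinate, add the covering constraint $\sum_{\mathbf a\in T}y_{\mathbf a}\ge1$. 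There are $O(|V|^k)$ such constraints, a polynomial number since $k$ is fixed.

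\textbf{Correctness for $\Sc_0^k$.} The point to check is that a labeling of $V$ is realizable iff these constraints hold. Realizability needs only some $f$ with $f^{-1}(0)\cap V$ equal to the set $Z$ of points labeled $0$. Taking $f^{-1}(0)=Z$ exactly works, because the condition of Fact~\ref{fact:zero-set} is inherited by subsets.

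\textbf{The three subclones.} These are intersections with $\Rc_0$, $\Mc$ and $\Rc_1$, handled by the remaining constraint types.
\begin{itemize}
\item For $\Rc_0$: if $\zero\in V$, fix $y_{\zero}=0$. If $\zero\notin V$, add $\zero$ to the zero set; this is harmless, since $\zero$ has every coordinate zero and so a set containing it has a common zero coordinate iff the rest does.
\item For $\Rc_1$: if $\one\in V$, fix $y_{\one}=1$. Otherwise nothing is needed, since $\one$ is never in $Z$.
\item For $\Mc$: impose precedence constraints $y_{\mathbf a}\le y_{\mathbf b}$ whenever $\mathbf a\le\mathbf b$ in $V$. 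Then replace $Z$ by its downward closure in $\{0,1\}^n$. This preserves the degree-$k$ condition: points below members of a $k$-subset share that subset's common zero coordinate. It also makes the function with this zero set monotone, and it changes no label on $V$ because $Z$ was already downward closed within $V$.
\end{itemize}
For $\Sc_{00}^k$ all three are combined.

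\textbf{Expected obstacle.} I expect the main difficulty to be exactly this verification: the clone-theoretic side conditions translate faithfully into fixed variables and precedences, and extending the sample zero set to all of $\{0,1\}^n$ (adding $\zero$, taking downward closure) keeps membership in $\Cc$.

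\textbf{Finishing.} Lemma~\ref{lem:rounding} requires feasibility. The all-ones labeling, with only $y_{\zero}=0$ forced where applicable, is feasible: no covering set can contain $\zero$, since $\zero$ together with any points always has a common zero coordinate, and precedences hold trivially. Lemma~\ref{lem:rounding} then returns in polynomial time a feasible $0$/$1$ solution of cost at most $k\cdot\opt_\Cc(E)$ mistakes. By the correspondence above, this solution is a relabeling realizable in $\Cc$, which is the required $k$-approximation.
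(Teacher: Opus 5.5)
Your argument is correct and is essentially the paper's proof: the same integer program, with covering constraints from Fact~\ref{fact:zero-set}, precedence constraints for the monotone clones and $y_{\zero}=0$ for the $\Rc_2$ ones, the same extension of the zero set (add $\zero$, take the downward closure), and then Lemma~\ref{lem:rounding}. Your extra $\Rc_1$ constraint is redundant, since $\Sc_0^k\subseteq\Rc_1$ already and the covering constraint for $T=\{\one\}$ forces $y_{\one}=1$.

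The only flaw is in your feasibility step. It is false that $\zero$ together with any points has a common zero coordinate: take $\{\zero,\one\}$. The conclusion still holds, because $\{\zero\}$ alone has a common zero coordinate, so every covering set contains a point other than $\zero$, and your labeling sets that point to $1$. Alternatively, as the paper does, note that the projection $x_1$ lies in every clone and realizes a feasible relabeling.
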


\begin{proof}
By duality we may assume $\Sc_{00}^k\subseteq\Cc\subseteq\Sc_0^k$.  We write
ERM for $\Cc$ as an integer program of the form in
Lemma~\ref{lem:rounding}.  For each distinct sample point $\mathbf a$
introduce a variable $x_{\mathbf a}\in\{0,1\}$, with $x_{\mathbf a}=1$
meaning that $\mathbf a$ is relabeled $1$, so that the number of mistakes of
the relabeling is
\[
  \sum_{\mathbf a}\Bigl(x_{\mathbf a}\,m_-(\mathbf a)+(1-x_{\mathbf a})\,m_+(\mathbf a)\Bigr).
\]
Impose a covering constraint $\sum_{\mathbf a\in T}x_{\mathbf a}\ge1$ for
every nonempty set $T$ of at most $k$ sample points without a common zero
coordinate; when $\Cc\subseteq\Mc$, a precedence constraint
$x_{\mathbf a}\le x_{\mathbf a'}$ for all sample points
$\mathbf a\le\mathbf a'$; and when $\Cc\subseteq\Rc_2$, the fixed variable
$x_{\zero}=0$, if $\zero$ is a sample point.  Since $k$ is fixed, the program
has polynomial size.  Note that $x_{\one}=1$ is among the covering
constraints whenever $\one$ is a sample point, as $\one$ has no zero
coordinate.

The $0$/$1$-assignments satisfying these constraints are exactly the
relabelings of $E$ realizable in $\Cc$.  If a relabeling is realized by
$f\in\Cc$, then its points relabeled $0$ lie in the zero set of $f$, which
satisfies the degree-$k$ condition by the zero-set criterion, so no set $T$
as above consists of points relabeled $0$, and the covering constraints
hold; when $\Cc\subseteq\Mc$, the zero set of the monotone $f$ is a downset,
which gives the precedence constraints; and when $\Cc\subseteq\Rc_2$,
$f(\zero)=0$ gives $x_{\zero}=0$.  Conversely, let $\mathbf x$ be a
$0$/$1$-assignment satisfying the constraints, and let $Z$ be the set of
sample points with $x_{\mathbf a}=0$.  Let $Z'$ be $Z$ itself, or its
downward closure in $\{0,1\}^n$ when $\Cc\subseteq\Mc$, in either case with
$\zero$ added when $\Cc\subseteq\Rc_2$.  By the covering constraints every
nonempty subset of $Z$ of size at most $k$ has a common zero coordinate, and
this passes to $Z'$, since a coordinate that is zero on a point is zero on
every point below it, and since $\zero$ is zero in every coordinate; so by
Fact~\ref{fact:zero-set}
some function $f$ of $\Sc_0^k$ has zero set exactly $Z'$.  When
$\Cc\subseteq\Mc$ this $f$ is monotone, its zero set being a downset, and
when $\Cc\subseteq\Rc_2$ it satisfies $f(\zero)=0$ and $f(\one)=1$, as
$\zero\in Z'$ while $\one$, having no zero coordinate, is neither in $Z$ nor
below a point of $Z$; so $f\in\Cc$.  Finally $f$ induces the relabeling
$\mathbf x$: it is $0$ on $Z$, and a sample point $\mathbf a$ with
$x_{\mathbf a}=1$ lies outside $Z'$, since $\mathbf a\notin Z$, since
$\mathbf a\le\mathbf z$ with $\mathbf z\in Z$ would violate the precedence
constraint $x_{\mathbf a}\le x_{\mathbf z}=0$, and since $\mathbf a=\zero$
would violate $x_{\zero}=0$.

The program is feasible, because the projection onto the first coordinate
lies in every clone and realizes some relabeling of $E$, and its optimum is
$|E|\cdot\opt_\Cc(E)$.  Lemma~\ref{lem:rounding} therefore returns in
polynomial time a $0$/$1$-assignment satisfying the constraints, that is, a
relabeling of $E$ realizable in $\Cc$, with at most
$k\cdot|E|\cdot\opt_\Cc(E)$ mistakes.
\end{proof}

We handle the majority clone $\Dc_2=[\maj]$ separately, as the approach of
Proposition~\ref{prop:k-approx} does not go through for it.  In terms of
variables $x_{\mathbf a}$ for the labels of the sample points, membership in
$\Dc_2$ imposes, besides precedence constraints, the constraints
$x_{\mathbf a}+x_{\mathbf a'}\le1$ for all sample points with
$\mathbf a'\le\bar{\mathbf a}$: if $f(\mathbf a)=f(\mathbf a')=1$ for such
points then monotonicity gives $f(\bar{\mathbf a})=1$, contradicting
self-duality.  Such packing constraints do not survive the rounding of
Lemma~\ref{lem:rounding}, which may round both variables up.  We therefore
change variables, and let a variable record whether the relabeling gives up
the majority label at a sample point, that is, the label occurring more often
there; in these variables the constraints are covering constraints.

\begin{proposition}[$2$-approximation, majority case]\label{prop:2-approx-d2}
ERM for $\Dc_2$ admits a polynomial-time $2$-approximation on arbitrary
samples.
\end{proposition}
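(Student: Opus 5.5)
The plan is to rewrite ERM for $\Dc_2$ as a weighted minimum-ones problem over constraints that each mention at most two sample points. I would then use the half-integrality of its linear relaxation in place of Lemma~\ref{lem:rounding}, which does not apply because some constraints are packing constraints.

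\emph{Step 1: characterise realizable relabelings.} Let $P$ be the set of distinct sample points; for $n\ge1$ no point equals its complement. I would show that a relabeling $x\colon P\to\{0,1\}$ is realizable in $\Dc_2$ iff three families of forbidden pairs are avoided, for all $\mathbf a,\mathbf a'\in P$:
\begin{itemize}
\item $x_{\mathbf a}=1,\ x_{\mathbf a'}=0$ when $\mathbf a\le\mathbf a'$;
\item $x_{\mathbf a}=x_{\mathbf a'}=1$ when $\mathbf a\le\bar{\mathbf a}'$;
\item $x_{\mathbf a}=x_{\mathbf a'}=0$ when $\bar{\mathbf a}\le\mathbf a'$.
\end{itemize}
Necessity comes from monotonicity and self-duality, as in the remark before the proposition. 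For sufficiency I would extend $x$ to $P\cup\bar P$ by $x_{\bar{\mathbf a}}=1-x_{\mathbf a}$. The three conditions say exactly that this extension is a well-defined monotone partial function. Take the family of upward closures of the points labelled $1$. It is intersecting in the sense that no $1$-point lies below the complement of another. It can therefore be enlarged to a maximal intersecting family, that is, to a monotone self-dual total function. This is the standard fact that maximal intersecting families are exactly the self-dual monotone functions. The extension step is where I would be most careful, since it is the only place where anything beyond the sample points is constructed.

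\emph{Step 2: change of variables.} Let $\mu(\mathbf a)$ be the majority label at $\mathbf a$, breaking ties arbitrarily. Put $y_{\mathbf a}=1$ iff the relabeling assigns $1-\mu(\mathbf a)$. The number of mistakes is then
\[
  \sum_{\mathbf a}\min\{m_+(\mathbf a),m_-(\mathbf a)\}+\sum_{\mathbf a}w_{\mathbf a}y_{\mathbf a},
  \qquad w_{\mathbf a}=|m_+(\mathbf a)-m_-(\mathbf a)|.
\]
The first sum is a constant $C$ with $C\le|E|\cdot\opt_{\Dc_2}(E)$. Hence a $2$-approximation of the second sum gives a $2$-approximation overall. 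Each forbidden pair from Step~1 becomes a $2$-clause in the literals $y_{\mathbf a},\neg y_{\mathbf a}$, so the problem is weighted minimum-ones $2$-SAT. This instance is feasible, since the projection $x_1$ lies in $\Dc_2$.

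\emph{Step 3: rounding.} This is the main step. Clauses with two positive literals are covering constraints. The others, with mixed signs or two negations, are not, so the plain threshold rounding does not work. Instead I would invoke the result of Hochbaum, Megiddo, Naor and Tamir for integer programs with two variables per inequality. The LP relaxation of a $2$-SAT minimum-ones program has an optimal half-integral solution $y^*$, obtained by solving a monotone doubled system with variables $y^+_{\mathbf a}$ and $y^-_{\mathbf a}=1-y^+_{\mathbf a}$. Moreover it is persistent: some optimal integral solution agrees with $y^*$ on every coordinate where $y^*$ is integral. Then:
\begin{itemize}
\item fix the integral coordinates of $y^*$;
\item on the coordinates where $y^*_{\mathbf a}=\tfrac12$, the residual $2$-SAT instance is satisfiable by persistence, so solve it in polynomial time by the usual implication-graph algorithm.
\end{itemize}
Each half coordinate contributes $w_{\mathbf a}/2$ to the LP value and at most $w_{\mathbf a}$ to the rounded cost. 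Integral coordinates contribute the same to both. So the rounded $y$-cost is at most twice the LP value, which is at most twice the integral optimum of the $y$-part. Adding back $C$ gives at most $2|E|\opt_{\Dc_2}(E)$ mistakes, and Step~1 turns the result into a relabeling realizable in $\Dc_2$.

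I expect the two points needing the most care to be persistence for non-monotone $2$-clauses and the extension argument in Step~1. If persistence proves awkward to cite, a fallback would be a direct argument. It would add to $y^*$ the symmetric difference with an optimal integral solution restricted to half coordinates, which preserves feasibility by the usual $2$-SAT half-integrality exchange.
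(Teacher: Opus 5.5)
Your proof is correct, but it takes a different route from the paper.

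\textbf{Your route.} You formulate ERM for $\Dc_2$ \emph{exactly}. Step~1 is a self-contained argument via maximal intersecting families, used where the paper appeals to Proposition~\ref{prop:nu-interpolation}. It makes the feasible $0$/$1$-assignments precisely the realizable relabelings. The price is a genuine weighted min-ones $2$-SAT instance whose clauses, in the $y$-variables, come in all three sign patterns. You therefore need half-integrality of the LP for two-variables-per-inequality systems.

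\textbf{The paper's route.} The paper instead uses a deliberate \emph{relaxation}. It keeps only the covering constraints $y_{\mathbf a}+y_{\mathbf a'}\ge1$ for pairs that conflict when both keep their majority labels. It drops every constraint involving a point where the majority label is given up. This is harmless because the output relabeling is not $b_{\mathbf a}\oplus y_{\mathbf a}$. It is obtained by fitting the kept points alone and evaluating the fit everywhere, and at a given-up point any label costs at most $\mu(\mathbf a)$ over the baseline. The program is thus pure weighted vertex cover, so the elementary threshold rounding of Lemma~\ref{lem:rounding} suffices and no half-integrality theory is needed. Conversely, your version needs no fitting step, since $\mu\oplus y$ is itself the realizable relabeling.

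\textbf{Simplifications to your argument.} Persistence, the point you flagged as delicate, is unnecessary. Take any feasible half-integral $y^*$. A clause containing an integral coordinate is already satisfied by that coordinate: a false integral literal plus a half literal sums to $1/2<1$, violating the LP constraint. So the residual instance consists only of original clauses between half coordinates. That is a subset of a satisfiable $2$-CNF, hence satisfiable. Likewise, the change of variables in Step~2 is optional. In the original variables a half coordinate contributes $(m_+(\mathbf a)+m_-(\mathbf a))/2$ to the LP and at most $\max\{m_+(\mathbf a),m_-(\mathbf a)\}$ after rounding, which already gives the factor $2$.
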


\begin{proof}
For each distinct sample point $\mathbf a$ let its \emph{majority label}
$b_{\mathbf a}$ be $1$ if $m_+(\mathbf a)\ge m_-(\mathbf a)$ and $0$
otherwise, and let $\mu(\mathbf a)=|m_+(\mathbf a)-m_-(\mathbf a)|$.  A
relabeling makes $\min\{m_-(\mathbf a),m_+(\mathbf a)\}$ mistakes at
$\mathbf a$ if it assigns $b_{\mathbf a}$ there, and $\mu(\mathbf a)$ more
if it does not.  Since $\Dc_2\subseteq\Rc_2$, a relabeling realizable in
$\Dc_2$ assigns $0$ to $\zero$ and $1$ to $\one$, so its number of mistakes
is $B+\sum\mu(\mathbf a)$, where the sum ranges over the sample points
$\mathbf a\notin\{\zero,\one\}$ to which it does not assign $b_{\mathbf a}$,
and where
\[
  B=m_+(\zero)+m_-(\one)+\sum_{\mathbf a\notin\{\zero,\one\}}\min\{m_-(\mathbf a),m_+(\mathbf a)\}
\]
does not depend on the relabeling.

Call two sample points $\mathbf a,\mathbf a'\notin\{\zero,\one\}$
\emph{conflicting} if no function of $\Dc_2$ takes the value
$b_{\mathbf a}$ at $\mathbf a$ and $b_{\mathbf a'}$ at $\mathbf a'$, that is,
if the labeled sample $\{(\mathbf a,b_{\mathbf a}),(\mathbf a',b_{\mathbf a'})\}$
is not realizable in $\Dc_2$; this can be decided in polynomial time by
Theorem~\ref{thm:boolean-fitting-constructive}.  If $P$ is a set of sample
points outside $\{\zero,\one\}$, no two of which conflict, then the labeled
sample $\{(\mathbf a,b_{\mathbf a}):\mathbf a\in P\}$ is realizable in
$\Dc_2$.  Indeed, its sub-samples of two examples are realizable by
assumption, and those of one example by a projection, as a point
$\mathbf a\notin\{\zero,\one\}$ has a coordinate equal to $b_{\mathbf a}$;
since $\maj$ is a near-unanimity term of $\Dc_2$ of arity $3$,
Proposition~\ref{prop:nu-interpolation} with $k=2$ gives the claim.

Now introduce, for each distinct sample point $\mathbf a\notin\{\zero,\one\}$,
a variable $y_{\mathbf a}\in\{0,1\}$, with $y_{\mathbf a}=1$ meaning that
the majority label is given up at $\mathbf a$; impose the covering
constraint $y_{\mathbf a}+y_{\mathbf a'}\ge1$ for every conflicting pair;
and minimize $\sum_{\mathbf a}\mu(\mathbf a)\,y_{\mathbf a}$.  This is a
program of the form in Lemma~\ref{lem:rounding} with $k=2$, and it is
feasible, since setting every variable to $1$ satisfies all constraints.
Every $f\in\Dc_2$ gives a feasible assignment, namely $y_{\mathbf a}=1$
exactly when $f(\mathbf a)\ne b_{\mathbf a}$: it is feasible because $f$
witnesses that two points with $y_{\mathbf a}=y_{\mathbf a'}=0$ do not
conflict, and its cost is the number of mistakes of $f$ minus $B$.  So the
optimum of the program is at most $|E|\cdot\opt_{\Dc_2}(E)-B$.  Conversely,
let $\mathbf y$ be a feasible $0$/$1$-assignment and $P$ the set of points
with $y_{\mathbf a}=0$.  No two points of $P$ conflict, so by the claim the
labeled sample $\{(\mathbf a,b_{\mathbf a}):\mathbf a\in P\}$ is realizable
in $\Dc_2$.  Theorem~\ref{thm:boolean-fitting-constructive} fits it in
polynomial time, and evaluating the fitting formula at all sample points
gives a relabeling of $E$ realizable in $\Dc_2$ that assigns
$b_{\mathbf a}$ to every $\mathbf a\in P$, hence makes at most
$B+\sum_{\mathbf a}\mu(\mathbf a)y_{\mathbf a}$ mistakes.

The algorithm applies Lemma~\ref{lem:rounding} to the program and returns
the relabeling just described for the assignment obtained.  Its number of
mistakes is at most
\[
  B+2\bigl(|E|\cdot\opt_{\Dc_2}(E)-B\bigr)\le 2\,|E|\cdot\opt_{\Dc_2}(E).\qedhere
\]
\end{proof}

The matching lower bounds for all clones of the regime come from vertex cover.  A \emph{$k$-uniform
hypergraph} is a pair $H=(V,\mathcal E)$ consisting of a finite set $V$ of
vertices and a set $\mathcal E$ of \emph{hyperedges}, which are $k$-element
subsets of $V$; a graph is a $2$-uniform hypergraph.  A \emph{vertex cover}
of $H$ is a set of vertices that meets every hyperedge, and $\tau(H)$
denotes the least size of a vertex cover.

\begin{theorem}[Vertex cover in $k$-uniform hypergraphs]
\label{thm:vc-facts}
Let $k\ge2$ be fixed, and consider minimum vertex cover in $k$-uniform
hypergraphs.
\begin{enumerate}[label=(\alph*)]
\item It can be approximated within a factor $k$ in polynomial
time~\cite{Hochbaum}.
\item It is $\NP$-hard to approximate within $k-1-\varepsilon$ for every
$\varepsilon>0$ when $k\ge3$~\cite{DGKR}, and within $\sqrt2-\varepsilon$ when
$k=2$~\cite{KMS17,KMS23}.
\item Under the Unique Games Conjecture it is $\NP$-hard to approximate within
$k-\varepsilon$ for every $\varepsilon>0$~\cite{KhotRegev}.
\end{enumerate}
\end{theorem}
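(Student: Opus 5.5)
This theorem collects known results, so the plan is to check that each part follows from its source in exactly the form stated, and to derive part~(a) directly from the paper's own machinery.

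For~(a), I would not just cite Hochbaum~\cite{Hochbaum} but read the statement off Lemma~\ref{lem:rounding}. Given a $k$-uniform hypergraph $H=(V,\mathcal E)$, take a variable $x_v$ for each vertex, with cost coefficients $c_v=1$ and $d_v=0$. Impose one covering constraint $\sum_{v\in T}x_v\ge1$ for each hyperedge $T\in\mathcal E$, and no precedence constraints or fixed variables. Then the feasible $0$/$1$-assignments are exactly the indicator vectors of vertex covers, and the cost of an assignment is the size of the cover. The program is feasible, since $V$ itself is a cover, and each covering constraint involves exactly $k$ variables. So the lemma returns in polynomial time a cover of size at most $k\,\tau(H)$.

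For~(b) and~(c), the plan is to invoke the cited hardness theorems and check that their conventions match ours: exactly $k$-uniform hyperedges, unweighted vertices, and cost equal to the cover size.
\begin{itemize}
\item For $k\ge3$, I would invoke Dinur, Guruswami, Khot and Regev~\cite{DGKR}, which gives $\NP$-hardness of approximation within $k-1-\varepsilon$ for $k$-uniform hypergraphs.
\item For $k=2$, I would invoke the $\sqrt2-\varepsilon$ bound that follows from the $2$-to-$2$ games theorem of Khot, Minzer and Safra~\cite{KMS17,KMS23}. This supersedes the earlier $1.36$ bound of Dinur and Safra.
\item For~(c), I would invoke Khot and Regev~\cite{KhotRegev}, who prove hardness within $k-\varepsilon$ under the Unique Games Conjecture for every fixed $k\ge2$.
\end{itemize}
If a source states its result for hypergraphs with edges of size at most $k$, I would pad each short edge with fresh vertices that occur in no other edge. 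This yields a $k$-uniform hypergraph, and the padding does not change $\tau$, because any cover using a padding vertex can swap it for an original vertex of the same edge without increasing its size.

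The main obstacle is not mathematical but bibliographic: making sure each cited statement gives $\NP$-hardness (not merely hardness under a weaker assumption) and covers exactly the stated ranges of $k$ with the stated constants. The place I would double-check first is that the $k=2$ bound of~\cite{KMS23} is unconditional $\NP$-hardness.
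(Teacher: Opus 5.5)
Your proposal is correct and follows the paper. The paper also obtains part~(a) as the special case of Lemma~\ref{lem:rounding} with unit costs and only covering constraints, and it takes parts~(b) and~(c) directly from~\cite{DGKR}, \cite{KMS17,KMS23} and~\cite{KhotRegev}. Your padding remark is correct but unnecessary, since those sources already state their bounds for $k$-uniform hypergraphs.
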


Part~(a) is the case of Lemma~\ref{lem:rounding} in which all constraints
are covering constraints; we use only the lower bounds~(b) and~(c).

\begin{proposition}[Vertex cover reduces to ERM]\label{prop:vc-to-erm}
Let $k\ge2$ and let $\Cc$ be a clone with $\thr^{k+1}_2\in\Cc\subseteq\Sc_0^k$
or $\thr^{k+1}_k\in\Cc\subseteq\Sc_1^k$, that is, one of the eight
degree-$k$ separating clones $\Sc_{0*}^k$ and $\Sc_{1*}^k$, or $\Dc_2$ when
$k=2$.  Then minimum vertex cover in $k$-uniform hypergraphs reduces to ERM
for $\Cc$ in polynomial time, by a reduction that preserves the objective
value and produces samples consisting of negative examples only, or of
positive examples only on the $1$-separating side.  The same sample serves
all clones of one side.
\end{proposition}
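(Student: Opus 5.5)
The plan is to work on the $0$-separating side and build, from a $k$-uniform hypergraph $H=(V,\mathcal E)$, a sample of negative examples, one per vertex. A relabeling will then amount to choosing the set $C$ of vertices relabeled $1$, with exactly $|C|$ mistakes. It is realizable in $\Cc$ precisely when $C$ is a vertex cover. The $1$-separating side then follows by the duality already used for Proposition~\ref{prop:k-approx}: complement all coordinates and flip all labels, which turns negative examples into positive ones.

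\emph{Construction.} Let the coordinates be indexed by the sets $T\subseteq V$ with $1\le|T|\le k$ and $T\notin\mathcal E$. Since $k$ is fixed there are polynomially many of them. For each vertex $v$ let $\mathbf a_v(T)=0$ if $v\in T$ and $\mathbf a_v(T)=1$ otherwise, and put the single example $(\mathbf a_v,0)$ into the sample. The key property is the following. A nonempty set $S$ of at most $k$ vertices has a common zero coordinate among $\{\mathbf a_v:v\in S\}$ iff $S\subseteq T$ for some such $T$. Taking $T=S$ when $S$ is not a hyperedge, this happens iff $S\notin\mathcal E$: a hyperedge $S$ has size exactly $k$, so any $T\supseteq S$ with $|T|\le k$ equals $S$. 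Moreover:
\begin{itemize}
\item since $k\ge2$, every singleton $\{v\}$ is a coordinate, so distinct $\mathbf a_v$ are pairwise incomparable;
\item neither $\zero$ nor $\one$ is a sample point.
\end{itemize}
Hence the precedence and $\Rc_2$ constraints from the proof of Proposition~\ref{prop:k-approx} are vacuous on this sample.

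\emph{Correctness for the eight separating clones.} Let $Z=\{\mathbf a_v:v\notin C\}$. Consider the characterization, established in the proof of Proposition~\ref{prop:k-approx} via Fact~\ref{fact:zero-set} (pass to the downward closure of $Z$ and add $\zero$ as needed), of which relabelings are realizable in $\Cc$. On this sample it says that a relabeling is realizable in any $\Cc$ with $\Sc_{00}^k\subseteq\Cc\subseteq\Sc_0^k$ iff every nonempty subset of $Z$ of size at most $k$ has a common zero coordinate. By the key property, this holds iff $V\setminus C$ contains no hyperedge, i.e.\ iff $C$ is a vertex cover. So the optimum number of mistakes is $\tau(H)$, and any realizable relabeling yields a vertex cover of the same size, with the same sample for all four clones.

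\emph{The majority clone ($k=2$).} This is the step I expect to need separate care, since $\Dc_2$ is not described by the zero-set criterion. I would use Proposition~\ref{prop:nu-interpolation} with $k=2$, as $\maj$ is a near-unanimity term: a set of negative examples is realizable in $\Dc_2$ iff each sub-sample of at most two is.
\begin{itemize}
\item A single $(\mathbf a,0)$ with $\mathbf a\ne\one$ is realized by a projection onto a zero coordinate.
\item If $\mathbf a,\mathbf a'$ share a zero coordinate, that projection realizes the pair $\{(\mathbf a,0),(\mathbf a',0)\}$.
\item If they share no zero coordinate, then $\mathbf a'\ge\bar{\mathbf a}$. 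Any $f\in\Dc_2$ with $f(\mathbf a)=0$ has $f(\bar{\mathbf a})=1$ by self-duality, hence $f(\mathbf a')=1$ by monotonicity, so the pair is not realizable.
\end{itemize}
Thus realizability in $\Dc_2$ is again ``every pair in $Z$ has a common zero coordinate'', which by the key property means $V\setminus C$ is independent in the graph $H$. Objective values are preserved, as required.
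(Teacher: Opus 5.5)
Your proposal is correct and is essentially the paper's proof: the same coordinates indexed by small non-hyperedges, the same key property, and the same Baker--Pixley argument for $\Dc_2$. The one difference is that you get both directions for the four separating clones at once from the characterization in the proof of Proposition~\ref{prop:k-approx}, whose precedence and boundary constraints are vacuous on this sample, whereas the paper proves the lower bound directly for $\Sc_0^k$ (which also contains $\Dc_2$) and builds an explicit downset hypothesis in $\Sc_{00}^k$.

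One small caution about $\Dc_2$. There you only established realizability of the \emph{negative} part of a relabeling, so your opening claim that the full relabeling, with $C$ labelled $1$, is realizable exactly when $C$ is a cover is not what you proved. It in fact fails for a single edge $\{u,v\}$ with $C=\{u,v\}$, since $\mathbf a_u$ and $\mathbf a_v$ are then complementary. This does not harm the reduction: all it needs is a function of $\Dc_2$ vanishing on $Z$, which makes at most $|C|$ mistakes, and that you did establish.
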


\begin{proof}
By duality we may assume $\thr^{k+1}_2\in\Cc\subseteq\Sc_0^k$.  Let
$H=(V,\mathcal E)$ be a $k$-uniform hypergraph.  Call a
set $X\subseteq V$ \emph{hyperedge-free}
if no hyperedge is a subset of $X$.  Since every hyperedge has exactly $k$
vertices, a set of at most $k$ vertices is hyperedge-free exactly when it is
not itself a hyperedge.  We create one point $\mathbf p_v$ for every vertex
$v\in V$.  For every hyperedge-free $X\subseteq V$ with $|X|\le k$, introduce a
coordinate $c_X$, and define
\[
  \mathbf p_v(c_X)=0
  \quad\Longleftrightarrow\quad
  v\in X.
\]
Since $k$ is fixed, the number of coordinates is polynomial in $|V|$.
The construction has the following property: for every nonempty
$Y\subseteq V$ with $|Y|\le k$,
\[
  \{\mathbf p_v:v\in Y\}\text{ has a common zero coordinate}
  \quad\Longleftrightarrow\quad
  Y\text{ is not a hyperedge of }H.
\]
Indeed, a common zero coordinate of these points is a coordinate $c_X$ with
$Y\subseteq X$.  If one exists then $Y$ is hyperedge-free, being a subset of the
hyperedge-free set $X$, and if $Y$ is hyperedge-free then $c_Y$ is such a
coordinate.  Label every $\mathbf p_v$ negatively.

The two directions of the reduction are established for different clones:
from a hypothesis in the largest clone $\Sc_0^k$ we extract a vertex cover
whose size is its number of mistakes, and from a vertex cover we build a
hypothesis with at most that many mistakes in $\Sc_{00}^k$ and, when $k=2$,
also in $\Dc_2$.  Since $\Cc\subseteq\Sc_0^k$ contains $\Sc_{00}^k$ or is
$\Dc_2$, both directions apply to $\Cc$.

\emph{From hypotheses to covers.}  Let $f\in\Sc_0^k$ misclassify the examples
indexed by $S_f\subseteq V$, so that the correctly classified examples are
indexed by $I_f=V\setminus S_f$.  Since
$\{\mathbf p_v:v\in I_f\}\subseteq f^{-1}(0)$ and $f$ is $0$-separating of
degree $k$, every $k$ points among these have a common zero coordinate.  If
some hyperedge $e\in\mathcal E$ were contained in $I_f$, then the $k$ points
$\{\mathbf p_v:v\in e\}$ would have a common zero coordinate, contradicting the
property above.  Thus $S_f$ meets every hyperedge, so $S_f$ is a vertex cover
of $H$, of size equal to the number of mistakes of $f$.  

\emph{From covers to hypotheses.}  Let $S\subseteq V$ be a vertex cover, and
put $I=V\setminus S$, which is hyperedge-free.  Let $Z$ be the downset of
$\{0,1\}^n$ generated by the points $\mathbf p_v$ with $v\in I$, together with
$\zero$:
\[
  Z=\{\mathbf q:\mathbf q\le\mathbf p_v\text{ for some }v\in I\}\cup\{\zero\},
\]
and let $f$ be the function with zero set $Z$.  We check that $f\in\Sc_{00}^k$.
First, $Z$ satisfies the degree-$k$ condition.  Any $k$ points of $Z$ are
$\zero$ or lie below points $\mathbf p_{v_1},\ldots,\mathbf p_{v_r}$ with
$v_1,\ldots,v_r\in I$ and $r\le k$.  The set $\{v_1,\ldots,v_r\}$ is a
hyperedge-free subset of $I$, so the coordinate $c_{\{v_1,\ldots,v_r\}}$ is zero
on each $\mathbf p_{v_i}$, hence on every point below one of them, and on
$\zero$.  Here we use that a coordinate that is zero on a point is zero on
every point below it, so that the degree-$k$ condition passes from a set of
points to the downset it generates.  By the zero-set criterion,
Fact~\ref{fact:zero-set}, $f\in\Sc_0^k$.  Second, $f$ is monotone, because
its zero set is a downset.  Third, $f(\zero)=0$ because $\zero\in Z$, and
$f(\one)=1$ because $\one\notin Z$: every $\mathbf p_v$ has the zero coordinate
$c_{\{v\}}$, a single vertex being hyperedge-free as $k\ge2$, and no point
below a point with a zero coordinate is $\one$.  So $f\in\Sc_0^k\cap\Mc\cap\Rc_2=\Sc_{00}^k$.
Finally $f(\mathbf p_v)=0$ for every $v\in I$, so $f$ makes at most $|S|$
mistakes.  It may make fewer, if some $\mathbf p_v$ with $v\in S$ happens to lie
in $Z$, and that only helps.  This direction produces a hypothesis in
$\Sc_{00}^k$, hence in each of the four clones $\Sc_{0*}^k$.

\emph{From covers to hypotheses in $\Dc_2$.}  Let $k=2$, so that $H$ is a
graph, and let again $S$ be a vertex cover and $I=V\setminus S$, which is
now an independent set.  We need $f\in\Dc_2$ with $f(\mathbf p_v)=0$ for
all $v\in I$, that is, the labeled sample $\{(\mathbf p_v,0):v\in I\}$ must
be realizable in $\Dc_2$.  Its sub-samples of at most two examples are
realizable by projections: for $u,v\in I$, distinct or not, the set
$\{u,v\}$ is a hyperedge-free subset of $I$, so $c_{\{u,v\}}$ is a common
zero coordinate of $\mathbf p_u$ and $\mathbf p_v$.  Since $\maj$ is a
near-unanimity term of $\Dc_2$ of arity $3$,
Proposition~\ref{prop:nu-interpolation} with $k=2$ gives $f$, and it makes
at most $|S|$ mistakes.

\emph{Conclusion.}  By the first direction every hypothesis in
$\Cc\subseteq\Sc_0^k$ makes at least $\tau(H)$ mistakes, and by the second
some hypothesis in $\Cc$ makes at most $\tau(H)$, so the optimum of the ERM
instance is exactly $\tau(H)$, for each clone $\Cc$ of the interval.
\end{proof}

An objective-preserving reduction is in particular approximation-preserving, so
parts~(b) and~(c) of Theorem~\ref{thm:vc-facts} transfer to ERM for all
clones of Proposition~\ref{prop:vc-to-erm}.  The factor $k$ is therefore
optimal under the Unique Games Conjecture.

\subsection{The tractable cases}

Every clone outside the two hard regimes admits a polynomial-time ERM
algorithm.  Two elementary algorithms do all the work --- pointwise majority for
$\BF$, minimum cut for $\Mc$ --- and the remaining cases reduce to them by forcing
labels or enumerating a coordinate.

\begin{proposition}[Base ERM algorithms]\label{prop:erm-base}
ERM for $\BF$ and ERM for $\Mc$ are solvable in polynomial time.
\end{proposition}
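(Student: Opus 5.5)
We treat the two clones separately, working throughout with ERM for the clone, that is, with weighted relabelings of the distinct sample points, using the counts $m_+(\mathbf a)$ and $m_-(\mathbf a)$ introduced above.

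\emph{The clone $\BF$.} Every relabeling of the sample points is realizable in $\BF$, since every function $\{0,1\}^n\to\{0,1\}$ lies in $\BF$. The cost of a relabeling is the sum over points of $m_-(\mathbf a)$ if the point is labeled $1$ and $m_+(\mathbf a)$ if it is labeled $0$. These terms are independent, so the optimum is obtained pointwise: assign each sample point its majority label $1$ if $m_+(\mathbf a)\ge m_-(\mathbf a)$ and $0$ otherwise. This plainly runs in linear time, and a formula is then produced by Theorem~\ref{thm:boolean-fitting-constructive}.

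\emph{The clone $\Mc$.} I would first show that a relabeling of the sample points $P$ is realizable in $\Mc$ iff it is monotone on $P$, i.e.\ $\mathbf a\le\mathbf a'$ implies label$(\mathbf a)\le$ label$(\mathbf a')$. Necessity is immediate. For sufficiency, take $f$ to be the indicator of the upward closure in $\{0,1\}^n$ of the points labeled $1$. This $f$ is monotone, hence in $\Mc$ (constants included). It takes value $1$ on the points labeled $1$. It takes value $0$ on each point labeled $0$, since such a point lying above a point labeled $1$ would violate monotonicity on $P$.

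ERM for $\Mc$ is then the problem of choosing a down-closed subset $D\subseteq P$ (the points labeled $0$) minimizing $\sum_{\mathbf a\in D}m_+(\mathbf a)+\sum_{\mathbf a\notin D}m_-(\mathbf a)$. This is the classical minimum-weight closure problem, and I would solve it by a minimum $s$--$t$ cut:
\begin{itemize}
\item take a node for each sample point;
\item add an edge from $s$ to $\mathbf a$ of capacity $m_-(\mathbf a)$, cut exactly when $\mathbf a$ is labeled $1$, i.e.\ on the $t$-side;
\item add an edge from $\mathbf a$ to $t$ of capacity $m_+(\mathbf a)$, cut exactly when $\mathbf a$ is labeled $0$, i.e.\ on the $s$-side;
\item add an infinite-capacity edge from $\mathbf a'$ to $\mathbf a$ whenever $\mathbf a\le\mathbf a'$.
\end{itemize}
A finite cut then cannot place $\mathbf a'$ on the $s$-side and $\mathbf a$ on the $t$-side. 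Equivalently, it cannot label $\mathbf a'$ with $0$ and $\mathbf a$ with $1$ when $\mathbf a\le\mathbf a'$, which is exactly the monotonicity requirement. So finite cuts correspond bijectively to monotone relabelings, and the cut capacity equals the number of mistakes. A finite cut always exists, since all-$0$ and all-$1$ are monotone. Max-flow computes the minimum cut in polynomial time.

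The main point requiring care is this bijection, in particular the orientation of the infinite edges. Alternatively one can appeal to Lemma~\ref{lem:rounding}-type LP integrality: the constraint matrix of a program with only precedence constraints $x_{\mathbf a}\le x_{\mathbf a'}$ is totally unimodular (a network matrix), so its LP relaxation already has integral optimal vertices. Either route gives an optimal realizable relabeling, and Theorem~\ref{thm:boolean-fitting-constructive} converts it to a formula or circuit.
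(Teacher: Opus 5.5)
Your proof is correct and follows the paper's argument: pointwise majority for $\BF$, and for $\Mc$ a reduction to a minimum-cost closure problem solved by one $s$--$t$ min-cut, with realizability guaranteed by taking the upward closure of the points labeled $1$. You go further than the paper by writing out the cut network, with the infinite edges correctly oriented from $\mathbf a'$ to $\mathbf a$ whenever $\mathbf a\le\mathbf a'$, and your total-unimodularity alternative is also sound, since each precedence row has one $+1$ and one $-1$.
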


\begin{proof}
For $\BF$, every relabeling of the sample is realizable, and the values at
distinct sample points are independent.  Hence the relabeling that assigns to
each sample point the label occurring more often at it is optimal.

For $\Mc$, the positive region of a monotone function is an upset in the
coordinatewise order on $\{0,1\}^n$.  After merging duplicate examples, choosing
an upset $U$ has cost
\[
  \sum_{\mathbf a\in U}m_-(\mathbf a)+\sum_{\mathbf a\notin U}m_+(\mathbf a).
\]
Thus the problem is the minimum-cost upset problem on the finite poset induced
by the sample points.  Equivalently, its complement is a minimum-cost downset,
which is a standard minimum-cost closure problem~\cite{Picard} and can be solved by one
$s$--$t$ min-cut computation.  Every upset of the induced sample poset extends
to an upset of the full Boolean cube by taking its upward closure, so the
computed relabeling of the sample is realizable by a monotone Boolean
function.
\end{proof}

\begin{theorem}[The tractable regime]\label{thm:erm-tractable}
ERM for $\Cc$ is solvable in polynomial time whenever $\Cc$ is one of
$\BF$, $\Rc_*$, $\Mc_*$, $\Sc_{0*}$, $\Sc_{1*}$, $\Dc$, $\Dc_1$, $\Nc_*$ or $\Ic_*$.
\end{theorem}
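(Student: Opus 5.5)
The plan is to reduce every listed clone to one of the two base algorithms of Proposition~\ref{prop:erm-base}. We force labels with weight-$W$ repeated examples, or we enumerate over a coordinate. We treat the clones family by family and use duality throughout; the $1$-separating clones, for example, reduce to the $0$-separating ones by complementing coordinates and flipping labels.

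\emph{$\Rc_*$ and $\Mc_*$.} The clones $\Rc_0,\Rc_1,\Rc_2$ consist of all Boolean functions subject only to the boundary conditions $f(\zero)=0$ and/or $f(\one)=1$. Pointwise majority therefore remains optimal once we override the label at $\zero$ and/or $\one$, if these are sample points. Similarly, $\Mc_0,\Mc_1,\Mc_2$ are the monotone functions with the same boundary conditions. We add weight-$W$ copies of $(\zero,0)$ and/or $(\one,1)$ and run the min-cut algorithm for $\Mc$. The forcing is sound because some monotone function satisfies these conditions, and every upset avoiding $\zero$ and containing $\one$ extends to the cube as required.

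\emph{$\Sc_{0*}$.} A function of $\Sc_0$ is $0$-separating: its zero set has a common zero coordinate $i$. Conversely, for fixed $i$, every labeling whose zero points all have $a_i=0$ is realized by $x_i\vee g$ for suitable $g$. Indeed $f$ with $f^{-1}(0)=Z\subseteq\{a_i=0\}$ lies in $\Sc_0$, and for $\Sc_{02},\Sc_{01},\Sc_{00}$ we additionally require $f(\zero)=0$, monotonicity, or both. We enumerate $i\in[n]$ together with the option that the zero set is empty, which means the constant $\top$ where allowed. For each $i$, sample points with $a_i=1$ are forced to label $1$; they cost $m_-(\mathbf a)$ each. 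The remaining points are labeled freely: by majority for $\Sc_0$, by majority with $\zero$ forced to $0$ for $\Sc_{02}$, and by min-cut on the poset with forced $1$'s for $\Sc_{01},\Sc_{00}$. It remains to check that the optimal relabeling, after downward closure of its zero set, keeps the common zero coordinate $i$. It does, because coordinate zeros pass downward, as in the proof of Proposition~\ref{prop:k-approx}. We take the best over all $i$.

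\emph{$\Dc$, $\Dc_1$, $\Nc_*$, $\Ic_*$.} For $\Dc$ the only constraint is self-duality, $f(\bar{\mathbf a})=\neg f(\mathbf a)$. We pair each sample point with its complement and choose the cheaper of the two consistent labelings per pair, independently; $\Dc_1$ additionally forces $f(\zero)=0$ and $f(\one)=1$. The clones $\Nc_*$ and $\Ic_*$ contain only $O(n)$ functions of each arity, namely literals and constants, and we try them all.

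The main obstacle is the $\Sc_{0*}$ step. We must verify that the per-coordinate reduction with min-cut is exact for the monotone variants, i.e.\ that the downward closure of the chosen zero set stays inside $\{a_i=0\}$ and avoids $\one$. Everything else is routine bookkeeping with forced weights.
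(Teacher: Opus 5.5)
Your proposal is correct and follows essentially the same route as the paper. It uses boundary forcing on top of the $\BF$ and min-cut algorithms for $\Rc_*$ and $\Mc_*$, and it enumerates the separating coordinate $i$, forcing the points with $a_i=1$ positive, for $\Sc_{0*}$ (dually $\Sc_{1*}$). For $\Dc$ and $\Dc_1$ you make independent choices on complementary pairs, which the paper phrases as a reduction to $\BF$ via canonical representatives, and you use brute-force enumeration for $\Nc_*$ and $\Ic_*$. The separate ``empty zero set'' option in your $\Sc_{0*}$ step is harmless but redundant, since for every $i$ the all-positive relabeling is already among the candidates.
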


\begin{proof}
The clones $\BF$ and $\Mc$ are Proposition~\ref{prop:erm-base}.  Every other case
reduces to one of these two, except the unary clones, which are solved by
enumeration.

\emph{Boundary clones.}  For $\Rc_*$, reduce to ERM for $\BF$ by boundary forcing:
add $W$ copies of $(\zero,0)$ for $\Rc_0$, of $(\one,1)$ for $\Rc_1$, and of both
for $\Rc_2$.  Every optimum of the enlarged sample then satisfies the required
boundary condition, and conversely $\Rc_i$ consists of exactly the Boolean
functions satisfying that condition, so the two instances have the same optimum
value.  The same argument reduces $\Mc_0,\Mc_1,\Mc_2$ to ERM for $\Mc$.

\emph{Ordinary separating clones.}  A function $f$ lies in $\Sc_0$ precisely
when its zero set is contained in $\{\mathbf a:a_i=0\}$ for some coordinate
$i$, that is, when
setting $x_i$ to $1$ forces the value $1$.  For a fixed $i$, add $W$ positive
copies of every sample point $\mathbf a$ with $a_i=1$, run the $\BF$
algorithm, let $h$ be a Boolean function realizing the relabeling it returns,
and put $f(\mathbf x)=x_i\vee h(\mathbf x)$.
On the sample points $\mathbf a$ with $a_i=0$ the two agree, and on the remaining
ones both predict $1$ because of the
forcing.  Moreover the zero set of $f$ is contained in
$\{\mathbf a:a_i=0\}$, so $f\in \Sc_0$ and
the relabeling is realizable in $\Sc_0$.
Trying all $i\in[n]$ and keeping the best solution puts
ERM for $\Sc_0$ in $\Ptime$.  Dually, for $\Sc_1$, force
every sample point $\mathbf a$ with $a_i=0$ to be negative and replace $h$ by
$x_i\wedge h(\mathbf x)$.

The six remaining clones combine this coordinate enumeration with the boundary
forcing.  For $\Sc_{02}$ and $\Sc_{12}$, add the $\Rc_2$ boundary copies to the
enlarged $\BF$-instance.  For the monotone ones, reduce to ERM for $\Mc$ instead:
for $\Sc_{01}=\Sc_0\cap \Mc$, fix $i$, force the points $\mathbf a$ with $a_i=1$ positive and
solve the resulting $\Mc$-instance; for $\Sc_{00}=\Sc_0\cap \Rc_2\cap \Mc$, add the
boundary labels $\zero\mapsto0$ and $\one\mapsto1$ as well; and $\Sc_{11},\Sc_{10}$
are dual.  The same extensions apply, since $x_i\vee h$ is monotone and
$0$-separating, and $x_i\wedge h$ monotone and $1$-separating, whenever $h$ is
monotone.

\emph{The self-dual clones $\Dc$ and $\Dc_1$.}  Let $\rho$ choose a canonical
representative of each complement pair, say the lexicographically smaller of
$\mathbf a$ and $\bar{\mathbf a}$, so that
$\rho(\mathbf a)\in\{\mathbf a,\bar{\mathbf a}\}$ and
$\rho(\mathbf a)=\rho(\bar{\mathbf a})$, and
let $\sigma(\mathbf a)=0$ if $\mathbf a=\rho(\mathbf a)$ and
$\sigma(\mathbf a)=1$ otherwise.  A self-dual
function is determined by its values on the representatives: writing $g$ for its
restriction to them,
\[
  f(\mathbf a)=g(\rho(\mathbf a))\oplus\sigma(\mathbf a),
\]
and $g$ ranges over all Boolean functions on the representatives as $f$ ranges
over $\Dc$.  Transform each labeled example $(\mathbf a,b)$ into
$\bigl(\rho(\mathbf a),\,b\oplus\sigma(\mathbf a)\bigr)$ and call the
resulting sample $E^\rho$.
Since $f(\mathbf a)=b$ if and only if
$g(\rho(\mathbf a))=b\oplus\sigma(\mathbf a)$, mistakes are
preserved example by example, so $\opt_\Dc(E)=\opt_{\BF}(E^\rho)$.  For
$\Dc_1=\Dc\cap \Rc_2$, use the same normalization and force the pair
$\{\zero,\one\}$ to the orientation $\zero\mapsto0$, $\one\mapsto1$.  With the
lexicographic representative this is the single forced normalized label
$(\zero,0)$.

\emph{Unary and projection clones.}  The clones $\Nc_*$ and $\Ic_*$ contain only
constants, projections and negated projections, so there are $O(n)$ candidate
hypotheses --- $\bot$, $\top$, $x_i$ and $\neg x_i$, the admissible subset depending on
the clone --- and enumerating them gives the optimum.
\end{proof}

\subsection{Putting the pieces together}\label{sec:erm-assembly}

The hardness results of regime~(i) were stated for the clones $\Ec$, $\Vc$ and
$\Lc$ at the top of their intervals; the passage from there to the rest of each
interval is the same in all three cases, and it is the only step that is not
already in the sources.  Suppose $\Cc=[O]$ is not the clone at the top of its
interval.  Adjoining the truth constants then generates that clone: $\Cc^{\pm}$
is $\Ec$, $\Vc$ or $\Lc$ respectively.  By Lemma~\ref{lem:constant-adjunction},
ERM for $\Cc^{\pm}$ reduces to ERM for $\Cc$ by a reduction that leaves the
sample size unchanged and preserves the mistakes of every relabeling, so it
preserves empirical risks and promises as well, and a weak approximator for
$\Cc$ would give one for $\Cc^{\pm}$.  Finally, an exact ERM algorithm is a
weak approximator, since on the promised instances it returns a relabeling of
empirical risk at most $\varepsilon$.  So ERM for $\Cc$ is $\NP$-hard in all of
these cases, and by the observation in the introduction it has no
constant-factor approximation either.

Theorem~\ref{thm:erm} follows.  Regime~(i) is
Theorems~\ref{thm:weak-agnostic} and~\ref{thm:weak-agnostic-monomials} together
with the propagation just described; regime~(ii) is
Propositions~\ref{prop:k-approx} and~\ref{prop:2-approx-d2} together with
Proposition~\ref{prop:vc-to-erm} applied to Theorem~\ref{thm:vc-facts}; and
regime~(iii) is
Theorem~\ref{thm:erm-tractable}, the enumeration of Post's lattice showing
that the three regimes leave no clone unaccounted for.  All of this is proved
for ERM for the clone $[O]$, which by the discussion at the beginning of the
section is equivalent, with approximation ratios preserved, to ERM for
$\PL_O$ and to ERM for $\CIR_O$, so the theorem holds in both forms.

\section{VC dimension}\label{sec:vc}

In preparation for the next section, where we study PAC learnability, 
we clarify the VC dimension of each fragment. 
A set $A\subseteq\{0,1\}^n$ is \emph{shattered} by a class $\mathcal F$ of
$n$-ary Boolean functions if every subset of $A$ is of the form
$A\cap f^{-1}(1)$ with $f\in\mathcal F$, and the \emph{VC dimension} of
$\mathcal F$ is the largest size of a shattered set.  

The VC dimension of
$\PL_O$ (or, equivalently, of $\CIR_O$) grows either linearly or exponentially
in the number of variables, depending on $O$. More precisely, the
dividing line is given by the clones $\Ec$, $\Vc$ and $\Lc$.

\thmvc*

The proof uses the following basic fact about Post's lattice, which will
also be used in the next section.

\begin{fact}
\label{fact:minimal-hard}
For all clones  $\Cc$, the following are equivalent:
\begin{enumerate}
\item  $\Cc\not\subseteq \Ec$, $\Cc\not\subseteq \Vc$ and $\Cc\not\subseteq \Lc$;
\item $\Dc_2\subseteq \Cc$  or $\Sc_{00}\subseteq \Cc$ or $\Sc_{10}\subseteq \Cc$. Equivalently, $\Cc$ contains at least one of the three functions
\[
  \maj(x,y,z),
  \qquad
  f_{\vee\wedge}(x,y,z)=x\vee(y\wedge z),
  \qquad
  f_{\wedge\vee}(x,y,z)=x\wedge(y\vee z).
\]
\end{enumerate}
Moreover, if $\Cc$ contains $\maj$ but neither $f_{\vee\wedge}$ nor
$f_{\wedge\vee}$, then $\Cc\subseteq\Dc$.
\end{fact}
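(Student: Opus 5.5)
The plan is to argue through the classification of Post's lattice in Table~\ref{tab:clones} and Figure~\ref{fig:post-lattice}, using that every clone is generated by its minimal elements above $\Ic_2$ in the lattice order. The task is to identify which clones lie in none of $\Ec$, $\Vc$, $\Lc$.

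For (2)$\Rightarrow$(1), I would check each function directly. The function $\maj$ is not a conjunction, not a disjunction and not affine, and neither is $f_{\vee\wedge}$ nor $f_{\wedge\vee}$. Each of them depends essentially on all three variables, yet none is an AND, an OR or a parity of its variables. So a clone containing any of them is contained in none of $\Ec$, $\Vc$, $\Lc$. The equivalence of ``$\Dc_2\subseteq\Cc$'' with ``$\maj\in\Cc$'' is immediate from $\Dc_2=[\maj]$. Likewise $\Sc_{00}=[f_{\vee\wedge}]$ and $\Sc_{10}=[f_{\wedge\vee}]$ by the table.

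For (1)$\Rightarrow$(2), I would go through the list of clones in Table~\ref{tab:clones}. The clones contained in $\Ec$, $\Vc$ or $\Lc$ are exactly $\Ec_*$, $\Vc_*$, $\Lc_*$, $\Nc_*$ and $\Ic_*$. Here $\Nc\subseteq\Lc$, since negation and the constants are affine, and $\Ic_*\subseteq\Ec\cap\Vc$. Every remaining clone is one of the following:
\begin{itemize}
\item $\BF$, $\Rc_*$, $\Mc_*$ or $\Dc_*$, each of which contains $\maj$;
\item one of $\Sc_{0*}$ or $\Sc_{0*}^k$, each of which contains $f_{\vee\wedge}$, since the table's bases for $\Sc_{00}$ and $\Sc_{00}^k$ include it and $\Sc_{00}\subseteq\Sc_{00}^k\subseteq$ the others;
\item one of $\Sc_{1*}$ or $\Sc_{1*}^k$, each of which dually contains $f_{\wedge\vee}$.
\end{itemize}
For the first item, $\maj\in\Mc_2$ and $\Dc_2\subseteq\Dc_1\subseteq\Dc$, and $\Mc_2$ lies below $\Rc_2$ and hence below everything in that family. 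For the $\Sc$ families, the inclusions can be read off the edges of Figure~\ref{fig:post-lattice}. Completeness of the list, which is Post's theorem, finishes this direction.

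For the final claim, suppose $\maj\in\Cc$ but $f_{\vee\wedge},f_{\wedge\vee}\notin\Cc$. I would use the previous enumeration to identify the clones containing $\maj$: these are $\BF$, $\Rc_*$, $\Mc_*$, $\Dc_*$, and also $\Sc_{0*}^2$ and $\Sc_{1*}^2$, since $\thr^3_2=\maj$. All of $\BF$, $\Rc_*$ and $\Mc_*$ contain $\Mc_2\ni f_{\vee\wedge}$. The degree-$2$ separating clones contain $f_{\vee\wedge}$ or $f_{\wedge\vee}$ by the above. Only $\Dc$, $\Dc_1$ and $\Dc_2$ remain, and these are contained in $\Dc$.

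The main obstacle is bookkeeping: making sure no clone is missed and that the inclusions $\Sc_{00}\subseteq\Sc_{0*}^{(k)}$ and $\Mc_2\subseteq\Rc_*,\Mc_*,\BF$ are correctly justified. I would justify them by membership of the specific generator in each clone's defining class. For example, $f_{\vee\wedge}$ is monotone, preserves $0$ and $1$, and is $0$-separating via the coordinate $x$, which gives all $\Sc_{0*}$ memberships at once.
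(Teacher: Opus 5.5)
Your proof is correct and is the enumeration over Post's lattice (Table~\ref{tab:clones}) that the paper relies on; the paper states this as a basic fact about the lattice and gives no written proof. The one slip is your opening remark that every clone is generated by the minimal clones below it: that is false, since the only atom below $\Sc_{00}$ is $\Vc_2$ and $f_{\vee\wedge}\notin\Vc_2$, but your case analysis never uses it, so just drop it.
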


\begin{proof}[Proof of Proposition~\ref{prop:vc}]
For the upper bound, a class of VC dimension $d$ has at least $2^d$ members, so
it suffices to count.  The $n$-ary part of $\Ec=[\wedge,\top,\bot]$ consists of
the conjunctions $\bigwedge_{i\in S}x_i$ for $S\subseteq[n]$, with
$S=\varnothing$ giving $\top$, together with $\bot$, so it has $2^n+1$ members
and VC dimension at most $n$; dually for $\Vc$.  The $n$-ary part of
$\Lc=[\oplus,\top,\bot]$ consists of the functions
$c\oplus\bigoplus_{j\in J}x_j$ with $c\in\{0,1\}$ and $J\subseteq[n]$, so it
has $2^{n+1}$ members and VC dimension at most $n+1$.  Subclones only shrink
these classes.

For the lower bound, VC dimension is monotone in the class, so by
Fact~\ref{fact:minimal-hard} it suffices to exhibit, for $n\ge2$, a set of
size $2^{\Omega(n)}$ shattered by each of $\Dc_2$, $\Sc_{00}$ and $\Sc_{10}$.
We may assume that $n$ is even, since the $n$-ary part of a clone contains
every $(n-1)$-ary member with a dummy variable added, so that the VC
dimension in $n$ variables is at least that in $n-1$ variables.  Let
$m=(n-2)/2$ and let $A$ consist of the $2^m$ points
\[
  (1,\ \mathbf b,\ \bar{\mathbf b},\ 0)\in\{0,1\}^n,
  \qquad \mathbf b\in\{0,1\}^m .
\]
Every point of $A$ has first coordinate $1$, last coordinate $0$ and exactly
$m+1$ ones.  The argument has three steps.
First, $A$ is shattered by $\Mc$.  Its members have the same number of ones,
so none lies below another, and for $T\subseteq A$ the indicator function of
the upset generated by $T$ is monotone and is $1$ exactly on $T$ within $A$.
Second, on $A$ the coordinates $x_1$ and $x_n$ can stand in for the truth
constants, since $a_1=1$ and $a_n=0$ for every $\mathbf a\in A$.  Precisely,
let $\Cc$ be a clone and $f\in\Cc^{\pm}$.  As in the proof of
Lemma~\ref{lem:constant-adjunction}, $f(\mathbf x)=g(\mathbf x,0,1)$ for
some $g\in\Cc$, and $g(\mathbf x,x_n,x_1)$ is again a function of $\Cc$,
which agrees with $f$ on $A$.  Hence $A$ is shattered by $\Cc$ whenever it
is shattered by $\Cc^{\pm}$.
Third, $\Cc^{\pm}=\Mc$ for each of the three clones. Indeed,
$\maj(\bot,x,y)=x\wedge y$ and $\maj(\top,x,y)=x\vee y$ for $\Dc_2$,
$f_{\vee\wedge}(\bot,y,z)=y\wedge z$ and $f_{\vee\wedge}(x,y,y)=x\vee y$ for
$\Sc_{00}$, and dually for $\Sc_{10}$.  So $A$ is shattered by all three.
\end{proof}

\section{PAC learning}\label{sec:pac}

We make the learnability notions of the introduction precise.  For a
formula or circuit $\varphi$ over $n$ variables, let
$K(\varphi)=[\varphi]^{-1}(1)\subseteq\{0,1\}^n$ be the concept it defines,
and let $\size{\varphi}$ be the size of the representation.
\emph{Polynomially properly PAC learnable} is the first notion of the
introduction with these conventions.  In
\emph{polynomially PAC predictable with membership queries}, the weaker demand
of~\cite{AK95}, the predicted label must be wrong with probability at most
$1/2-1/p(s,n)$ for a polynomial $p$, and the learner must run in time polynomial
in $s$, $n$ and $1/\varepsilon$.

The VC bounds of Section~\ref{sec:vc}, with the fitting algorithm of
Section~\ref{sec:fitting}, already give the stronger,
representation-insensitive form of learnability: a single learner that
handles every target in $\PL_O$, with a number of examples polynomial in $n$,
$1/\varepsilon$ and $1/\delta$ but not permitted to grow with the size of the
target's representation.

\begin{proposition}[Learning $\PL_O$ uniformly]\label{prop:pac-uniform}
Let $O$ be a finite basis.  There is a polynomial-time learner for
$\PL_O$ using $\mathrm{poly}(n,1/\varepsilon,\log(1/\delta))$ examples
if and only if $O\preceq\{\wedge,\top,\bot\}$, $O\preceq\{\vee,\top,\bot\}$ or
$O\preceq\{\oplus,\top,\bot\}$.  The statement is unconditional, and in the
positive cases the hypothesis returned is a $\PL_O$-formula with at most $n+2$
leaves.
\end{proposition}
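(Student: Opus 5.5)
The plan is to prove the two directions separately. The positive direction uses the standard consistent-learner bound together with Proposition~\ref{prop:vc} and Theorem~\ref{thm:boolean-fitting-constructive}. The negative direction uses the VC lower bound of Proposition~\ref{prop:vc} together with the classical lower bound on sample complexity in terms of VC dimension.

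\emph{Positive direction.} Suppose $O\preceq\{\wedge,\top,\bot\}$, $O\preceq\{\vee,\top,\bot\}$ or $O\preceq\{\oplus,\top,\bot\}$. By Proposition~\ref{prop:vc}, the $n$-ary part of $[O]$ has VC dimension at most $n+1$. The proof of that proposition also shows that this part has at most $2^{n+1}+1$ members, so the finite-class Occam bound already suffices. The learner draws $m=O\bigl((1/\varepsilon)(n+\log(1/\delta))\bigr)$ examples and runs the fitting algorithm. The sample is realizable because the target lies in $\PL_O$. In cases (2) and (4) of the proof of Theorem~\ref{thm:boolean-fitting-constructive}, the fitting algorithm returns a fitting formula directly. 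Any consistent hypothesis from a class of size $2^{O(n)}$ then has error at most $\varepsilon$ with probability at least $1-\delta$. For the leaf bound, the returned function is a conjunction, a disjunction or an affine function of at most $n$ variables, possibly with truth constants. By the remark at the end of Section~\ref{sec:prelim}, such a function is written as a chain of $O$-connectives with at most $n+2$ leaves. One point needs care here: the fitting procedure must output a formula in this normal form rather than a Baker--Pixley term. That is automatic, since these clones fall under cases (2)--(4) of the fitting proof, and each of those cases outputs such chains.

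\emph{Negative direction.} Otherwise, Proposition~\ref{prop:vc} shows that the VC dimension of $n$-ary $\PL_O$ is $2^{\Omega(n)}$. I will invoke the Ehrenfeucht--Haussler--Kearns--Valiant lower bound. It says that any learner, with no restriction on its hypothesis class and no restriction on its running time, needs $\Omega(d/\varepsilon)$ examples for, say, $\varepsilon=\delta=1/8$ on a class of VC dimension $d$. The adversarial distribution is uniform over a shattered set, and the target is chosen among the functions realizing all labelings of that set. These targets lie in $\PL_O$, and because the learner here is uniform, their formula size plays no role. This forces the number of examples to be $2^{\Omega(n)}$, which is not $\mathrm{poly}(n,1/\varepsilon,\log(1/\delta))$. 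The argument is information-theoretic, so it holds unconditionally.

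The main obstacle is bookkeeping rather than any deep step. In the positive direction, I have to check that the hypotheses produced by the fitting algorithm in the relevant cases really have at most $n+2$ leaves for every subclone. This includes subclones lacking constants, such as $\Lc_2=[\oplus^3]$, where an odd number of variables must be chained via $\oplus^3$. Here I will rely on the affine conditions listed in case~(2) of the fitting proof, which guarantee that the chosen support has the parity that $O$ can express.
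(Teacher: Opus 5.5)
Your proposal is correct and is essentially the paper's proof. For the positive direction it uses consistency together with the VC bound of Proposition~\ref{prop:vc} (or, as you note, the cardinality bound), applied to the fitting formulas from cases~(2) and~(4) of the proof of Theorem~\ref{thm:boolean-fitting-constructive}; for the converse it uses the information-theoretic $\Omega(d/\varepsilon)$ lower bound over a shattered set, which does not depend on target size. One caveat, which you inherit from the remark at the end of Section~\ref{sec:prelim} and which the paper's proof shares: the count of $n+2$ leaves holds for the representative bases of Table~\ref{tab:clones} but not for an arbitrary basis. For instance, if $O$ consists of the single $10$-ary conjunction $y_1\wedge\cdots\wedge y_{10}$, a basis of $\Ec_2$, every $\PL_O$-formula with $t$ internal nodes has $1+9t$ leaves, so $x_1\wedge x_2$ with $n=2$ already needs $10>n+2$ of them. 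In general only a bound linear in $n$, with a constant depending on $O$, holds.
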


\begin{proof}
In the three positive cases, Proposition~\ref{prop:vc} bounds the VC dimension by
$n+1$, so a sample of size $O\bigl((n+\log(1/\delta))/\varepsilon\bigr)$
suffices for uniform convergence, and the fitting algorithm of
Theorem~\ref{thm:boolean-fitting-constructive} --- cases~(2) and~(4) of its
proof --- turns such a sample into a consistent $\PL_O$-formula in polynomial
time with at most $n+2$ leaves, namely a conjunction, a disjunction or a
parity of variables with at most two further leaves.  A consistent learner for a class of polynomial
VC dimension is a polynomial-time PAC learner.  The hypothesis is a formula, so
the positive half holds whether the learner must output a formula or may output
a circuit, and in particular for proper learning.

Conversely, outside the three cases Proposition~\ref{prop:vc} gives VC
dimension $2^{\Omega(n)}$, and any learner for a class of VC dimension $d$
requires $\Omega(d/\varepsilon)$ examples, hence exponential running time.  This holds however the learner represents its hypotheses.
\end{proof}

The positive half of Theorem~\ref{thm:pac} is contained in this proposition.
Its negative half is not: the obstruction above is information-theoretic and
evaporates once the running time and sample complexity  are allowed 
to depend on the size of the target concept. The remainder of this section
proves this direction. 

By the \emph{cryptographic assumptions} we mean throughout the assumption that
at least one of the following is intractable: testing quadratic residuosity
modulo a composite, inverting RSA encryption, and factoring Blum integers.
The starting point is the following theorem.

\begin{theorem}[{\cite{KV94,AK95}}]\label{thm:kv-ak}
Under the cryptographic assumptions, $\PL_{\{\wedge,\vee,\neg\}}$ is not
polynomially PAC predictable with membership queries.
\end{theorem}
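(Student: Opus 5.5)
This is the Kearns--Valiant / Angluin--Kharitonov result, so the plan is to recall the structure of its proof and cite the two ingredients, not to reprove the cryptography from scratch.

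\emph{Step 1: a hard circuit class.} Under any of the three assumptions, Kearns and Valiant~\cite{KV94} build a family of concepts computable by polynomial-size $\mathsf{NC}^1$ circuits, hence by logarithmic-depth circuits over $\{\wedge,\vee,\neg\}$, that is not weakly predictable from random examples. The constructions are:
\begin{itemize}
\item iterated products modulo a composite, for quadratic residuosity and factoring Blum integers;
\item modular exponentiation with a precomputed power table, for RSA.
\end{itemize}
The key is that each concept reveals, for instance, the least significant bit of a trapdoor inverse. A weak predictor would then invert the one-way function with non-negligible advantage. By the standard hard-core bit results this contradicts the assumption. Since the circuits have depth $O(\log s)$, unfolding them into trees gives formulas of polynomial size. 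So the same hardness holds for $\PL_{\{\wedge,\vee,\neg\}}$ with a polynomial bound $s$ on formula size.

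\emph{Step 2: adding membership queries.} This follows Angluin and Kharitonov~\cite{AK95}. Their argument shows that for classes closed under a suitable "pseudorandom scrambling", membership queries do not help predict. Concretely:
\begin{itemize}
\item Compose the hard concepts with a pseudorandom function or a public-key encryption map, computable in $\mathsf{NC}^1$ under the same assumptions (e.g.\ the Naor--Reingold style construction).
\item Queries on points other than the challenge then return values computationally indistinguishable from those of independent random functions.
\item A simulator can therefore answer them itself, turning a predictor with queries into one without, which contradicts Step~1.
\end{itemize}
Again logarithmic depth keeps the formulas polynomial.

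\emph{Main obstacle.} The delicate point is checking that every function used is in $\mathsf{NC}^1$, not merely in polynomial-size circuits. Only then does the formula version, rather than the circuit version, follow. I would handle this by citing the explicit depth bounds in~\cite{KV94,AK95} (iterated multiplication in $\mathsf{NC}^1$ via Beame--Cook--Hoover) rather than reproving them. The theorem is then a direct citation, with this depth remark supplying the passage from circuits to formulas.
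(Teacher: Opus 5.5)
The paper gives no proof of this statement. It imports it from \cite{KV94} and \cite{AK95}, which already state it for Boolean formulas. As a citation your proposal agrees with the paper, and Step~1 is a fair summary of \cite{KV94}. Step~2, however, does not work as an argument, and it carries all of the membership-query content.

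The Kearns--Valiant concepts cannot be queried harmlessly. The hard-core-bit reductions you invoke in Step~1 consult the predictor only at points computable from public data. Membership queries are exactly such an oracle, and an error-free one. Take the factoring-based class, where the concept is the hard-core bit of a square root modulo a Blum integer $N$. A learner can form well-formed instances for $x^2 \bmod N$ with $x$ of Jacobi symbol $-1$. Running that reduction through its queries, it obtains a square root $z\neq\pm x$ and reads off the factor $\gcd(z-x,N)$; from then on it predicts perfectly. So these classes are in fact predictable with membership queries. Whatever you compose with them must destroy this, and all the work lies in the sentence ``a simulator can therefore answer them itself''.

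Your description does not do that work, for two reasons.

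First, if the composed concept looks random at learner-chosen points, it looks random at the challenge point too. A successful predictor is then a distinguisher, and Step~1 is never used. What you actually need is a pseudorandom function family computable by polynomial-size formulas under the stated assumptions. That is not in \cite{KV94,AK95}. The number-theoretic $\mathsf{TC}^0$ families of Naor and Reingold, one of them factoring-based, came later, and you would still need such a family under each of the three assumptions.

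Second, suppose instead that examples and challenge are to keep their Kearns--Valiant meaning while queries become worthless. Then the formula must itself recognize instances the learner could not have produced and return a fixed value on all others. That test must be in $\mathsf{NC}^1$ and secure under the same assumptions. This construction is the missing idea, and neutralizing the queries is the real content of \cite{AK95}.

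So either cite \cite{AK95} for Step~2 outright, as the paper does, or drop Step~1 and argue directly from an $\mathsf{NC}^1$ pseudorandom family. Your ``main obstacle'' is misplaced accordingly. The passage from logarithmic-depth circuits to formulas is routine and already carried out in both sources.
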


Every conditional lower bound proved in this paper --- in this section and in
Sections~\ref{sec:occam} and~\ref{sec:noise} --- is obtained from
Theorem~\ref{thm:kv-ak} by reductions; the lower bounds of
Section~\ref{sec:erm-regions} rest instead on $\Ptime\ne\NP$ or on the Unique
Games Conjecture.  The reductions are of a type due to Angluin and
Kharitonov~\cite{AK95}, a membership-query variant of the
prediction-preserving reductions of Pitt and Warmuth~\cite{PW90}.  Below
$\mathcal F,\mathcal F'$ are fragments such as $\PL_O$ or $\CIR_O$, and
$\varphi$ ranges over the members of $\mathcal F$.

\begin{definition}[{\cite[Definition~1]{Dalmau}, after~\cite{AK95,PW90}}]
\label{def:pwm}
$\mathcal F\pwm\mathcal F'$ if there are mappings $g$ (formulas), $\iota$
(instances) and $h,j$ (queries), with $g(s,n,\varphi)\in\mathcal F'$, such
that
\begin{enumerate}[label=(\arabic*)]
\item there is a nondecreasing polynomial $q$ with
      $\size{g(s,n,\varphi)}\le q(s,n,\size{\varphi})$
      for all $s,n$ and all $\varphi$ with $\size{\varphi}\le s$;
\item $\iota$ is computable in time polynomial in $s,n,|\mathbf w|$, and
      $\mathbf w\in K(\varphi)$ iff
      $\iota(s,n,\mathbf w)\in K(g(s,n,\varphi))$ whenever
      $\size{\varphi}\le s$ and $\mathbf w\in\{0,1\}^n$;
\item $h$ and $j$ are computable in time polynomial in $s,n,|\mathbf w'|$, and
      for \emph{every} tuple $\mathbf w'$: $\mathbf w'\in K(g(s,n,\varphi))$
      iff $j(s,n,\mathbf w',b)=\top$, where $b:=\top$ if
      $h(s,n,\mathbf w')\in K(\varphi)$ and $b:=\bot$ otherwise.
\end{enumerate}
\end{definition}

The relevant property of these reductions is that if
$\mathcal F\pwm\mathcal F'$ and $\mathcal F'$ is polynomially PAC predictable
with membership queries, then so is $\mathcal F$ (cf.~\cite[Lemmas~1 and~2]{Dalmau}).

We  now prove Theorem~\ref{thm:pac}, restated at the end of this section. 
Consider $\PL_O$ where $O$ falls under none of the three positive cases.  
In that case
$[O]$ contains one of
\[
  f_{\vee\wedge}(x,y,z)=x\vee(y\wedge z),
  \qquad
  f_{\wedge\vee}(x,y,z)=x\wedge(y\vee z),
  \qquad
  \maj(x,y,z),
\]
by Fact~\ref{fact:minimal-hard} and the reductions
to be constructed are those of Theorems~2--4 of~\cite{Dalmau}.
Those reductions substitute a fixed gadget for each gate of the source formula.
In a circuit this costs a constant factor, the gadget sharing its argument
wires.  In a formula it need not, since the substitution patterns duplicate
arguments and the gadget may read each placeholder more than once, so size can
grow by a constant factor per level.  The published argument therefore yields
$\PL_{\{\wedge,\vee,\neg\}}\pwm\CIR_O$, which is weaker than
Theorem~\ref{thm:pac}: for a fixed size bound the formula-represented concepts
are a subclass of the circuit-represented ones, and hardness does not pass to
subclasses.  The missing step is to rebalance the source formula before
substituting, since on a formula of logarithmic depth the substitution costs a
polynomial factor overall.  Balancing introduces the truth constants, which
$\PL_O$ need not have.  They are carried instead by the auxiliary variables that
the instance mapping already appends, at the cost of one further such variable
in two of the three cases.  We record Spira's theorem in the form used.

\begin{fact}[Balancing, \cite{Spira}]\label{fact:spira}
Let $\varphi$ be a formula over a finite set of Boolean connectives.  There is
an equivalent $\PL_{\{\wedge,\vee,\neg\}}$-formula of depth
$O(\log\size{\varphi})$ and of size polynomial in $\size{\varphi}$, computable
from $\varphi$ in polynomial time.
\end{fact}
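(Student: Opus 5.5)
Fact~\ref{fact:spira} is Spira's theorem. We follow its classical proof and make it explicit in the form stated: polynomial-time computable, over $\{\wedge,\vee,\neg\}$.

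\emph{Normalization.} Since the connective set is finite, we first replace each connective by a fixed $\{\wedge,\vee,\neg\}$-formula of constant size. Each $r$-ary connective is a disjunction of conjunctions of literals over its $r$ arguments. This substitution does not give a formula of the same shape with linear blow-up, because arguments get duplicated. We therefore do not expand first. Instead we balance the original tree, treating each node as a constant-arity gate, and expand only at the end, once the depth is already logarithmic.

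\emph{Separator step.} We proceed by strong induction on $s=\size{\varphi}$. In a tree whose nodes have arity at most a constant $r$, there is a node $v$ whose subtree $\psi=\varphi_v$ has size between $s/(r+1)$ and $rs/(r+1)$. To find it, walk down from the root, always into the largest child, until the current subtree first drops to size at most $rs/(r+1)$. Write $\varphi[\psi]$ for $\varphi$ with a fresh placeholder $y$ at $v$. Then
\[
  \varphi\equiv\bigl(\psi\wedge\varphi[y:=\top]\bigr)\vee\bigl(\neg\psi\wedge\varphi[y:=\bot]\bigr).
\]
The formulas $\psi$, $\varphi[y:=\top]$ and $\varphi[y:=\bot]$ each have size at most $rs/(r+1)$. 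The two substituted versions can be simplified by propagating the constants, or the constants can be kept as leaves, since the statement allows truth constants through $\PL_{\{\wedge,\vee,\neg\}}$ read with $\top,\bot$ as $x\vee\neg x$ and $x\wedge\neg x$. Recursively balancing the three pieces gives the depth recurrence $D(s)\le D(rs/(r+1))+3$. Hence $D(s)=O(\log s)$.

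\emph{Size, time, and final expansion.} The size satisfies $S(s)\le 4S(rs/(r+1))+O(1)$, which gives $S(s)=s^{O(1)}$, with exponent $\log 4/\log((r+1)/r)$. Each level takes polynomial time, and the recursion tree has polynomially many nodes, so the construction runs in polynomial time. At the end we replace each original $r$-ary connective gate by its fixed constant-size $\{\wedge,\vee,\neg\}$-formula. Arguments may be duplicated, which multiplies size by a constant factor per level. Over $O(\log s)$ levels this is a polynomial factor, and depth grows by only a constant factor.

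The main obstacle is the interaction between duplication and depth, not the separator argument itself. Expansion must come after balancing, as arranged above. The separator bound must also be stated for bounded-arity trees with a constant depending on $r$, not only for binary trees.
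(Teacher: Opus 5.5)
Your proposal is correct. It is the classical separator argument behind Spira's theorem, which the paper simply cites without reproducing a proof. The only imprecision is minor: the walk guarantees only $|\psi|>s/(r+1)-1/r$, so $\varphi[y:=\top]$ and $\varphi[y:=\bot]$ have size at most $rs/(r+1)+O(1)$, and the induction needs a constant-size base case (for very small $s$ the separator $\psi$ can be a single leaf, leaving $\varphi[y:=c]$ no smaller than $\varphi$); with that base case the original connectives survive only inside constant-size pieces, so your final expansion step is even cheaper than you estimated.
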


The next lemma collects everything that happens while the truth constants are
still available, and does so once and for all: its hypothesis is only that $[O]$
contains the monotone clone $\Mc=[\wedge,\vee,\top,\bot]$, which holds for every
basis of $\Mc$ and, more to the point below, for $O\cup\{\top,\bot\}$ whenever
$[O]$ contains one of the three functions above.  Since the basis is arbitrary,
all the construction needs of $O$ are fixed formulas for $\wedge$, $\vee$ and
the two truth constants.

\begin{lemma}\label{lem:monotone-hard}
Let $O$ be a finite set of Boolean functions such that $\Mc\subseteq [O]$.  Then $\PL_{\{\wedge,\vee,\neg\}}\pwm\PL_O$.
\end{lemma}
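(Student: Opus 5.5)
We need to reduce general formulas over $\{\wedge,\vee,\neg\}$ to $\PL_O$ with $\Mc\subseteq[O]$, that is, monotone formulas with constants. The plan is the standard dual-rail (doubling of variables) reduction, combined with Spira balancing so that the gadget substitution costs only a polynomial factor.

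\emph{Step 1: balance and push negations to the leaves.} Given $s,n$ and $\varphi$ with $\size{\varphi}\le s$, apply Fact~\ref{fact:spira} to obtain an equivalent $\PL_{\{\wedge,\vee,\neg\}}$-formula of depth $O(\log s)$ and size polynomial in $s$. Then push the negations to the leaves by De Morgan. This does not increase depth and at most doubles size, giving a negation-normal-form formula $\psi$ over literals $x_i$, $\neg x_i$ of depth $d=O(\log s)$. If constants arise, they are either eliminated by simplification or kept, since $\top,\bot\in\Mc\subseteq[O]$.

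\emph{Step 2: the mappings.} Let $\psi'$ be $\psi$ with each $\neg x_i$ replaced by a fresh variable $y_i$, so that $\psi'$ is a monotone formula over $2n$ variables. Fix $\PL_O$-formulas $\alpha(u,v)$, $\beta(u,v)$, $c_\top$ and $c_\bot$ defining $\wedge$, $\vee$, $\top$ and $\bot$. These are possibly over repeated variables and each is of constant size, so each reads a placeholder at most some constant number $r$ of times. Define $g(s,n,\varphi)$ by replacing every gate of $\psi'$ with its gadget. The size is at most $(r+c)^{d}\cdot\size{\psi'}=s^{O(1)}$, which is condition~(1) of Definition~\ref{def:pwm}. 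Set $\iota(s,n,\mathbf w)=(\mathbf w,\bar{\mathbf w})$; then $\psi'(\mathbf w,\bar{\mathbf w})=\psi(\mathbf w)=\varphi(\mathbf w)$, which gives condition~(2).

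\emph{Step 3: queries (the main obstacle).} For condition~(3) we must answer a query $\mathbf w'=(\mathbf u,\mathbf v)\in\{0,1\}^{2n}$ about $g(\varphi)$ using a single membership query to $\varphi$. The difficulty is inconsistent tuples, where $v_i\ne\bar u_i$. The answer depends on the target, and one query on a point of $\{0,1\}^n$ does not obviously determine a monotone formula's value there. The standard fix, which is Angluin--Kharitonov style, is to make $g(\varphi)$ trivial on inconsistent inputs. First I would use monotonicity: pairs with $u_i=v_i=1$ could be handled by outputting $1$, and pairs with $u_i=v_i=0$ by outputting $0$. This requires building $g(\varphi)$ as $\bigl(\psi'\wedge\bigwedge_i(u_i\vee v_i)\bigr)\vee\bigvee_i(u_i\wedge v_i)$, which is monotone and hence expressible in $\PL_O$ with only logarithmic added depth when the big conjunction and disjunction are balanced. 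We then check the cases. If some pair is $(1,1)$, the output is $1$. Otherwise, if some pair is $(0,0)$, the second term vanishes and the first is $0$. On consistent inputs the output is $\psi(\mathbf u)$. Hence $h(\mathbf w')=\mathbf u$ together with a $j$ that returns $\top$, $\bot$, or the answer $b$ according to these three cases satisfies condition~(3). The construction is polynomial-time throughout, and the depth stays $O(\log s+\log n)$, so the size remains polynomial.
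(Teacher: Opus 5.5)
Your proposal is correct and is essentially the paper's own proof: Spira balancing, pushing negations to the leaves with fresh variables $y_i$ for $\neg x_i$, the guards $\bigvee_i(x_i\wedge y_i)$ and $\bigwedge_i(x_i\vee y_i)$ written as balanced trees, gadget substitution at polynomial cost thanks to logarithmic depth, and the same mappings $\iota$, $h$, $j$. The only cosmetic point is that the size bound for condition~(1) belongs after the guards are added, where it is polynomial in $s+n$; your closing remark that the depth is $O(\log s+\log n)$ already covers this.
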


\begin{proof}
Let $\varphi$ be a $\PL_{\{\wedge,\vee,\neg\}}$-formula over the variables
$x_1,\ldots,x_n$.  By Fact~\ref{fact:spira} we may assume that
$\dep(\varphi)$ is bounded logarithmically in $\size{\varphi}$, at the cost of
replacing $\varphi$ by an equivalent formula of size polynomial in $\size{\varphi}$,
computable in polynomial time.  The balanced formula may contain the truth
constants, which costs nothing below, both lying in $\Mc$.  The remainder follows Lemma~3
of~\cite{Dalmau}.  For the sake of completeness, we spell out the details.

Pushing negations to the leaves by de Morgan's laws turns $\varphi$ into a
monotone formula $\varphi^{+}$ over $\{\wedge,\vee,\top,\bot\}$ and the $2n$
variables $x_1,\ldots,x_n,y_1,\ldots,y_n$, the variable $y_i$ taking the place
of $\neg x_i$, so that
$\varphi^{+}(\mathbf a,\bar{\mathbf a})=\varphi(\mathbf a)$ for every
$\mathbf a\in\{0,1\}^n$.  The tree is unchanged, so $\varphi^{+}$ has the same depth as
$\varphi$.  On its own $\varphi^{+}$ is of no use, because condition~(3) of
Definition~\ref{def:pwm} ranges over all query tuples, including those in
which the $y_i$ are not set to the complements of the $x_i$, where the value of $\varphi^{+}$ need not be
determined by $\varphi$ at all.  Two guards repair this.  Let
\[
  A:=\bigvee_{i\le n}(x_i\wedge y_i),
  \qquad
  B:=\bigwedge_{i\le n}(x_i\vee y_i),
  \qquad
  \psi:=A\vee(\varphi^{+}\wedge B),
\]
with $A$ and $B$ written as balanced trees, so that $\dep(A)$ and $\dep(B)$ are
at most $\lceil\log n\rceil+1$.  Then $\psi$ is again a monotone formula over
$\{\wedge,\vee,\top,\bot\}$.  Its depth is
$\max\{\dep(\varphi^{+}),\lceil\log n\rceil+1\}+2$, hence still logarithmic in
$\size{\varphi}+n$, and its value is determined by $\varphi$ at every point.
Indeed, write $\langle\mathbf a,\mathbf a'\rangle$ for the setting that gives
$x_1,\ldots,x_n$ the values $\mathbf a$ and $y_1,\ldots,y_n$ the values
$\mathbf a'$.  If $a_i=a'_i=1$ for some $i$ then $A$ holds and $\psi$ evaluates
to $1$; if $a_i=a'_i=0$ for some $i$ and the previous case does not apply then
$B$ fails and $\psi$ evaluates to $0$; and otherwise
$\mathbf a'=\bar{\mathbf a}$, where $A$ fails, $B$ holds, and $\psi$ evaluates
to $\varphi(\mathbf a)$.

The four connectives of $\psi$ lie in $\Mc\subseteq[O]$.  Fix a $\PL_O$-formula
computing each of them, and let $\chi$ be the result of substituting these into
$\psi$, so that $\chi\in\PL_O$ is equivalent to $\psi$.  Substituting a fixed
formula for each node multiplies depth by a constant, so $\dep(\chi)$ too is
logarithmic in $\size{\varphi}+n$, and a formula of depth $d$ whose connectives
all have arity at most $r$ has at most $r^{d}$ leaves, so
$\size{\chi}=(\size{\varphi}+n)^{O(1)}$, with an exponent depending only on the
four formulas just chosen.

It remains to collect the mappings.  They are:
\begin{itemize}
\item $g(s,n,\varphi):=\chi$;
\item $\iota(s,n,\mathbf a):=\langle \mathbf a,\bar{\mathbf a}\rangle$;
\item $h(s,n,\langle \mathbf a,\mathbf a'\rangle):=\mathbf a$, and
\item $j(s,n,\langle \mathbf a,\mathbf a'\rangle,b):=\top$ if $a_i=a'_i=1$ for
some $i$, $:=\bot$ if not and $a_i=a'_i=0$ for some $i$, and $:=b$ otherwise.
\end{itemize}
Condition~(1) of Definition~\ref{def:pwm} is the size bound just proved,
condition~(2) holds because
$\chi(\mathbf a,\bar{\mathbf a})=\varphi(\mathbf a)$, and condition~(3) is the
case distinction of the previous paragraph.
\end{proof}

The remainder of the proof is exactly as in \cite{Dalmau}. For completeness,
we spell it out below.

\begin{lemma}
\label{lem:join-case}
Let $O$ be a finite set of Boolean functions with $f_{\vee\wedge}\in[O]$ or
$f_{\wedge\vee}\in[O]$.  Then $\PL_{\{\wedge,\vee,\neg\}}\pwm\PL_O$.
\end{lemma}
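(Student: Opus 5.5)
By duality it suffices to treat $f_{\vee\wedge}\in[O]$. I would then reduce to Lemma~\ref{lem:monotone-hard} by carrying the truth constants in auxiliary variables. First apply Lemma~\ref{lem:monotone-hard} to the basis $O^{\pm}=O\cup\{\top,\bot\}$. As observed in the proof of Proposition~\ref{prop:vc}, $[f_{\vee\wedge},\top,\bot]=\Mc$, so $\Mc\subseteq[O^{\pm}]$. This gives $\PL_{\{\wedge,\vee,\neg\}}\pwm\PL_{O^{\pm}}$ via a formula $\chi$ over $O^{\pm}$ of logarithmic depth and polynomial size. It remains to show $\PL_{O^{\pm}}\pwm\PL_O$ for the formulas arising this way; composing the two reductions then gives the lemma.

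\emph{Eliminating the constants.} Introduce two fresh variables $z_0,z_1$, and replace every occurrence of $\bot$ in $\chi$ by $z_0$ and every occurrence of $\top$ by $z_1$. The result is a formula $\chi'\in\PL_O$ of the same size with $\chi'(\mathbf w,0,1)=\chi(\mathbf w)$. Set $\iota(\mathbf w)=\langle\mathbf w,0,1\rangle$, which gives condition~(2). Condition~(3) is harder, since the learner may query points with $(z_0,z_1)\neq(0,1)$, where $\chi'$ is no longer determined by $\chi$.

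\emph{Guarding the off-pattern queries.} To fix condition~(3), I would exploit $f_{\vee\wedge}$ itself. Take $\chi''=f_{\vee\wedge}(z_0,\ \chi'_1,\ z_1)$, where $\chi'_1$ is $\chi'$ with $z_1$ also replaced by $f_{\vee\wedge}(z_1,z_0,z_0)=z_1\vee z_0$ (or similar), so that $\chi''=z_0\vee(\chi'\wedge z_1)$. The cases are then:
\begin{itemize}
\item if $z_0=1$, the value is $1$;
\item if $z_0=0$ and $z_1=0$, the value is $0$;
\item if $(z_0,z_1)=(0,1)$, the value is $\chi(\mathbf w)$.
\end{itemize}
In the last case, the inner occurrences of $z_0,z_1$ have exactly the right constant values. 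Hence $j$ answers $\top$, $\bot$, or passes through the answer $b$ to the query $h(\langle\mathbf w,z_0,z_1\rangle)$. That query is computed by first applying the $h$ of Lemma~\ref{lem:monotone-hard} to $\mathbf w$, and then combining with that lemma's $j$, since the composite of two $\pwm$-style mappings is again such a mapping. This is the step where the extra auxiliary variable is spent. It must be checked that the only connectives needed ($\vee$ guarded by $z_0$, $\wedge$ guarded by $z_1$) are expressible as $f_{\vee\wedge}(z_0,\cdot,z_1)$-type $\PL_O$-formulas; they are, since $f_{\vee\wedge}\in[O]$ and $f_{\vee\wedge}(x,y,z)$ is exactly $x\vee(y\wedge z)$.

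\emph{Size and the main obstacle.} I expect the main obstacle to be keeping the size polynomial. The substitution of fixed $\PL_O$-formulas for $\wedge$, $\vee$ and $f_{\vee\wedge}$ must be done on a logarithmic-depth formula, so that the constant blow-up per level yields only a polynomial factor. This is why I would not take $\chi$ as given with arbitrary $O^{\pm}$-formulas, but instead redo the substitution step of Lemma~\ref{lem:monotone-hard} directly. That is, I would work from the balanced monotone formula $\psi$ over $\{\wedge,\vee,\top,\bot\}$, replace the constants by $z_0,z_1$, wrap the result in the guard above, and only then substitute fixed $\PL_O$-formulas for $\wedge$ and $\vee$. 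The depth stays $O(\log(\size{\varphi}+n))$, so $\size{\chi''}=(\size{\varphi}+n)^{O(1)}$, which establishes condition~(1).
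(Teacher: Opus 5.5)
Your first three paragraphs are the paper's proof. You apply Lemma~\ref{lem:monotone-hard} to $O\cup\{\top,\bot\}$, replace $\bot,\top$ by fresh variables $z_0,z_1$, and guard with $f_{\vee\wedge}(z_0,\cdot,z_1)$. The mappings are $\iota(\mathbf w)=\langle\mathbf w,0,1\rangle$, $h$ the projection onto $\mathbf w$, and $j$ answering $\top$ if $z_0=1$, $\bot$ if $(z_0,z_1)=(0,0)$, and $b$ otherwise, and the two reductions are composed by transitivity. Substituting $z_1\vee z_0$ for $z_1$ inside $\chi'$ is harmless but pointless, since $z_0=1$ already forces the output.

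The final paragraph, however, misdiagnoses the difficulty, and the detour it proposes is wrong as written.

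\emph{There is no size obstacle.} Removing the constants replaces leaves by leaves, so $\size{\chi'}=\size{\chi}$ for \emph{every} $\PL_{O\cup\{\top,\bot\}}$-formula $\chi$, as you note yourself. The guard adds only one fixed formula. Hence $\PL_{O\cup\{\top,\bot\}}\pwm\PL_O$ holds outright, with no restriction to balanced $\chi$; all balancing has already been done inside Lemma~\ref{lem:monotone-hard}.

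\emph{The alternative order breaks.} You propose to replace the constants of $\psi$ by $z_0,z_1$ first, and only then substitute fixed $\PL_O$-formulas for $\wedge$ and $\vee$. But $\wedge$ need not lie in $[O]$: for $[O]=[f_{\vee\wedge}]=\Sc_{00}$ every function is $0$-separating, and $\wedge$ is not. Conjunction is available only as $f_{\vee\wedge}(\bot,x,y)$, and that $\bot$ must also become $z_0$. Your first route does this automatically, since those occurrences of $\bot$ are among the constants of $\chi$. The detour could be repaired by substituting $f_{\vee\wedge}(z_0,x,y)$ for $\wedge$, because the guard makes inner values irrelevant whenever $(z_0,z_1)\neq(0,1)$. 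But there is no reason to take it; simply drop the last paragraph.
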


\begin{proof}
We give the proof for the case $f_{\vee\wedge}\in[O]$. The argument for the
other case is dual.

Since conjunction and disjunction are definable as $f_{\vee\wedge}(\bot,x,y)$
and $f_{\vee\wedge}(x,y,y)$, respectively, we have that 
 $\Mc\subseteq[O\cup\{\top,\bot\}]$ and Lemma~\ref{lem:monotone-hard} gives
$\PL_{\{\wedge,\vee,\neg\}}\pwm\PL_{O\cup\{\top,\bot\}}$.  By transitivity it remains
to remove the truth constants, that is, to show
$\PL_{O\cup\{\top,\bot\}}\pwm\PL_O$.
Recall that a pwm-reduction consists of mappings $g$ (for concepts), $\iota$
(for instances) and $h,j$ (for queries). The mappings in question are:
\begin{itemize}
\item  $g(s,n,\chi) := \varphi_{\vee\wedge}(z_0,z_1,\chi'(\mathbf x,z_0,z_1))$ 

where 
 $\varphi_{\vee\wedge}$ is a $\PL_O$-formula defining $f_{\vee\wedge}$,
 and where $\chi'\in\PL_O$ is obtained from $\chi$ by replacing $\bot$  by a fresh variable $z_0$ and  $\top$ by a fresh variable $z_1$.
 \item $\iota(s,n,\mathbf a):=\langle \mathbf a,0,1\rangle$.
\item  $h(s,n,\langle \mathbf a,c_0,c_1\rangle):=\mathbf a$, and
\item
$j(s,n,\langle \mathbf a,0,1\rangle,b):=b$

$j(s,n,\langle \mathbf a,0,0\rangle,b):=\bot$

$j(s,n,\langle \mathbf a,1,0\rangle,b):=\top$

$j(s,n,\langle \mathbf a,1,1\rangle,b):=\top$
\end{itemize}
Here $\mathbf a$ ranges over $\{0,1\}^n$ and $c_0,c_1$ over $\{0,1\}$, the
query string $\langle \mathbf a,c_0,c_1\rangle$ giving $x_1,\ldots,x_n$ the
values $\mathbf a$ and $z_0,z_1$ the values $c_0,c_1$.
The conditions of Definition~\ref{def:pwm} are met: $g(\chi)$ computes
$z_0\vee(z_1\wedge\chi'(\mathbf x,z_0,z_1))$, which takes the value
$\chi(\mathbf a)$ at $\langle \mathbf a,0,1\rangle$, the value $0$ at
$\langle \mathbf a,0,0\rangle$ and the value $1$ whenever $c_0=1$ --- in every
case but the first, independently of what $\chi'$ computes there.  The two
formulas differ in size by the one copy of $\varphi_{\vee\wedge}$.
\end{proof}

\begin{lemma}
\label{lem:selfdual-case}
Let $O$ be a finite set of Boolean functions with $\maj\in[O]$, all of whose
members are self-dual; that is, let $\Dc_2\subseteq[O]\subseteq \Dc$.  Then
$\PL_{\{\wedge,\vee,\neg\}}\pwm\PL_O$.
\end{lemma}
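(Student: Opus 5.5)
The plan mirrors Lemma~\ref{lem:join-case}: pass through the version of the basis with constants, then remove the constants with fresh variables.

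\emph{First step: adjoin constants.} Since $\maj(\bot,x,y)=x\wedge y$ and $\maj(\top,x,y)=x\vee y$, we have $\Mc\subseteq[O\cup\{\top,\bot\}]$. Lemma~\ref{lem:monotone-hard} then gives $\PL_{\{\wedge,\vee,\neg\}}\pwm\PL_{O\cup\{\top,\bot\}}$. By transitivity of $\pwm$ it remains to show $\PL_{O\cup\{\top,\bot\}}\pwm\PL_O$. The difficulty is that every function of $[O]$ is self-dual. So, unlike the $f_{\vee\wedge}$ case, there is no gadget that outputs a fixed constant whenever the auxiliary variables are set wrongly. In particular, on the input $\langle\mathbf a,c,c\rangle$ the value of any $\PL_O$-formula is tied to its value on the complementary point.

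\emph{Second step: remove the constants.} Given $\chi\in\PL_{O\cup\{\top,\bot\}}$, let $\chi'(\mathbf x,z_0,z_1)$ be obtained by replacing $\bot$ by $z_0$ and $\top$ by $z_1$. Because $\chi'$ is built from self-dual connectives, it is self-dual as a function of all $n+2$ variables. I would then guard $\chi'$ by a majority with a third fresh variable $w$:
\[
g(\chi):=\maj\bigl(w,\ \maj(z_0,z_1,w),\ \chi'(\mathbf x,z_0,z_1)\bigr),
\]
or a similar gadget. The intended instance map is $\iota(\mathbf a):=\langle\mathbf a,0,1,\cdot\rangle$, with $w$ fixed suitably. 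The aim is that, at every query point where $(z_0,z_1)\ne(0,1)$, the value of $g(\chi)$ is either a constant determined by $z_0,z_1,w$, or equal to $\chi$ (or its complement) at a point the learner can compute. Self-duality makes this possible. At $\langle\mathbf a,1,0\rangle$ we have $\chi'(\mathbf a,1,0)=\neg\chi'(\bar{\mathbf a},0,1)=\neg\chi(\bar{\mathbf a})$. So $h$ maps this query to $\bar{\mathbf a}$ and $j$ negates the answer. At $\langle\mathbf a,c,c\rangle$ the two constants coincide, and here the extra variable must force the output. With $z_0=z_1=c$ we get $\maj(z_0,z_1,w)=c$. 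The outer majority then returns $c$ whenever $w=c$. When $w\ne c$ it returns $\chi'(\mathbf a,c,c)$, which is undetermined; this is exactly where the gadget must be adjusted.

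\emph{Expected obstacle.} The hard part is choosing the gadget so that \emph{every} query tuple, including those with $z_0=z_1$, is decided by a constant or by one membership query to $\chi$. I would first try the fallback Dalmau uses in Theorem~4 of~\cite{Dalmau}. There, $\Dc_2$ is handled by encoding the constants in the instance itself: each query is normalized so that the point with $z_0=1$ is complemented back to one with $z_0=0$, using self-duality of the whole formula. Under this normalization the two cases to handle are $\langle\mathbf a,0,1\rangle$, answered by $\chi(\mathbf a)$, and $\langle\mathbf a,0,0\rangle$. For the latter I would arrange, by building the gadget with $\maj(z_0,z_1,\cdot)$, that the output equals $z_0$ whenever $z_0=z_1$. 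The point $\langle\mathbf a,1,1\rangle$ is then covered by duality. If the single guard cannot achieve this, I would add the further auxiliary variable mentioned before Fact~\ref{fact:spira} to play that role. The size bound is not a problem: the only cost is a constant number of extra connectives on top of $\chi'$, so condition~(1) of Definition~\ref{def:pwm} holds, and the balanced depth from Lemma~\ref{lem:monotone-hard} is preserved.
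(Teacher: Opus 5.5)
\textbf{Verdict: the proposal has a genuine gap.} Your first step matches the paper: adjoin $\top,\bot$ using $\maj(\bot,x,y)=x\wedge y$ and $\maj(\top,x,y)=x\vee y$, then invoke Lemma~\ref{lem:monotone-hard}. Your handling of the setting $(z_0,z_1)=(1,0)$ via self-duality of $\chi'$ also matches. The gap is in the one concrete gadget you write down, $g(\chi)=\maj\bigl(w,\maj(z_0,z_1,w),\chi'\bigr)$. At $(z_0,z_1)=(0,1)$ the inner majority is $\maj(0,1,w)=w$, so $g(\chi)$ evaluates to $\maj(w,w,\chi')=w$. Hence at every instance $\iota(\mathbf a)=\langle\mathbf a,0,1,w\rangle$ the value is $w$, independent of $\chi$. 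Condition~(2) of Definition~\ref{def:pwm} therefore fails for either choice of $w$; the defect is not confined to queries with $z_0=z_1$, as your analysis suggests. Your last paragraph only describes the property a gadget should have, and hedges on whether a further auxiliary variable is needed. So no reduction is actually exhibited or verified.

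The missing observation is that no extra variable is needed. Your own computation $\maj(c,c,w)=c$ already shows that the majority of $z_0$, $z_1$ and anything is forced to $c$ when $z_0=z_1=c$. So put $\chi'$ itself in the third place. The paper takes $g(\chi):=\varphi_{\maj}\bigl(\chi'(\mathbf x,z_0,z_1),z_0,z_1\bigr)$, which computes:
\begin{itemize}
\item $0$ at $(0,0)$ and $1$ at $(1,1)$, regardless of $\chi'$;
\item $\chi(\mathbf a)$ at $(0,1)$;
\item $\chi'(\mathbf a,1,0)=\neg\chi(\bar{\mathbf a})$ at $(1,0)$, by self-duality.
\end{itemize}
Now take $\iota(\mathbf a)=\langle\mathbf a,0,1\rangle$. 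Let $h$ return $\bar{\mathbf a}$ on $(1,0)$-queries and $\mathbf a$ otherwise. Let $j$ return $\bot$ on $(0,0)$, $\top$ on $(1,1)$, $b$ on $(0,1)$ and $\neg b$ on $(1,0)$. With these mappings all three conditions are checked directly, and size grows by only a constant factor.
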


\begin{proof}
Since conjunction and disjunction are definable as $\maj(\bot,x,y)$ and
$\maj(\top,x,y)$, respectively, we have that $\Mc\subseteq[O\cup\{\top,\bot\}]$
and Lemma~\ref{lem:monotone-hard} gives
$\PL_{\{\wedge,\vee,\neg\}}\pwm\PL_{O\cup\{\top,\bot\}}$.  By transitivity it remains
to remove the truth constants, that is, to show
$\PL_{O\cup\{\top,\bot\}}\pwm\PL_O$.  The mappings are:
\begin{itemize}
\item  $g(s,n,\chi) := \varphi_{\maj}(\chi'(\mathbf x,z_0,z_1),z_0,z_1)$

where
 $\varphi_{\maj}$ is a $\PL_O$-formula defining $\maj$,
 and where $\chi'\in\PL_O$ is obtained from $\chi$ by replacing $\bot$  by a fresh variable $z_0$ and  $\top$ by a fresh variable $z_1$.
 \item $\iota(s,n,\mathbf a):=\langle \mathbf a,0,1\rangle$.
\item  $h(s,n,\langle \mathbf a,c_0,c_1\rangle):=\mathbf a$ if
$(c_0,c_1)\neq(1,0)$ and $:=\bar{\mathbf a}$ otherwise, where
$\bar{\mathbf a}$ is the bitwise complement of $\mathbf a$, and
\item
$j(s,n,\langle \mathbf a,0,1\rangle,b):=b$

$j(s,n,\langle \mathbf a,0,0\rangle,b):=\bot$

$j(s,n,\langle \mathbf a,1,0\rangle,b):=\neg b$

$j(s,n,\langle \mathbf a,1,1\rangle,b):=\top$
\end{itemize}
Again the conditions of Definition~\ref{def:pwm} are met:
$g(\chi)$ computes $\maj(\chi'(\mathbf x,z_0,z_1),z_0,z_1)$, which is $0$ at
$(c_0,c_1)=(0,0)$ and $1$ at $(1,1)$ independently of what $\chi'$ computes
there, and which is the value of $\chi'$ itself at the two remaining settings
--- at $(0,1)$ that value is $\chi(\mathbf a)$, and at
$(1,0)$ it is $\neg\chi(\bar{\mathbf a})$, since evaluating $\chi'$ at
$(1,0)$ evaluates $\chi$ with $\top$ and $\bot$ interchanged, and a formula all
of whose connectives are self-dual then computes the dual function.
\end{proof}

\thmpac*

\begin{proof}
That~(a) implies~(b) and~(c) is
Proposition~\ref{prop:pac-uniform} together with the remark after it: the
learner there uses no promise on the target's size, and it certainly yields both
a PAC learner and a PAC predictor.  That~(b) implies~(c) is immediate, PAC
prediction being easier than producing a hypothesis and membership queries
only adding power.  It remains to show that~(c) implies~(a), which we do by contraposition.
Suppose $O$ satisfies none of the three conditions.  By
Fact~\ref{fact:minimal-hard} either $[O]$ contains $f_{\vee\wedge}$ or
$f_{\wedge\vee}$, and then $\PL_{\{\wedge,\vee,\neg\}}\pwm\PL_O$ by
Lemma~\ref{lem:join-case}; or it contains neither, and then $[O]\subseteq \Dc$
while still containing one of the three generators, necessarily $\maj$, so that
$\PL_{\{\wedge,\vee,\neg\}}\pwm\PL_O$ by Lemma~\ref{lem:selfdual-case}.  Either way,
Theorem~\ref{thm:kv-ak} shows that $\PL_O$ is not polynomially PAC predictable with
membership queries.  Finally $\PL_O\pwm\CIR_O$ by the identity
reduction --- a formula is a tree-shaped circuit of the same size --- so the
circuit class is not polynomially PAC predictable either, and the equivalence holds
for it as well.
\end{proof}

\section{Occam algorithms}\label{sec:occam}

Recall from the introduction that an Occam algorithm for $\PL_O$ is a
polynomial-time fitting algorithm returning, on every realizable sample, a
fitting formula of size at most
$p\bigl(s_{\mathrm{opt}}(E),n\bigr)\cdot m^{\beta}$ for a fixed polynomial $p$
and a fixed $\beta<1$, attribute-efficient if the bound does not depend on $n$.
Size may be read as formula or as circuit size, the positive results producing
formulas and the negative one being proved for circuits.  The negative
direction is a corollary of Theorem~\ref{thm:pac}: a sufficiently compressing
fitting algorithm is a proper PAC learner, so a class that is not even
polynomially PAC predictable cannot have one.

\thmoccam*

\begin{proof}
For~(a), all three cases are handled by the standard greedy, set-cover-inspired
fitting algorithm; see for instance~\cite{KearnsVazirani}.  If
$O\preceq\{\wedge,\top,\bot\}$, the fitting hypotheses are the conjunctions
$\bigwedge_{i\in S}x_i$ with $S$ contained in the set $T$ of coordinates that
are $1$ on every positive example and meeting, for every negative example, the
set of coordinates of $T$ on which it is $0$.  Minimizing $|S|$ is a hitting set
problem with at most $m$ sets to hit, and greedy returns a set of size at most
$(1+\ln m)$ times the minimum, so the returned formula has size
$O(s_{\mathrm{opt}}(E)\log m)$.  When the empty conjunction $\top$ is
unavailable, as for $\Ec_2=[\wedge]$, replace it by a single element of $T$, which
is nonempty on a realizable sample.  The case $O\preceq\{\vee,\top,\bot\}$ is
dual, and if $O\preceq\{\neg,\top,\bot\}$ then every hypothesis is a constant, a
variable or a negated variable, so any fitting hypothesis has size $O(1)$ and
the fitting algorithm of Section~\ref{sec:fitting} returns one.  In all three
cases the bound is independent of $n$.

For~(b), case~(2) of the proof of
Theorem~\ref{thm:boolean-fitting-constructive} returns, by Gaussian
elimination, a fitting affine function $c\oplus\bigoplus_{j\in J}x_j$ written
as a $\PL_O$-formula with at most $n+2$ leaves.  That is a bound of the
required form, with $\beta=0$, but it depends on $n$: what is not known is how
to control $|J|$, and with it the size of the formula, in terms of
$s_{\mathrm{opt}}(E)$, and we leave that open.

For~(c), an Occam algorithm for $\PL_O$ is a proper PAC
learner~\cite{Blumer}, so $\PL_O$ would be polynomially properly PAC
learnable, which Theorem~\ref{thm:pac} excludes outside the regions of~(a)
and~(b) under the cryptographic assumptions.
\end{proof}

\section{Random label noise and statistical queries}\label{sec:noise}

Fix a target $f$ and a distribution $D$ on $\{0,1\}^n$.  Under \emph{random
classification noise} of rate $\eta<1/2$~\cite{AngluinLaird} the learner
receives examples $(\mathbf a,b)$ with $\mathbf a\sim D$ and $b=f(\mathbf a)$
flipped independently with probability $\eta$, and must reach error
$\varepsilon$ with probability $1-\delta$ in time polynomial in $n$, $s$,
$1/\varepsilon$, $1/\delta$ and $1/(1-2\eta_b)$, where $\eta_b<1/2$ is a given
upper bound on $\eta$.  In the \emph{statistical query} model~\cite{Kearns98}
the learner sees no examples; it may ask, for any polynomial-time predicate
$\chi(\mathbf a,b)$ and any tolerance $\tau$, for a number within $\tau$ of
$\Pr_{\mathbf a\sim D}[\chi(\mathbf a,f(\mathbf a))=1]$, and it is
\emph{efficient} if it uses polynomially many queries, each of tolerance at
least $1/p(n,s,1/\varepsilon)$ for a fixed polynomial $p$, and polynomial
time.  Kearns proved that such a learner can be simulated from examples,
noisy or not: every class efficiently learnable from statistical queries is
PAC learnable under random classification noise of any rate $\eta<1/2$, in
polynomial time~\cite{Kearns98}.

\thmsq*

The positive direction is Kearns's statistical-query algorithm for
conjunctions~\cite{Kearns98}; disjunctions are dual, and for
$O\preceq\{\neg,\top,\bot\}$ there are at most $2n+2$ hypotheses, whose
errors can be estimated one by one.
For the affine bases the negative direction is unconditional and holds
already under the uniform distribution: $[O]\supseteq\Lc_2$ contains the
$2^{n-1}$ parities of an odd number of variables, which are pairwise
uncorrelated under that distribution, and a class with exponentially many
pairwise uncorrelated members needs exponentially many statistical queries or
exponentially small tolerance~\cite{Kearns98,BFJKMR}.  For every other basis
outside the three cases, a statistical-query learner would yield a PAC
learner and hence a PAC predictor, so $\PL_O$ would be polynomially PAC predictable
with membership queries, which Theorem~\ref{thm:pac} excludes under the
cryptographic assumptions.  

By Kearns's simulation, the positive direction gives polynomial-time PAC
learning under random classification noise of any rate $\eta<1/2$ for the
conjunctive, the disjunctive and the essentially unary bases.  The negative
direction transfers only in part.  Outside the PAC-learnable region noise is
beside the point: a learner that tolerates noise of rate up to $\eta_b$ can be
run on noise-free examples, so Theorem~\ref{thm:pac} excludes it under the
cryptographic assumptions.  For the affine clones, however, the
statistical-query lower bound says nothing about noisy examples, and it is
exactly here that the two models are known to part company: Blum, Kalai and
Wasserman~\cite{BKW} learn parities under random classification noise in time
$2^{O(n/\log n)}$, fewer than the $2^{\Omega(n)}$ statistical queries needed
above, and whether polynomial time is possible --- the learning parity with
noise problem --- is open.  

\section{Other kinds of fragments}\label{sec:constraints}

Every fragment considered above is generated by a set of connectives, and is
therefore closed under substitution.
Several fragments of independent interest are not of this shape.  These include
monotone CNF, Horn CNF, and systems of linear equations over $\Ftwo$.  
None of the classifications of the introduction applies to such fragments.
In this section, we consider a few other types of fragments.

\subsection{Fragments given by constraint languages}

A \emph{constraint language} (also known as a \emph{template})
is a set $\Gamma$ of Boolean relations.  A
$\CNF_\Gamma$-formula over $x_1,\ldots,x_n$ is a conjunction of atoms
$R(x_{i_1},\ldots,x_{i_k})$ with $R\in\Gamma$ of arity $k$; the concept it
defines is its set of satisfying assignments in $\{0,1\}^n$, and its size is its
number of atoms.  Deciding whether a $\CNF_\Gamma$-formula has a model is the
well-studied
\emph{constraint satisfaction problem} $\CSP(\Gamma)$, which was classified by Schaefer's
theorem~\cite{Schaefer}. 

Monotone CNF, Horn CNF and systems of linear equations can all be viewed
as instances of $\CNF_\Gamma$ for different choices of the constraint
language $\Gamma$. For instance, Horn CNF is $\CNF_\Gamma$ for 
$\Gamma$ the set of all Horn clauses. In each of these cases, $\Gamma$
is infinite, but there are also natural fragments $\CNF_\Gamma$ where
$\Gamma$ is finite, such as $k$-CNF, for a fixed value of $k$, where $\Gamma$ is 
the finite set of all $k$-ary relations over the Booleans.  $\PL_{\{\wedge,\top\}}$ 
can also be cast as $\CNF_{\{\mathsf{True}\}}$,
where $\mathsf{True}$ is the unary Boolean relation containing only the tuple $(1)$.

For finite constraint languages $\Gamma$, results analogous to the
classifications of the introduction hold, empirical risk minimization aside.
In fact the situation here is simpler: for every finite $\Gamma$ there is a
fitting formula within a logarithmic factor of the smallest, and the VC
dimension of $\CNF_\Gamma$ is polynomial in the number of variables.
Note that, since there are only finitely many Boolean relations
of any given arity, every constraint language of bounded arity is finite, 
up to logical equivalence, and conversely a finite constraint language 
has bounded arity.

\begin{proposition}\label{prop:cnf-tractable}
Fix a finite constraint language  $\Gamma$ and let $r$ be the largest arity of a
relation in $\Gamma$.  Then
\begin{enumerate}[label=(\arabic*)]
\item the existence of a fitting $\CNF_\Gamma$-formula for a given labeled sample is decidable in polynomial time,
\item given a
      realizable labeled sample $E$ of size $m$,
      one may in addition compute in polynomial time a fitting
      $\CNF_\Gamma$-formula of size at most $(1+\ln m)\cdot s_{\mathrm{opt}}(E)$,
      where $s_{\mathrm{opt}}(E)$ is the number of atoms of a smallest fitting
      formula,
\item the VC dimension of $\CNF_\Gamma$ in $n$ variables is at most $|\Gamma|\,n^r$, and hence,
      together with~(2), $\CNF_\Gamma$-formulas in $n$ variables are
      polynomially properly PAC learnable.
\end{enumerate}
\end{proposition}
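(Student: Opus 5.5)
The plan is to treat $\CNF_\Gamma$ as conjunctions over a fixed, polynomially large pool of candidate atoms. Each candidate is an atom $R(x_{i_1},\ldots,x_{i_k})$ with $R\in\Gamma$ of arity $k\le r$ and indices $i_1,\ldots,i_k\in[n]$, repetitions allowed. There are at most $|\Gamma|\,n^r$ candidates, and since $\Gamma$ and $r$ are fixed they can be listed in polynomial time. A $\CNF_\Gamma$-formula is then exactly a set $S$ of candidates, read as their conjunction. This puts us in the semilattice situation of case~(4) in the proof of Theorem~\ref{thm:boolean-fitting-constructive} and of part~(a) of Theorem~\ref{thm:occam}, with atoms playing the role of variables. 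All three items follow from this reformulation.

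For~(1), call an atom \emph{admissible} if it is true at every positive example; this takes one evaluation per atom and example. A fitting formula can only use admissible atoms, because a conjunction is true at a positive example only if all its conjuncts are. Conversely, if some fitting formula exists, then the conjunction of \emph{all} admissible atoms also fits. It is true on the positives by construction. It is false on every negative, since it implies the fitting formula, whose atoms are admissible. So realizability is decided by forming this conjunction and checking it on the negatives. One edge case needs care: if no atom is admissible, the empty conjunction $\top$ must be available, or else we must reject when negatives are present. I would handle this by the convention that the empty conjunction denotes $\top$, or mirror the $\Ec_2$ remark in the proof of Theorem~\ref{thm:occam}.

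For~(2), I would reduce fitting to set cover over the negative examples. Each admissible atom $\alpha$ covers the set of negative examples at which $\alpha$ is false. A set $S$ of admissible atoms fits $E$ if and only if $S$ covers all negatives, so $s_{\mathrm{opt}}(E)$ is exactly the minimum cover size. The universe has at most $m$ elements, so the greedy algorithm returns a cover of size at most $H_m\cdot s_{\mathrm{opt}}(E)\le(1+\ln m)\,s_{\mathrm{opt}}(E)$, in polynomial time. Realizability guarantees that the full admissible set is a cover, so greedy terminates successfully.

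For~(3), I would count hypotheses and argue as in the proof of Proposition~\ref{prop:vc}. Every concept of $\CNF_\Gamma$ in $n$ variables is determined by a subset of the candidate pool, so there are at most $2^{|\Gamma| n^r}$ concepts. A shattered set of size $d$ requires $2^d$ distinct concepts, which gives $d\le|\Gamma|\,n^r$. Proper PAC learnability then follows in the usual way: a consistent learner for a class of polynomial VC dimension is a PAC learner, and~(1) provides that consistent learner, with~(2) optionally keeping the output small.

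I do not expect a real obstacle. The only points needing care are the treatment of repeated variables in atoms, which are included in the pool, and the empty-conjunction convention in~(1).
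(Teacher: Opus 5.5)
Your proof is correct and follows the paper's argument essentially step for step. It uses the set of atoms true on all positives, with their conjunction as the least concept containing the positives for~(1), greedy set cover over the negatives for~(2), and a count of the at most $2^{|\Gamma|n^r}$ concepts for~(3). The differences are cosmetic: you take the consistent learner from~(1) rather than~(2) for PAC learning, which is equally valid since $\bigwedge A$ already has polynomially many atoms, and your empty-conjunction caveat is harmless because the paper treats the empty conjunction as a formula anyway.
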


\begin{proof}
First, observe that there are at most $|\Gamma|\,n^r$ atoms over $x_1,\ldots,x_n$.
For~(1), let $A$ be the set of atoms satisfied by every
positive example of $E$.  Any $\CNF_\Gamma$-formula whose concept contains all
positive examples uses only atoms from $A$, so $\bigwedge A$ defines the
$\subseteq$-least concept of $\CNF_\Gamma$ containing them.  The sample is
realizable if and only if no negative example satisfies $\bigwedge A$, and
$\bigwedge A$ is then a fitting formula.  For (2), observe that a
$\CNF_\Gamma$-formula fits $E$ if and only if its atoms lie in $A$ and every
negative example violates at least one of them: this is a set cover instance
with the negative examples as ground set and $A$ as the family of sets, so the
greedy set-cover algorithm~\cite{Hochbaum} returns a fitting formula within a factor
$1+\ln m$ of the smallest.  For~(3), a concept of $\CNF_\Gamma$ is determined by
a set of atoms, so the class has at most $2^{|\Gamma|n^r}$ members and its VC
dimension is at most $\log_2$ of that.  Finally, the polynomial VC dimension together
with~(2) gives proper PAC learning.
\end{proof}

Note also that the Occam bound in~(2), measured
in number of atoms, does not depend on $n$, so the algorithm can be said to be
attribute-efficient in the sense of Section~\ref{sec:occam}.
The fitting problem for $\CNF_\Gamma$ is the \emph{structure identification}
problem of Dechter and Pearl~\cite{DechterPearl}, studied for general $\Gamma$
by Creignou, Kolaitis and Zanuttini~\cite{CKZ}.  Its variant in which only the
positive examples are given, and the formula must define exactly their set, is
the inverse satisfiability problem of Kavvadias and
Sideri~\cite{KavvadiasSideri,LagerkvistWahlstrom}.

When it comes to \emph{empirical risk minimization}, hardness cases arise.
Recall that  $\PL_{\{\wedge,\top\}}$ coincides with $\CNF_\Gamma$ for 
$\Gamma=\{\mathsf{True}\}$ as described above. Therefore,
Theorem~\ref{thm:erm}\,(i) applies, and there is no polynomial-time weak
approximator for $\CNF_{\{\mathsf{True}\}}$ unless $\Ptime=\NP$.  The lower bound there is
representation-independent, so it does not matter that the hypothesis is now
written as a conjunction of atoms.  
However, we do not know where the dividing
line runs, either for exact solvability or for approximability of ERM.

For \emph{infinite} constraint languages $\Gamma$,
Proposition~\ref{prop:cnf-tractable} fails as stated. Note that there are then
relations of unbounded arity, and the atom count over $n$ variables is no longer
polynomial.  Four examples show what can happen.  

\begin{example}[Monotone CNF]
The concepts of monotone CNF over $n$ variables are exactly the upsets of
$\{0,1\}^n$, since a monotone function is the conjunction of its prime
implicates and these are positive clauses.  Fitting is polynomial: the least
concept containing the positive examples is the upset they generate, so the
sample is realizable if and only if no negative example lies above a positive
one.  The VC dimension, on the other hand, is $\binom{n}{\lfloor n/2\rfloor}$, hence
exponential, as follows from Sperner's theorem~\cite{Sperner}.  Finally, without membership
queries, 
PAC-learning is no easier for monotone CNF than for arbitrary
CNF~\cite{KLPV}.  In particular, monotone CNF is not polynomially properly PAC learnable
from random examples alone unless $\NP=\RP$~\cite{ABFKP}.  
Monotone CNF \emph{is} polynomially properly PAC learnable with membership
queries~\cite{Angluin88}.
\end{example}

\begin{example}[Horn CNF] Fitting is again polynomial, in both its decision and its
construction form~\cite{DechterPearl}. Since every monotone clause is a dual Horn
clause, which can be turned into a Horn clause by complementing all coordinates,
the VC dimension $\binom{n}{\lfloor n/2\rfloor}$ for monotone CNF is also a lower
bound on the VC dimension of Horn CNF, and the hardness of \emph{proper} PAC
learning from random examples transfers from monotone CNF to Horn CNF as
well. Horn CNF \emph{is} polynomially properly PAC learnable with
membership queries~\cite{AFP}.
\end{example}

\begin{example}[Systems of linear equations]
Here $\Gamma_{\mathrm{lin}}$ consists of the relations
$x_{i_1}\oplus\cdots\oplus x_{i_k}=c$ for $k\ge0$ and $c\in\{0,1\}$, again an
infinite template, and the concepts of $\CNF_{\Gamma_{\mathrm{lin}}}$ over $n$
variables are the affine subspaces of $\Ftwo^n$ together with $\varnothing$.
Unlike the previous two examples, this one is well behaved throughout.  Fitting
is polynomial in both its decision and its construction form: the least concept
containing the positive examples is their affine hull, computed by Gaussian
elimination, the sample is realizable exactly when no negative example lies in
it, and at most $n$ equations define it, so the fitting formula has at most $n$
atoms.  The VC dimension is $n+1$, a set being shattered exactly when it is
affinely independent.  As a result, $\CNF_{\Gamma_{\mathrm{lin}}}$ is
polynomially properly PAC learnable~\cite{HSW}.  
\end{example}

\begin{example}[A simple infinite template that is hard for fitting]
\label{example:infinite-fitting-hard}
Let $\Gamma_{\mathrm{one}}=\{\mathsf{Uniq}_k : k\ge1\}$, where
$\mathsf{Uniq}_k(x_1,\ldots,x_k)$ holds when exactly one of its arguments is
$1$.  Then fitting for $\CNF_{\Gamma_{\mathrm{one}}}$ is $\NP$-hard, 
as can be shown by a reduction from exact cover by $3$-sets~\cite{GJ}: given a set
$U$ and a family $\mathcal{S}$ of $3$-element subsets of $U$, is there a subfamily
covering each element of $U$ exactly once?  Take one variable $x_S$ for each
$S\in \mathcal{S}$, one positive example $\mathbf p_u$ for each $u\in U$, namely the
characteristic vector of $\{S\in \mathcal{S} : u\in S\}$, and the single negative example
$\zero$.
If $\mathcal{S}'=\{S_1, \ldots, S_k\}$ is an exact cover of size $k$, the single-atom formula 
$\mathsf{Uniq}_k(x_{S_1}, \ldots, x_{S_k})$
fits: exactly one argument is $1$ under $\mathbf p_u$, since exactly one member
of $\mathcal{S}'$ contains $u$, while none is $1$ under $\zero$.  Conversely,
suppose some formula fits.  It has at least one atom, since the empty
conjunction accepts $\zero$.  Let $\mathsf{Uniq}_k(x_{S_1}, \ldots, x_{S_k})$ be an arbitrary
atom in the conjunction.  Its arguments are pairwise distinct: each $S\in
\mathcal{S}$ is nonempty, so a repeated variable $x_S$ would give two true
arguments under $\mathbf p_u$ for any $u\in S$.  As every $\mathbf p_u$
satisfies the atom, each $u\in U$ lies in exactly one of $S_1,\ldots,S_k$, that
is, $\{S_1, \ldots, S_k\}$ is an exact cover.  
\end{example}

\subsection{Adding existential quantification}

A classification of $\CNF_\Gamma$  indexed by Post's lattice is not
immediately available.  The expressive power of $\CNF_\Gamma$ depends on $\Gamma$  up to
definability by quantifier-free conjunctive formulas, which is strictly finer
than primitive positive definability --- definability by conjunctions of atoms
and equalities with existential quantification, the closure operation of the
next paragraph --- so that the usual Galois connection with the polymorphisms
$\Pol(\Gamma)$ defined there does not apply.%
\footnote{The fact that the expressive power of $\CNF_\Gamma$ does not solely
depend on the polymorphisms of $\Gamma$, can be seen by taking
$\Gamma=\{\oplus^3(x,y,z)=0\}$ and $\Gamma'=\Gamma\cup\{{=}\}$.
The two templates have the
same polymorphisms, since every operation preserves equality.
But the $\CNF_{\Gamma'}$-formula $x=y$ is not expressible in
$\CNF_\Gamma$.}
There is, therefore, no reason to believe that, for example, the complexity of ERM
for finite-template $\CNF_\Gamma$ would be determined by the
polymorphisms of $\Gamma$.  What does govern quantifier-free conjunctive
definability is the \emph{partial} polymorphisms of
$\Gamma$~\cite{SchnoorSchnoor}, and the lattice of strong partial clones they
give rise to is far more complicated than Post's~\cite{AlekseevVoronenko}.
Allowing existential
quantification fixes this and restores the connection to Post's lattice.

An $\exCNF_\Gamma$-formula over
$x_1,\ldots,x_n$ is a formula
\[
  \exists z_1\cdots\exists z_\ell\ \varphi(x_1,\ldots,x_n,z_1,\ldots,z_\ell),
\]
where $\varphi$ is a conjunction of atoms $R(\cdot)$ with $R\in\Gamma$ and of
equalities between variables; the concept it defines is the set of
$\mathbf a\in\{0,1\}^n$ that extend to a satisfying assignment of $\varphi$, and
its size is the number of atoms of $\varphi$.  The concepts so definable are
exactly the $n$-ary relations of the co-clone
$\langle\Gamma\rangle=\Inv(\Pol(\Gamma))$ generated by $\Gamma$.  Here an
operation $f$ of arity $k$ \emph{preserves} a relation $R$ if applying $f$
coordinatewise to $k$ tuples of $R$ again gives a tuple of $R$.  Then $\Pol(\Gamma)$
is the clone of operations preserving every relation of $\Gamma$, and $\Inv(\Cc)$
is the set of relations preserved by every operation of $\Cc$.  A set of relations
of the form $\Inv(\Cc)$ is a \emph{co-clone}, and $\langle\Gamma\rangle$ is the
least co-clone containing $\Gamma$, namely the closure of $\Gamma$ under exactly
the constructs used above: conjunction, existential quantification, equality and
identification of variables.  Co-clones are to conjunctive existential
definability what clones are to substitution, and $\Pol$ and $\Inv$ match the
two lattices antitonically. We briefly describe the situation for two of our main questions
in this setting.

\emph{Fitting.}  The existence of a fitting $\exCNF_\Gamma$-formula
can be tested in polynomial time, regardless of the choice of $\Gamma$.  The
concepts of $\exCNF_\Gamma$ over $n$ variables are the subsets of $\{0,1\}^n$
preserved by $\Pol(\Gamma)$, so the closure of the set $P$ of positive
examples under the operations of $\Pol(\Gamma)$, applied coordinatewise, is
the least concept containing $P$, and the sample is realizable exactly when no
negative example lies in that closure.  That is again a two-element subpower
membership problem in the algebraic form of Section~\ref{sec:fitting}, with
the positive examples in place of the columns of the variables and a negative
example in place of the column of labels, and it is therefore polynomial by
the same case analysis over Post's lattice as in the proof of
Theorem~\ref{thm:boolean-fitting-constructive}, applied to $\Pol(\Gamma)$.
Unlike there, this route does not provide a fitting $\exCNF_\Gamma$-formula of
polynomial size: the witness it produces is a term of $\Pol(\Gamma)$, not an
$\exCNF_\Gamma$-formula.

\emph{PAC learning.}  For finite $\Gamma$ the learnability of $\exCNF_\Gamma$
was classified by Dalmau~\cite{Dalmau99} and rederived from the
polymorphisms of $\Gamma$ by Dalmau and Jeavons~\cite[Theorem~15]{DalmauJeavons}.
In the terminology of Section~\ref{sec:pac}, exactly one of the following
holds.
\begin{enumerate}[label=(\alph*)]
\item $\Pol(\Gamma)$ contains a near-unanimity operation or the affine
      operation $x\oplus y\oplus z$.  Then $\exCNF_\Gamma$ is polynomially
      PAC predictable, even without membership queries, and in the
      near-unanimity case it is polynomially properly PAC
      learnable~\cite[Corollary~1]{DalmauJeavons}.
\item Otherwise $\exCNF_\Gamma$ is not polynomially PAC predictable with
      membership queries, under the cryptographic assumption
      of~\cite{DalmauJeavons}, the existence of public-key cryptosystems
      secure against chosen-ciphertext attack.
\end{enumerate}
Dalmau and Jeavons state the positive half as polynomial learnability from
equivalence queries, with $\exCNF_\Gamma$-formulas as hypotheses in the
near-unanimity case.  A polynomial equivalence-query learner yields a
polynomial PAC learner with the same hypotheses~\cite{Angluin88}, which is
proper when these are $\exCNF_\Gamma$-formulas and in any case a PAC predictor,
the hypotheses being evaluable in polynomial time.  In the affine case their
learner is not proper, and proper PAC learnability is not addressed there.

Some unbounded-arity $\CNF_\Gamma$ can be recast as 
finite-arity $\exCNF_{\Gamma'}$.
In particular, Horn CNF is equivalent in expressive power to
$\exCNF_{\Gamma_{\mathrm{Horn}}}$ for the finite template
$\Gamma_{\mathrm{Horn}}=\{x\wedge y\to z,\ x\to y,\ x,\ \neg x\}$. To see this, note that for $k\ge2$ the Horn clause
$x_1\wedge\cdots\wedge x_k\to y$ is equivalent to
\[
  \exists z_1\cdots\exists z_{k-1}\ (x_1\wedge x_2\to z_1)
  \wedge\bigwedge_{i=2}^{k-1}(z_{i-1}\wedge x_{i+1}\to z_i)
  \wedge(z_{k-1}\to y),
\]
the constraints on the $z_i$ being themselves Horn, so that the least choice of
$z_1,\ldots,z_{k-1}$ makes $z_{k-1}$ equal to $x_1\wedge\cdots\wedge x_k$ and the
last conjunct is then exactly the clause.  Clauses with $k\le1$ are already
atoms, and a clause with no positive literal is treated the same way, with
$\neg z_{k-1}$ in place of $z_{k-1}\to y$.
Since $\Gamma_{\mathrm{Horn}}$ contains the clause $\bar x\vee\bar y\vee z$, it
falls under case~(b)~\cite[Theorem~11]{DalmauJeavons}: $\exCNF_{\Gamma_{\mathrm{Horn}}}$
is not polynomially PAC predictable with membership queries under the
cryptographic assumption above.  Horn CNF itself, by contrast, is polynomially properly PAC learnable
with membership queries~\cite{AFP}, though from random examples alone it is
not properly PAC learnable unless $\NP=\RP$~\cite{KLPV,ABFKP}, as discussed
above, and whether it is polynomially PAC predictable from random examples alone
is open, a PAC predictor settling the DNF problem by the same reduction.
There is no conflict between the two statements: a Horn CNF of $m$ clauses
translates into an $\exCNF_{\Gamma_{\mathrm{Horn}}}$-formula of $O(mn)$ atoms,
but the converse translation can be exponential --- existential quantification
acts on Horn clauses as fan-out does on circuits --- and hardness of the more
succinct class says nothing about the less succinct one.  A similar situation
holds for monotone CNF.

\section*{Open questions}

Several questions are left open above.  For the affine interval
$\{\oplus^3\}\preceq O\preceq\{\oplus,\top,\bot\}$, Theorem~\ref{thm:occam} gives
an Occam algorithm whose bound depends on $n$, and we do not know whether the
support of the fitted parity can be controlled in terms of the size of the
smallest fitting formula, that is, whether the algorithm can be made
attribute-efficient; nor, for the same interval, whether $\PL_O$ is PAC learnable
under random classification noise in polynomial time, which is the learning
parity with noise problem (Section~\ref{sec:noise}).  And for fragments given by a finite constraint language we do not
know where the line runs for empirical risk minimization, either for exact
solvability or for approximability.

\bibliographystyle{plain}
\bibliography{boolean_clone_dichotomies}

\end{document}